\documentclass[sigplan,10pt]{}\settopmatter{printfolios=true,printccs=false,printacmref=false}
\AtBeginDocument{%
 }
\setcopyright{acmlicensed}
\copyrightyear{2018}
\acmYear{2018}
\acmDOI{XXXXXXX.XXXXXXX}

\acmConference[Conference acronym 'XX]{Make sure to enter the correct
 conference title from your rights confirmation emai}{June 03--05,
 2018}{Woodstock, NY}
\acmISBN{978-1-4503-XXXX-X/18/06}





\usepackage[T1]{fontenc}
\usepackage{isabelle,isabellesym}
\usepackage{eurosym}
\usepackage[only,bigsqcap,bigparallel,fatsemi,interleave,sslash]{stmaryrd}
\usepackage{textcomp}
\usepackage{algpseudocode}
\usepackage{algorithm}
\usepackage{apxproof}

\urlstyle{rm}
\isabellestyle{it}

\allowdisplaybreaks[3]

\newcommand{\prob}{\mathrm{Prob}}
\newcommand{\bind}{\mathbin{{>}\!\!\!{>}\!\!{=}}}
\newcommand{\return}{\mathop{\mathrm{return}}}

\newcommand{\inverse}{^{-1}}
\newcommand{\Nat}{\mathbb{N}}

\newcommand{\Real}{\mathbb{R}}

\DeclareMathOperator*{\argmax}{argmax}

\begin{document}

\title{Formalization of Differential Privacy in Isabelle/HOL\\(Draft version)}


\author{Tetsuya Sato}
\email{tsato@c.titech.ac.jp}
\orcid{0000-0001-9895-9209}
\affiliation{
\institution{Tokyo Institute of Technology, Japan}
\city{Meguro}
\state{Tokyo}
\country{JAPAN}
}

\author{Yasuhiko Minamide}
\email{minamide@is.titech.ac.jp}
\affiliation{
\institution{Tokyo Institute of Technology, Japan}
\city{Meguro}
\state{Tokyo}
\country{JAPAN}
}
\renewcommand{\shortauthors}{T. Sato and Y. Minamide}

\begin{abstract}
Differential privacy is a statistical definition of privacy that has attracted the interest of both academia and industry.
Its formulations are easy to understand, but the differential privacy of databases is complicated to determine.
One of the reasons for this is that small changes in database programs can break their differential privacy.
Therefore, formal verification of differential privacy has been studied for over a decade.

In this paper, we propose an Isabelle/HOL library for formalizing differential privacy in a general setting.
To our knowledge, it is the first formalization of differential privacy that supports continuous probability distributions.
First, we formalize the standard definition of differential privacy and its basic properties.
Second, we formalize the Laplace mechanism and its differential privacy.
Finally, we formalize the differential privacy of the report noisy max mechanism.
\end{abstract}
\begin{CCSXML}
<ccs2012>
<concept>
<concept_id>10002978.10002986.10002990</concept_id>
<concept_desc>Security and privacy~Logic and verification</concept_desc>
<concept_significance>500</concept_significance>
</concept>
<concept>
<concept_id>10003752.10010124.10010138.10010142</concept_id>
<concept_desc>Theory of computation~Program verification</concept_desc>
<concept_significance>500</concept_significance>
</concept>
<concept>
<concept_id>10003752.10010124.10010131.10010133</concept_id>
<concept_desc>Theory of computation~Denotational semantics</concept_desc>
<concept_significance>300</concept_significance>
</concept>
</ccs2012>
\end{CCSXML}

\ccsdesc[500]{Security and privacy~Logic and verification}
\ccsdesc[500]{Theory of computation~Program verification}
\ccsdesc[300]{Theory of computation~Denotational semantics}

\keywords{Differential Privacy, Isabelle/HOL, proof assistant, program verification}


\maketitle

\section{Introduction} 

Differential privacy~\cite{Dwork2006,DworkMcSherryNissimSmith2006} is a statistical definition of privacy
that has attracted the interest of both academia and industry.
The definition of differential privacy is quite simple, and methods for differential privacy are easy to understand.
It also has several notable strengths.
First, it guarantees difficulty in detecting changes in a database by any statistical method, including unknown ones.
Second, it has composition theorems that guarantee the differential privacy of a complex database from its components.

Thanks to these strengths, differential privacy is becoming a de facto standard for the privacy of databases with large amounts of individuals' private data.
For example, the iOS's QuickType suggestions and Google Maps' display of congestion levels use (local) differential privacy
to anonymize individuals' text inputs and location data~\cite{AppleDP,GoogleDP}, respectively.

However, it is complicated and cumbersome to determine the differential privacy of databases.
A few (even one-line) changes of programs in databases can break their differential privacy~\cite[Section 3]{10.14778/3055330.3055331}.

For these reasons, formal verification of differential privacy has been studied actively for over a decade.
The main idea is to characterize differential privacy as a property of probabilistic programs.
The first approach is reformulating differential privacy as a relational property of (two runs of) probabilistic programs.
Based on the relational Hoare logic (RHL)~\cite{Benton2004export:67345},
Barthe et al. proposed a relational program logic for reasoning about differential privacy
named the approximate relational Hoare logic (apRHL) based on the discrete denotational model of probabilistic programs~\cite{Barthe:2012:PRR:2103656.2103670} and formalized the logic and its semantic model in Coq.
%
The second is reformulating differential privacy as the sensitivity of functions between (pseudo)-metric spaces and overapproximate them using the type system.
Based on that approach, linear dependent type systems for reasoning about differential privacy were also introduced~\cite{GHHNP13,10.1145/2746325.2746335,10.1145/3589207}.
They are used for automated estimation of differential privacy~\cite{10.1145/3589207}.
The third is formalizing differential privacy directly in proof assistants. 
Tristan et al. are developing the library {\tt SampCert} for verified implementations for differential privacy using Lean~\cite{Tristan_SampCert_Verified_2024}.
It contains the formalization of differential privacy (and RDP and zCDP) in the discrete setting and the discrete Laplace and Gaussian mechanisms (see also \cite{canonne2021discretegaussiandifferentialprivacy}). 

\paragraph{Motivation}
To our knowledge, there has been no formalization of differential privacy in a proof assistant based on the continuous model of probabilistic programs.
In this paper, we propose an Isabelle/HOL library for semantic-level formal verification for differential privacy, supporting \emph{continuous} probability distributions. 

We consider the continuous setting for the following reasons.
First, the continuous setting is more general than the discrete setting. 
Once we formalize differential privacy in the continuous setting, we can apply it to the discrete setting immediately.
In particular, Isabelle/HOL's discrete probability distributions are implemented as probability measures on discrete measurable spaces.
In other words, they are instances of a continuous model.
Second, even in the study of differential privacy in the discrete setting, we possibly need the results in the continuous setting.
For example, in the study~\cite{Mironov2012OnSO} proposing a secure floating-point implementation of the Laplace mechanism, its continuous version (called ``the ideal mechanism'' in the paper) is used in the proof.
In addition, many pencil and paper proofs about differential privacy are written in the continuous setting.
Thus, it is natural and convenient for us to implement them directly.

\paragraph{Contribution}
Our work is based on the seminal textbook~\cite{DworkRothTCS-042} of differential privacy written by Dwork and Roth.
We chose the report noisy max mechanism as the main example in this paper.
It is a famous nontrivial example of differentially private mechanisms, and it is complicated to formalize in relational program logics such as apRHL: we often need tricky loop invariants.
In contrast, our formalization follows the pencil and paper proof.

In this paper, we show the following contributions:
First, we formalize differential privacy, and gave formal proofs of its basic properties.
Second, we implement the Laplace mechanism, and gave formal proof of its differential privacy.
Finally, we formalize the differential privacy of the report noisy max mechanism.
We also formalize the Laplace distribution, and the divergence given in ~\cite{BartheOlmedo2013} for another formulation of differential privacy.

\section{Preliminaries}
We write $\Nat$ and $\Real$ for the set of natural numbers (including $0$) and the set of real numbers respectively.
We write $[0,\infty)$ for the set of nonnegative real numbers.
We write $[0,\infty]$ for the set $[0,\infty) \cup \{ \infty \}$ of \emph{extended} nonnegative real numbers.
By $\Nat \subseteq \Real$, we regard any $n \in \Nat$ as $n \in \Real$ if we need.
\subsection{Measure Theory}
A \emph{measurable space} $X = (|X|,\Sigma_X)$ is a set $|X|$ equipped with a $\sigma$-algebra $\Sigma_X$ over $|X|$ which is a nonempty family of subsets of $|X|$ closed under countable union and complement.
For a measurable space $X$, 
we write $|X|$ and $\Sigma_X$ for its underlying set and $\sigma$-algebra respectively. 
If there is no confusion, we write just $X$ for the underlying set $|X|$.
For example, we often write $x \in X$ for $x \in |X|$.
A \emph{measurable function} $f \colon X \to Y$ is a function $f \colon |X| \to |Y|$ such that $f\inverse (S) \in \Sigma_X$ for all $S \in \Sigma_Y$.
A measure $\mu$ on a measurable space $X$ is a function $\mu \colon \Sigma_X \to [0,\infty]$
satisfying the countable additivity.

A measure $\mu$ on $X$ is called \emph{probability measure} if $\mu(X) = 1$.
For a probability measure $\mu$ on $X$ and a predicate $S \in \Sigma_X$, 
we often write $\Pr_{x \sim \mu}[S(x)] $, $\Pr[S(\mu)]$ and $\Pr[\mu \in S]$ for $\mu(S)$.
 
\paragraph{Giry monad}
For a measurable space $X$, the measurable space $\prob(X)$ of probability measures is defined as follows:
$|\prob(X)|$ is the set of probability measures on $X$, and $\Sigma_{\prob(X)}$ is the coarsest $\sigma$-algebra making the mappings $\mu \mapsto \mu(A)$ ($A \in \Sigma_X$) measurable.
The constructor $\prob$ forms the monad $(\prob,\return, {\bind})$ called Giry monad~\cite{Giry1982}.
The unit $\return(x)$ is the Dirac distribution centered at $x \in X$ defined by $\return(x)(S) = 1$ if $x \in S$ and $\return(x) = 0$ otherwise.
The bind $\bind$ is defined by
$(\mu \bind f)(S) = \int f(x)(S) d\mu(x)$ for a measurable $f \colon X \to \prob(Y)$, $\mu \in \prob(X)$ and $S \in \Sigma_Y$.

We then interpret a probabilistic program as a measurable function $c \colon X \to \prob(Y)$, where
the sequential composition of $c \colon X \to \prob(Y)$ and $c' \colon Y \to \prob(Z)$ is $(\lambda x. c(x) \bind c' )$. 

\paragraph{Do notation}
To help intuitive understanding, we often use the syntactic sugar called the \emph{do notation}.
In this paper, we mainly use the following two translations:
\begin{align*}
\{ x \leftarrow e; e'\} &= e \bind (\lambda x. e')\\
\{ x \leftarrow e; y \leftarrow e'; \cdots \} &= \{ x \leftarrow e; \{y \leftarrow e'; \cdots\} \} 
\end{align*}

For example, the product $\mu_1 \otimes \mu_2$ of two probability distributions $\mu_1$ and $\mu_2$
can be written with the do notation.
\begin{align*}
(\mu_1 \otimes \mu_2) 
&= \{x \leftarrow \mu_1, y \leftarrow \mu_2; \return (x,y) \}\\
&= \{y \leftarrow \mu_2, x \leftarrow \mu_1; \return (x,y) \}
\end{align*}
Here, the second equality corresponds to the \emph{commutativity} (cf. \cite[Definition 3.1]{Kock1970}) of Giry monad.

\subsection{Measure Theory in Isabelle/HOL}
Throughout this paper, we work in the proof assistant Isabelle/HOL~\cite{10.5555/1791547}.
We use the standard library of Isabelle/HOL: HOL-Analysis and HOL-Probability~\cite{Avigad2017,10.1007/978-3-662-46669-8_4,10.1145/3018610.3018628,10.1007/978-3-642-22863-6_12,10.1007/978-3-662-49498-1_20,10.5555/1791547}.

The type \isa{{\isacharprime}a measure} is the type of a measure space, namely, a triple $(|X|,\Sigma_X,\mu)$ where $(|X|,\Sigma_X)$ is a measurable space, and $\mu$ is a measure on it. 
Projections for these components are provided in the standard library of Isabelle/HOL.
\begin{align*}
\isa{space\ }&\isa{:: {\isacharprime}a measure {\isasymRightarrow} {\isacharprime}a set } \\
\isa{sets\ }&\isa{:: {\isacharprime}a measure {\isasymRightarrow} {\isacharprime}a set set }\\
\isa{emeasure\ }&\isa{:: {\isacharprime}a measure {\isasymRightarrow} {\isacharprime}a set {\isasymRightarrow} ennreal}
\end{align*}
Here, \isa{ennreal} is the type for the set $[0,\infty]$ of extended nonnegrative real numbers.

We remark that the type \isa{{\isacharprime}a measure} is used for implementing both measures and measurable spaces.

In this paper, we use \isa{borel} for the usual Borel space $\Real$ of the real line and \isa{lborel} for the Lebesgue measure on it\footnote{Strictly speaking, the Lebesgue measure is \isa{completion lborel}.}.
The set of measurable functions from a measurable space \isa{M} to \isa{N} 
is denoted by \isa{M\ {\isasymrightarrow}\isactrlsub M\ N}.
For a measure \isa{M :: {\isacharprime}a measure} and a measurable function \isa{f\ {\isasymin}\ M\ {\isasymrightarrow}\isactrlsub M borel},
the Lebesgue integral (over \isa{A}) is denoted by \isa{{\isasymintegral}x{\isachardot}{\kern0pt}\ f\ x\ {\isasympartial}M} (resp. \isa{{\isasymintegral}x{\isasymin}A{\isachardot}{\kern0pt}\ f\ x\ {\isasympartial}M}).

HOL-Probability also contains the formalization of Giry monad and its commutativity\footnote{It is formalized as \isa{bind\_rotate} in the theory \isa{Giry\_Monad}.}.
Each component of the triple $(\prob,\return,{\bind})$ is provided as the following constants:
\begin{align*}
&\lefteqn{\isa{prob\_algebra\ }\isa{:: {\isacharprime}a measure {\isasymRightarrow}}\isa{ {\isacharprime}a measure measure }}\\
&\isa{return\ }\isa{:: {\isacharprime}a measure {\isasymRightarrow}}\isa{ {\isacharprime}a {\isasymRightarrow} {\isacharprime}a measure}\\
&\isa{{\isasymbind}\ }\isa{:: {\isacharprime}a measure {\isasymRightarrow}}\isa{ {\isacharparenleft}{\isacharprime}a {\isasymRightarrow} {\isacharprime} b measure {\isacharparenright} {\isasymRightarrow} {\isacharprime}b measure}
\end{align*}

\section{Differential Privacy} \label{sec:DifferentialPrivacy}


Differential privacy (DP) is a statistical definition of privacy introduced by Dwork et al.~\cite{Dwork2006,DworkMcSherryNissimSmith2006}. 
It is a quantitative standard of privacy of a program processing data stored in a dataset (database) for noise-adding anonymization.

In this section, we recall the standard definition of differential privacy in the textbook \cite{DworkRothTCS-042} of Dwork and Roth.

We first formulate the domain of datasets as $\Nat^{|\mathcal{X}|}$ where $\mathcal{X}$ is a set of data types.
Each dataset $D \in \Nat^{|\mathcal{X}|}$ is a histogram in which each entry $D[i]$ represents the number of elements of type $i$, where we regard $0 \leq i < |\mathcal{X}|$. 
We can define a metric ($L_1$-norm) on $\Nat^{|\mathcal{X}|}$ as follows:
\[
\| D - D' \|_1 = \sum_{0 \leq i < |\mathcal{X}|} |D[i] - D'[i]|.
\] 
When $\| D - D' \|_1 \leq 1$, the datasets $D, D' \in \Nat^{|\mathcal{X}|}$ are called \emph{adjacent}.
Then they are different only in one type $i$ in $\mathcal{X}$. 

We then give the definition of differential privacy.
\begin{definition}[{\cite[Def. 2.4]{DworkRothTCS-042}} (cf. {\cite{10.1007/11761679_29}})]\label{def:DP}
Let $M \colon \Nat^{|\mathcal{X}|}\to \prob(Y)$ be a randomized algorithm. 
$M$ is $(\varepsilon,\delta)$-\emph{differentially private (DP)} if for any adjacent datasets $D, D' \in \Nat^{|\mathcal{X}|}$, 
\[
 \forall S \in \Sigma_Y.~ \Pr[M(D) \in S] \leq \exp(\varepsilon) \Pr[M(D') \in S] + \delta.
\]
\end{definition}
Here $0 \leq \delta$ and $0 \leq \varepsilon$. 
Intuitively, $\varepsilon$ indicates the upper bound of the probability ratio between $M(D)$ and $M(D')$ (called privacy loss), and $\delta$ is the error.
The distributions of outputs $M(D)$ and $M(D')$ are harder to distinguish if $\varepsilon$ and $\delta$ are small.
In particular, if $(\varepsilon, \delta) = (0,0)$ then $M(D) = M(D')$.

\subsection{Basic Properties of Differential Privacy}

Differential privacy has the following basic properties.
They enable us to estimate the differential privacy of large algorithms from their smaller components.

\begin{lemma}\label{DP:basic:trivial}
Supoose that $M \colon \Nat^{|\mathcal{X}|} \to \prob(Y)$ is $(\varepsilon,\delta)$-DP.
If $\varepsilon \leq \varepsilon'$ and $\delta \leq \delta'$ then $M$ is also $(\varepsilon',\delta')$-DP.
\end{lemma}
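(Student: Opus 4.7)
The plan is to unfold the hypothesis, fix arbitrary adjacent datasets and a measurable set, and chain two monotonicity facts together. Concretely, I would start from the assumed inequality
\[
\Pr[M(D) \in S] \leq \exp(\varepsilon)\,\Pr[M(D') \in S] + \delta
\]
which holds for every adjacent pair $D, D' \in \Nat^{|\mathcal{X}|}$ and every $S \in \Sigma_Y$, and then majorize its right-hand side using only that $\varepsilon \leq \varepsilon'$ and $\delta \leq \delta'$.

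The first key step is to use monotonicity of $\exp$: from $\varepsilon \leq \varepsilon'$ we get $\exp(\varepsilon) \leq \exp(\varepsilon')$. Combined with the nonnegativity of probability, $\Pr[M(D') \in S] \geq 0$, this gives
\[
\exp(\varepsilon)\,\Pr[M(D') \in S] \leq \exp(\varepsilon')\,\Pr[M(D') \in S].
\]
The second key step is just $\delta \leq \delta'$, so adding the two inequalities yields
\[
\exp(\varepsilon)\,\Pr[M(D') \in S] + \delta \leq \exp(\varepsilon')\,\Pr[M(D') \in S] + \delta'.
\]
Transitivity with the original bound gives the required inequality, and since $D, D', S$ were arbitrary the conclusion follows by the definition of $(\varepsilon',\delta')$-DP.

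There is no real obstacle here; the only thing that requires mild care in Isabelle/HOL is that probabilities are represented in \isa{ennreal}, so one should use the appropriate library lemmas for monotonicity of multiplication and addition on extended nonnegative reals rather than their real-valued counterparts. Apart from that, the proof is essentially a one-liner after unfolding \textbf{Definition~\ref{def:DP}} and invoking monotonicity of \isa{exp}.
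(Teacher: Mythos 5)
Your proof is correct, but it takes a different route from the paper. You argue directly from Definition~\ref{def:DP}: majorize $\exp(\varepsilon)\Pr[M(D')\in S]+\delta$ by $\exp(\varepsilon')\Pr[M(D')\in S]+\delta'$ using monotonicity of $\exp$, nonnegativity of probability, and $\delta\le\delta'$, then conclude by transitivity. The paper instead derives the lemma from the monotonicity of the divergence $\Delta^\varepsilon$ (if $\varepsilon\le\varepsilon'$ then $\Delta^{\varepsilon'}_X(\mu,\nu)\le\Delta^{\varepsilon}_X(\mu,\nu)$) together with the equivalence of Lemma~\ref{lem:divergenceDP:equivalence_DP}, giving $\Delta^{\varepsilon'}_Y(M(D),M(D'))\le\Delta^{\varepsilon}_Y(M(D),M(D'))\le\delta\le\delta'$. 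The two arguments are mathematically the same calculation (the divergence monotonicity is proved by exactly your estimate, applied inside the supremum over $S$), but the paper's detour through $\Delta^\varepsilon$ is deliberate infrastructure: the same divergence properties (reflexivity, composability, monotonicity) are reused to prove post-processing, composition, and group privacy uniformly, so in the formalization this lemma falls out of \isa{DP{\isacharunderscore}divergence{\isacharunderscore}monotonicity} rather than being proved ad hoc. Your approach is more elementary and self-contained, at the cost of not sharing machinery with the other basic properties. One small correction to your implementation remark: the paper's \isa{DP{\isacharunderscore}inequality} is stated with the real-valued \isa{measure} projection rather than the \isa{ennreal}-valued \isa{emeasure}, so the ordinary real-arithmetic monotonicity lemmas suffice; no care with extended nonnegative reals is needed here.
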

Differential privacy is stable for post-processing.
\begin{lemma}[post-processing~{\cite[Prop. 2.1]{DworkRothTCS-042}}]\label{DP:basic:postprocessing}
If $M \colon \Nat^{|\mathcal{X}|}\to \prob(Y)$ is $(\varepsilon,\delta)$-DP then $(\lambda D. M(D)\bind N) \colon \Nat^{|\mathcal{X}|} \to \prob(Z)$ is $(\varepsilon,\delta)$-DP for any randomized mapping $N \colon Y \to \prob(Z)$.
\end{lemma}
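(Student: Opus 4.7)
The plan is to reduce the post-processing inequality to the $(\varepsilon,\delta)$-DP inequality on measurable subsets of $Y$, using the layer-cake (Cavalieri) representation of the Lebesgue integral. Fix adjacent datasets $D, D' \in \Nat^{|\mathcal{X}|}$ and a measurable set $S \in \Sigma_Z$. Unfolding the definition of the Giry-monad bind yields
\[
(M(D) \bind N)(S) = \int_Y N(y)(S) \, dM(D)(y),
\]
and similarly for $D'$. Let $g(y) := N(y)(S)$. Measurability of $N \colon Y \to \prob(Z)$ together with the built-in measurability of the evaluation $\mu \mapsto \mu(S)$ on $\prob(Z)$ gives that $g$ is measurable, with $g(y) \in [0,1]$ because each $N(y)$ is a probability measure. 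The post-processing goal thus rewrites to
\[
\int g \, dM(D) \leq \exp(\varepsilon) \int g \, dM(D') + \delta.
\]

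Next, I would invoke the representation
\[
\int g \, d\mu = \int_0^1 \mu(\{y : g(y) > t\}) \, dt,
\]
valid for any $[0,1]$-valued measurable $g$ and probability measure $\mu$ on $Y$. This is a standard Fubini/Tonelli swap applied to the indicator of $\{(y,t) : t < g(y)\}$, and a corresponding lemma is available in HOL-Probability. For every $t \in [0,1]$ the super-level set $A_t := \{y : g(y) > t\}$ lies in $\Sigma_Y$, so the DP hypothesis applied to $A_t$ gives $M(D)(A_t) \leq \exp(\varepsilon) \, M(D')(A_t) + \delta$. Integrating pointwise in $t$ over $[0,1]$ and using linearity and monotonicity of the integral, the inequality becomes
\[
\int g \, dM(D) \leq \exp(\varepsilon) \int g \, dM(D') + \delta,
\]
where the additive term remains $\delta$ (and is not blown up by a factor larger than one) precisely because $g$ is bounded by $1$, so $\int_0^1 1 \, dt = 1$.

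The main obstacle I anticipate is not the mathematical content, which is essentially a one-line integral calculation, but the Isabelle bookkeeping: showing that $g$ is measurable as a function into \isa{ennreal} (inherited from the $\sigma$-algebra on \isa{prob\_algebra}), selecting or assembling the appropriate formalization of the layer-cake representation, and handling the coercions between \isa{emeasure} (valued in \isa{ennreal}) and the real arithmetic used in the DP inequality, where one must also keep track of the \isa{ennreal}-variant of $\exp(\varepsilon) \cdot \mu(A) + \delta$. A minor secondary issue is the direction of the strict inequality in $A_t$: if the available layer-cake lemma is stated with $\geq$ instead of $>$, it requires a short argument using continuity of the measure from above, but this is routine.
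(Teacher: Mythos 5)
Your proof is correct, but it takes a genuinely different route from the paper. The paper does not prove post-processing directly: it derives it (together with both composition theorems) from two properties of the divergence $\Delta^\varepsilon_X(\mu,\nu) = \sup\{\mu(S) - \exp(\varepsilon)\nu(S) \mid S \in \Sigma_X\}$, namely reflexivity ($\Delta^0(N(y),N(y)) = 0$) and composability, the latter being established once and for all via a Radon--Nikod\'ym argument: take the sum measure $\pi = \mu + \nu$, the densities $d\mu/d\pi$ and $d\nu/d\pi$, the truncations $F_{S'} = \max(0, f(x)(S') - \delta_2)$ and $G_{S'} = \min(1, \exp(\varepsilon_2)g(x)(S'))$, and reduce the integral inequality to a set inequality on $B = \{x \mid d\mu/d\pi(x) \geq \exp(\varepsilon_1)\, d\nu/d\pi(x)\}$. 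Your layer-cake argument replaces that reduction by writing $\int g\,d\mu - \exp(\varepsilon)\int g\,d\nu = \int_0^1 \bigl(\mu(A_t) - \exp(\varepsilon)\nu(A_t)\bigr)\,dt$ and applying the DP hypothesis to each super-level set $A_t$; this is entirely sound for post-processing because $f = g = N$ makes the two integrands identical, and the bound $0 \leq g \leq 1$ keeps the additive error at exactly $\delta$. What your approach buys is a self-contained, more elementary proof that avoids Radon--Nikod\'ym derivatives and the \isa{comparable{\isacharunderscore}{\kern0pt}probability{\isacharunderscore}{\kern0pt}measures} infrastructure altogether; what the paper's approach buys is reuse, since the same composability lemma with nonzero $(\varepsilon_2,\delta_2)$ also yields the sequential and adaptive composition theorems, whereas the plain layer-cake argument does not extend to that case (the super-level sets of $f(x)(S')$ and $g(x)(S')$ no longer coincide, which is precisely why the paper needs the truncation trick). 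Your anticipated Isabelle frictions (measurability of $y \mapsto N(y)(S)$ from the \isa{prob{\isacharunderscore}{\kern0pt}algebra} $\sigma$-algebra, \isa{ennreal}/\isa{real} coercions, and the $>$ versus $\geq$ level-set convention) are real but routine, and the last one is harmless since the two versions of $A_t$ differ only for countably many $t$.
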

The following lemma tells that any $(\varepsilon,0)$-DP algorithm is $(k \varepsilon,0)$-DP 
for a group of datasets with radius $k$. 
\begin{lemma}[group privacy~{\cite[Thm. 2.2]{DworkRothTCS-042}}]\label{DP:basic:group}
Suppose that $M \colon \Nat^{|\mathcal{X}|}\to \prob(Y)$ is $(\varepsilon,0)$-DP. If $\| D - D' \|_1 \leq k$ then
\[
 \forall S \in \Sigma_Y.~ \Pr[M(D) \in S] \leq \exp(k\cdot \varepsilon) \Pr[M(D') \in S].
\]
\end{lemma}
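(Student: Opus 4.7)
The plan is to proceed by induction on $k \in \Nat$, exploiting the fact that $L_1$-adjacent datasets can be chained: any two datasets at $L_1$-distance at most $k$ can be connected by a sequence of at most $k$ single-step (adjacent) modifications.

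For the base case $k = 0$, the hypothesis $\|D - D'\|_1 \leq 0$ forces $D = D'$, so $M(D) = M(D')$ and the desired inequality collapses to $\Pr[M(D) \in S] \leq \Pr[M(D') \in S]$, which is an equality. For the inductive step, assume the claim holds for $k$ and suppose $\|D - D'\|_1 \leq k+1$. If $D = D'$ the claim is trivial (with slack from $\exp((k+1)\varepsilon) \geq 1$), so assume $D \neq D'$; then there is some index $i$ where $D[i] \neq D'[i]$, and we construct an intermediate dataset $D''$ by setting $D''[j] = D[j]$ for $j \neq i$ and $D''[i] = D[i] + 1$ or $D[i] - 1$, whichever moves toward $D'[i]$. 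By construction $\|D - D''\|_1 = 1$ (so $D$ and $D''$ are adjacent) and $\|D'' - D'\|_1 = \|D - D'\|_1 - 1 \leq k$. Applying the $(\varepsilon,0)$-DP hypothesis to the adjacent pair $(D, D'')$ and $S$ gives $\Pr[M(D) \in S] \leq \exp(\varepsilon)\,\Pr[M(D'') \in S]$, and the induction hypothesis applied to $(D'', D')$ gives $\Pr[M(D'') \in S] \leq \exp(k\varepsilon)\,\Pr[M(D') \in S]$. Multiplying and using $\exp(\varepsilon) \cdot \exp(k\varepsilon) = \exp((k+1)\varepsilon)$ closes the induction. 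The fact that $\delta = 0$ is essential here: it is what makes the per-step privacy loss compose purely multiplicatively, without picking up additive error terms that would otherwise scale with $k$.

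The main obstacle in formalizing this in Isabelle/HOL is not the mathematical content, which is elementary, but the combinatorial construction of the intermediate dataset $D''$ and the bookkeeping that $\|D - D''\|_1 = 1$ together with $\|D'' - D'\|_1 \leq k$. In particular, one needs a convenient existential lemma of the form: if $\|D - D'\|_1 \leq k+1$ and $D \neq D'$, then there exists $D''$ adjacent to $D$ with $\|D'' - D'\|_1 \leq k$. This is essentially a fact about the $L_1$ metric on $\Nat^{|\mathcal{X}|}$ and can be isolated as a separate lemma, proved by picking a coordinate of disagreement using the definition of the $L_1$-sum. Once that helper lemma is available, the induction itself is a short instantiation of $(\varepsilon,0)$-DP to the pair $(D, D'')$ and the inductive hypothesis to $(D'', D')$, with the measurability of $S$ carried through unchanged because $S \in \Sigma_Y$ is fixed throughout.
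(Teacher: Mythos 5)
Your proof is correct, and the underlying idea---chaining $D$ to $D'$ through adjacent intermediate datasets and composing the per-step $\exp(\varepsilon)$ factors multiplicatively---is exactly the one the paper uses. The difference is in how the argument is factored. The paper separates it into two reusable pieces: a purely metric lemma (its Lemma~\ref{lem:adj_k:group}) stating that $\|D-D'\|_1 \leq k$ implies $(D,D') \in R^k$ for the adjacency relation $R$, and a ``transitivity'' property of the divergence $\Delta^\varepsilon$ (if $\Delta^{\varepsilon_1}(\mu_1,\mu_2)\leq 0$ and $\Delta^{\varepsilon_2}(\mu_2,\mu_3)\leq 0$ then $\Delta^{\varepsilon_1+\varepsilon_2}(\mu_1,\mu_3)\leq 0$), which is then iterated along the chain. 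You instead run a single induction on $k$ directly at the level of the probability inequalities, with the intermediate dataset $D''$ constructed explicitly inside the inductive step; your helper existential lemma is essentially the induction step of the paper's Lemma~\ref{lem:adj_k:group}. The paper's factoring buys reusability (the divergence transitivity is stated once for arbitrary $\mu_1,\mu_2,\mu_3$ and the $R^k$ lemma is a fact about the $L_1$ metric alone, which is also what gets formalized in Isabelle), while your inlined version is more self-contained and makes the role of $\delta=0$ in the purely multiplicative composition explicit. Both are sound; your observation that the combinatorial bookkeeping around $D''$ is the main formalization burden matches the paper's decision to isolate that content in a separate lemma about the metric space.
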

\emph{Composition theorems} are a strong feature of differential privacy.
We can estimate the diferential privacy of complex algorithms using ones of their components.
\begin{lemma}[{\cite[Thm. 3.14]{DworkRothTCS-042}}]\label{DP:basic:sequential}
If $M \colon \Nat^{|\mathcal{X}|} \to \prob(Y)$ is $(\varepsilon_1,\delta_1)$-DP and $N \colon \Nat^{|\mathcal{X}|} \to \prob(Z)$ is $(\varepsilon_2,\delta_2)$-DP then 
the composite argorithm $(\lambda D.~M(D) \otimes N(D) ) \colon \Nat^{|\mathcal{X}|} \to \prob(Y \times Z)$ is $(\varepsilon_1+\varepsilon_2,\delta_1+\delta_2)$-DP.
\end{lemma}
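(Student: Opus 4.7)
The plan is to fix adjacent datasets $D, D' \in \Nat^{|\mathcal{X}|}$ and a measurable $S \in \Sigma_{Y \times Z}$, then chain the two DP hypotheses via Fubini, with commutativity of the Giry monad used to swap the order of integration between the two applications.

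First, I would unfold the tensor product as an iterated integral,
\[
(M(D) \otimes N(D))(S) = \int_Z M(D)(S_z)\, dN(D)(z),
\]
where $S_z = \{y \mid (y,z)\in S\}$ is the $z$-section of $S$; section measurability is a standard HOL-Probability lemma. Applying the $(\varepsilon_1,\delta_1)$-DP of $M$ pointwise on each $S_z$ and integrating against the probability measure $N(D)$ yields
\[
(M(D) \otimes N(D))(S) \leq \exp(\varepsilon_1)\,(M(D') \otimes N(D))(S) + \delta_1.
\]
Next, I would invoke commutativity of the Giry monad (formalized as \isa{bind\_rotate}) to rewrite $(M(D') \otimes N(D))(S) = \int_Y N(D)(S^y)\, dM(D')(y)$ with $S^y = \{z \mid (y,z)\in S\}$. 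Pointwise use of $(\varepsilon_2,\delta_2)$-DP of $N$ and integration give
\[
(M(D') \otimes N(D))(S) \leq \exp(\varepsilon_2)\,(M(D') \otimes N(D'))(S) + \delta_2,
\]
and chaining the two inequalities would produce the desired comparison between $M(D)\otimes N(D)$ and $M(D')\otimes N(D')$.

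The main obstacle is that this naive chaining actually delivers the weaker bound $(\varepsilon_1+\varepsilon_2,\ \delta_1 + \exp(\varepsilon_1)\delta_2)$-DP, because the additive $\delta_2$ picked up in the second step is multiplied by the $\exp(\varepsilon_1)$ factor introduced in the first. To reach the tight $\delta_1 + \delta_2$ claimed in the lemma one needs a sharper argument: either a coupling or ``max-divergence'' characterization of $(\varepsilon,\delta)$-DP, or a decomposition that splits $Y$ into a ``bad'' set of $M(D)$-mass at most $\delta_1$ on which no ratio bound is assumed and a ``good'' complement where the pure $\varepsilon_1$-bound holds, and similarly for $N$; on the intersection of the good parts one then gets the clean $\exp(\varepsilon_1+\varepsilon_2)$ factor, while the union of the bad parts has probability at most $\delta_1+\delta_2$. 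Establishing this equivalent characterization of approximate DP in Isabelle, and reconciling it with the set-level definition used throughout the paper, is where I expect the bulk of the formal work to lie; the Fubini and monad manipulations above should then be routine once the right auxiliary lemma is in place.
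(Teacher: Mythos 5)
You correctly diagnose the central difficulty: the naive Fubini/commutativity chaining only yields $(\varepsilon_1+\varepsilon_2,\ \delta_1+\exp(\varepsilon_1)\delta_2)$, and a sharper argument is needed for the tight $\delta_1+\delta_2$. But your proposal stops exactly where the real work begins, and the auxiliary lemma you gesture at is left unproven, so there is a genuine gap. Moreover, the specific decomposition you name --- a ``bad'' set of $M(D)$-mass at most $\delta_1$ outside of which the pure $\varepsilon_1$-ratio bound holds pointwise --- is \emph{not} an equivalent characterization of $(\varepsilon,\delta)$-DP (it is the strictly stronger ``probabilistic DP'' condition), so trying to reconcile it with the set-level definition would fail; you would have to retreat to a coupling-style characterization, which is itself nontrivial to establish in the continuous setting.

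The paper takes a different and self-contained route: Lemma~\ref{DP:basic:sequential} is obtained as a corollary of the composability of the divergence $\Delta^\varepsilon$ (writing $M(D)\otimes N(D)$ as a bind and using reflexivity for the inner $\return$), and composability is proved directly without any equivalence lemma. The key device is a one-sided truncation: for a fixed $S'$ one sets $F_{S'}(x)=\max(0,f(x)(S')-\delta_2)$ and $G_{S'}(x)=\min(1,\exp(\varepsilon_2)g(x)(S'))$, so that the second DP hypothesis gives $0\le F_{S'}\le G_{S'}\le 1$ pointwise and the $\delta_2$ is peeled off \emph{additively before} the $\exp(\varepsilon_1)$ factor ever appears. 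One then passes to Radon--Nikod\'{y}m derivatives with respect to the finite dominating measure $\pi=\mu+\nu$ and restricts the integral to the set $B=\{x \mid d\mu/d\pi(x) \ge \exp(\varepsilon_1)\,d\nu/d\pi(x)\}$; since $0\le G_{S'}\le 1$, the integral over $B$ is bounded by $\mu(B)-\exp(\varepsilon_1)\nu(B)\le\delta_1$. Note that this $B$ is controlled only through $\mu(B)-\exp(\varepsilon_1)\nu(B)\le\delta_1$, not through $\mu(B)\le\delta_1$ as in your sketch. If you want to complete your proposal, this truncation-plus-density argument is the missing lemma; your Fubini and monad manipulations are then indeed routine, as you anticipated.
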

\begin{lemma}[{\cite[Thm. B.1]{DworkRothTCS-042}}]\label{DP:basic:adaptive}
If $M \colon \Nat^{|\mathcal{X}|} \to \prob(Y)$ is $(\varepsilon_1,\delta_1)$-DP and $N \colon \Nat^{|\mathcal{X}|} \times Y \to \prob(Z)$ is $(\varepsilon_2,\delta_2)$-DP then 
the composite argorithm $(\lambda D. \{z \leftarrow M (D); N(D, Z)\} ) \colon \Nat^{|\mathcal{X}|} \to \prob(Z)$ is $(\varepsilon_1+\varepsilon_2,\delta_1+\delta_2)$-DP.
\end{lemma}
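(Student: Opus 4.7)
Fix adjacent datasets $D, D' \in \Nat^{|\mathcal{X}|}$ and a measurable set $S \in \Sigma_Z$; writing $C(D) = M(D) \bind (\lambda z.\, N(D, z))$, the goal is
\[
\Pr[C(D) \in S] \leq \exp(\varepsilon_1 + \varepsilon_2)\, \Pr[C(D') \in S] + \delta_1 + \delta_2.
\]
First I would unfold the bind through its integral representation, giving
\[
\Pr[C(D) \in S] = \int \Pr[N(D, y) \in S]\, dM(D)(y),
\]
and analogously for $D'$.

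The naive strategy (apply $(\varepsilon_2, \delta_2)$-DP of $N$ pointwise inside the integral, then $(\varepsilon_1, \delta_1)$-DP of $M$ to the resulting linear upper bound) only delivers the weaker error $\exp(\varepsilon_2)\delta_1 + \delta_2$. Recovering the tight additive bound $\delta_1 + \delta_2$ is the delicate step, which I would handle by introducing the clipped function
\[
\alpha(y) = \min\!\bigl(\Pr[N(D,y) \in S],\ \exp(\varepsilon_2)\Pr[N(D',y) \in S]\bigr),
\]
which is $[0,1]$-valued and measurable in $y$ because $N$ is a measurable Markov kernel. Two inequalities drive the argument. A pointwise case split on which side of the minimum is attained yields $\Pr[N(D,y) \in S] \leq \alpha(y) + \delta_2$ for every $y$ (trivial on one branch, and immediate from DP of $N$ on the other), while by construction $\alpha(y) \leq \exp(\varepsilon_2)\Pr[N(D',y) \in S]$.

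Integrating the first inequality against $M(D)$ and then applying an integral form of $(\varepsilon_1,\delta_1)$-DP to the bounded measurable $\alpha$ gives
\[
\int \Pr[N(D,y) \in S]\, dM(D) \leq \exp(\varepsilon_1) \int \alpha\, dM(D') + \delta_1 + \delta_2,
\]
and the second inequality then bounds $\int \alpha\, dM(D') \leq \exp(\varepsilon_2)\Pr[C(D') \in S]$, closing the chain with exponent $\varepsilon_1 + \varepsilon_2$ and additive error $\delta_1 + \delta_2$.

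The principal obstacle, and an auxiliary lemma worth proving first in the development, is the integral form of $(\varepsilon,\delta)$-DP: for any $[0,1]$-valued measurable $f$ on $Y$, $\int f\, d\mu \leq \exp(\varepsilon)\int f\, d\nu + \delta$ whenever $\mu(A) \leq \exp(\varepsilon)\nu(A) + \delta$ for all $A \in \Sigma_Y$. This follows from Definition~\ref{def:DP} via the layer-cake identity $\int f\, d\mu = \int_0^1 \mu(\{f > t\})\, dt$ and integration over $t \in [0,1]$. Beyond that, the Isabelle/HOL development reduces to discharging measurability side-conditions: $y \mapsto \Pr[N(D,y) \in S]$ is measurable because $N$ is measurable into the Giry space $\prob(Z)$ and evaluation at $S$ is measurable on $\prob(Z)$, and $\alpha$ then inherits measurability as the minimum of two such functions.
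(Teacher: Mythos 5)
Your proposal is correct, and its central idea coincides with the paper's: the paper derives Lemma~\ref{DP:basic:adaptive} from the composability of the divergence $\Delta^\varepsilon$, whose proof likewise clips the inner probability --- via $F_{S'}(x)=\max(0,f(x)(S')-\delta_2)$ and $G_{S'}(x)=\min(1,\exp(\varepsilon_2)g(x)(S'))$, which together play exactly the role of your $\alpha$ --- so that the $\delta_2$ error is extracted additively \emph{before} the outer DP guarantee is invoked; you correctly identify that skipping this clipping degrades the error to $\exp(\varepsilon_2)\delta_1+\delta_2$. Where you genuinely diverge is in how the ``integral form'' of $(\varepsilon_1,\delta_1)$-DP is established for a $[0,1]$-valued measurable function. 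You propose the layer-cake identity $\int f\,d\mu=\int_0^1\mu(\{f>t\})\,dt$ and integrate the set-level inequality over $t\in[0,1]$, which is sound and needs only Tonelli. The paper instead takes Radon--Nikod\'ym derivatives of $\mu$ and $\nu$ with respect to the finite dominating measure $\pi=\mu+\nu$ and restricts the integral to the set $B$ on which $\frac{d\mu}{d\pi}-\exp(\varepsilon_1)\frac{d\nu}{d\pi}\geq 0$, so that $\int G\,d\mu-\exp(\varepsilon_1)\int G\,d\nu\leq\mu(B)-\exp(\varepsilon_1)\nu(B)\leq\delta_1$ using $0\leq G\leq 1$. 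The paper's choice is driven by the formalization: the locale for two probability measures on a common space (Section~\ref{sec:LocaleComparableMeasures}) packages the densities together with their measurability, boundedness and integrability, making the integral manipulations easy to automate in Isabelle/HOL, whereas your route would require formalizing the distribution-function identity and an extra Fubini step over the auxiliary variable $t$. Two further cosmetic differences: the paper phrases everything through $\Delta^\varepsilon$ and converts back via Lemma~\ref{lem:divergenceDP:equivalence_DP}, and its formal definition of DP demands the inequality in both directions of the (possibly asymmetric) adjacency relation --- harmless here since $\|D-D'\|_1\leq 1$ is symmetric, but worth stating explicitly as you implicitly do by fixing an ordered pair.
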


\subsection{Laplace Mechanism}\label{sec:laplace_mechanism_sensitivity}

The \emph{Laplace mechanism} is a most-typical differentially private mechanism with the noise sampled from the Laplace distributions.

\subsubsection{Laplace Noise}

The Laplace distribution $\mathrm{Lap}(b,z) \in \prob(\Real)$ with scale $0 < b$ and average $z \in \Real$
is a continuous probability distribution on $\Real$ whose density function and cumulative distribution function are given as follows: 
\begin{align*}
f_{\mathrm{Lap}(b,z)}(x) &= \frac{1}{2b}\exp\left(- \frac{|x - z|}{b}\right) \\
c_{\mathrm{Lap}(b,z)}(x) &= 
\begin{cases}
\frac{1}{2}\exp(\frac{x - z}{b}) & x \leq z \\
1 - \frac{1}{2}\exp(-\frac{x - z}{b}) & x \geq z 
\end{cases}
\end{align*}
We here write $\mathrm{Lap}(b)$ for $\mathrm{Lap}(b,0)$.

We later use the following fact:
$\mathrm{Lap}(b,z) $ can be obtained by adding the given $z$ and the noise sampled from $\mathrm{Lap}(b)$.
\begin{lemma}\label{lem:Lap:shifting}
For any $0 < b$ and $z \in \Real$, 
\[
\mathrm{Lap}(b,z) = \{ x \leftarrow \mathrm{Lap}(b); \return~(z + x) \}.
\]
\end{lemma}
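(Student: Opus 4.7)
The plan is to identify the right-hand side as the pushforward of $\mathrm{Lap}(b)$ under the translation $x \mapsto z + x$, then invoke translation invariance of Lebesgue measure together with the density of $\mathrm{Lap}(b)$ to recover the density of $\mathrm{Lap}(b,z)$.

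First I would unfold the do-notation. By the definition of $\bind$ and $\return$, for every Borel set $S \subseteq \Real$,
\[
\bigl(\mathrm{Lap}(b) \bind (\lambda x.\, \return~(z+x))\bigr)(S) = \int_{\Real} \mathbf{1}_S(z+x)\, d\mathrm{Lap}(b)(x).
\]
Next, replacing $\mathrm{Lap}(b)$ by its density with respect to \isa{lborel}, the right-hand side becomes
\[
\int_{\Real} \mathbf{1}_S(z+x) \cdot \frac{1}{2b}\exp\!\left(-\frac{|x|}{b}\right) dx.
\]
Applying translation invariance of Lebesgue measure via the substitution $y = z+x$ turns the integrand into $\mathbf{1}_S(y) \cdot \frac{1}{2b}\exp(-|y-z|/b) = \mathbf{1}_S(y) \cdot f_{\mathrm{Lap}(b,z)}(y)$, so the integral equals $\mathrm{Lap}(b,z)(S)$. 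Since the two probability measures agree on every Borel set, they coincide.

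In Isabelle/HOL the mathematical content is short; the real work is the measure-theoretic bookkeeping. I expect three main chores: (i) discharging measurability of $\lambda x.\, \return~(z+x)$ into $\prob(\Real)$, so that \isa{emeasure\_bind} applies cleanly and the bind unfolds to an \isa{nn\_integral}; (ii) invoking translation invariance, most likely through \isa{lborel\_distr\_plus} or \isa{nn\_integral\_real\_affine} specialised to scale $1$ and shift $z$, combined with the density representation of $\mathrm{Lap}(b)$; and (iii) lifting the pointwise equality of integrals to an equality of measures, either by \isa{measure\_eqI} over all Borel sets or, more compactly, by rewriting both sides as \isa{density lborel} applied to the appropriate Laplace density and appealing to \isa{density\_unique}. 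Of these, step (ii) is the one I expect to fiddle with most, since the translation lemmas in HOL-Probability are phrased for nonnegative integrals and some rearrangement is typically needed to align them with the \isa{bind}-side presentation.
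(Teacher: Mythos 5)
Your proof is correct and matches the approach the paper takes: the paper gives no pencil-and-paper argument for this lemma, only the formalization \isa{Lap{\isacharunderscore}{\kern0pt}dist{\isacharunderscore}{\kern0pt}def{\isadigit{2}}}, and the intended argument is exactly yours --- unfold the bind to an integral of an indicator against the density of $\mathrm{Lap}(b)$ with respect to \isa{lborel}, apply translation invariance of the Lebesgue measure, and conclude by equality of the two measures on all Borel sets. Your remarks on the Isabelle bookkeeping (measurability of the kernel, the affine-substitution lemmas for \isa{nn\_integral}, and closing via \isa{measure\_eqI} or \isa{density\_unique}) are also on target.
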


For the differential privacy of Laplace mechanism, the following lemma is essential:
\begin{lemma}[{continuous version of \cite[Proposition 7]{2016arXiv160105047B}}]\label{lem:Lap:DP:finer}
Let $x, y \in \Real$ and $0 < b$. 
If $|x - y| \leq r$ then 
\[
\forall S \in\Sigma_\Real. \Pr[\mathrm{Lap}(b,x) \in S] \leq \exp(r/b) \Pr[\mathrm{Lap}(b,y) \in S].
\]
\end{lemma}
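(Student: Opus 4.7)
The plan is to reduce the claim to a pointwise inequality between the two Laplace density functions and then integrate it over $S$.

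First, I would make use of the density representation of the Laplace distribution already recorded as $f_{\mathrm{Lap}(b,z)}$. Since the measure $\mathrm{Lap}(b,z)$ is absolutely continuous with respect to the Lebesgue measure on $\Real$ with density $f_{\mathrm{Lap}(b,z)}$, for any $S \in \Sigma_\Real$ we have
\[
\Pr[\mathrm{Lap}(b,z) \in S] \;=\; \int_{S} f_{\mathrm{Lap}(b,z)}(u)\, d\,u.
\]
In Isabelle this would be obtained from a lemma stating $\mathrm{Lap}(b,z)$ as \isa{density\ lborel\ (f\_Lap\ b\ z)} together with the standard \isa{emeasure\_density} lemma; the formalization of the density and the verification that it integrates to $1$ is presumably already done earlier in the development.

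Next comes the key pointwise estimate: for every $u \in \Real$,
\[
f_{\mathrm{Lap}(b,x)}(u) \;=\; \frac{1}{2b}\exp\!\bigl(-|u-x|/b\bigr) \;\leq\; \exp(r/b)\cdot \frac{1}{2b}\exp\!\bigl(-|u-y|/b\bigr) \;=\; \exp(r/b)\, f_{\mathrm{Lap}(b,y)}(u).
\]
This reduces to showing $|u-y| - |u-x| \leq r$, which follows from the reverse triangle inequality $\bigl||u-y|-|u-x|\bigr| \leq |x-y|$ together with the hypothesis $|x-y| \leq r$. The monotonicity of $\exp$ and the positivity of $b$ then give the displayed bound.

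Finally, I would integrate the pointwise inequality over $S$: by monotonicity of the Lebesgue integral on nonnegative measurable functions and the fact that a nonnegative constant factor commutes with the integral,
\[
\int_{S} f_{\mathrm{Lap}(b,x)}(u)\, d\,u \;\leq\; \exp(r/b)\int_{S} f_{\mathrm{Lap}(b,y)}(u)\, d\,u,
\]
which, by the density representation above, is exactly the conclusion. The main obstacle I expect is bookkeeping on the Isabelle/HOL side rather than any mathematical difficulty: one must work in \isa{ennreal}, keep the measurability of $f_{\mathrm{Lap}(b,z)}$ and the restriction to $S$ consistent, and connect \isa{emeasure} on the \isa{density} measure with the integral statement via \isa{nn\_integral\_density}. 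If the density formula for $\mathrm{Lap}(b,z)$ is proved only for $z=0$, one extra step is to derive the formula for general $z$ from Lemma \ref{lem:Lap:shifting} and a change of variables $u \mapsto u - z$ under \isa{lborel}.
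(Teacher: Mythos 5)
Your proposal is correct and follows the standard density-ratio argument: the pointwise bound $f_{\mathrm{Lap}(b,x)}(u) \leq \exp(r/b) f_{\mathrm{Lap}(b,y)}(u)$ via the reverse triangle inequality, integrated over $S$. The paper does not spell out a pencil-and-paper proof of this lemma, but its formalization is built on exactly the density representation you describe (\isa{laplace\_density} and \isa{density\ lborel}), so your route matches the paper's; your closing remarks about \isa{ennreal} bookkeeping and \isa{nn\_integral\_density} are the right anticipation of where the formal effort lies.
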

\subsubsection{Laplace Mechanism}

Consider a function $f \colon \Nat^{|\mathcal{X}|} \to \Real^m$ from datasets to $m$-tuples of values.
Intuitively, $f$ is a query (or queries) from a dataset $D \in \Nat^{|\mathcal{X}|}$, and 
the Laplace mechanism anonymizes $f$ 
by adding the Laplace noise to each component of the outputs of $f$. 

The Laplace mechanism $\mathrm{LapMech}_{f,m,b}$ is defined as the following procedure:
for a dataset $D \in \Nat^{|\mathcal{X}|}$, we take the componentwise sum
\[
f(D) + (r_0,\ldots,r_{m -1}) 
\] 
where each $r_i$ is the noise sampled from $\mathrm{Lap}(b)$.

We can rewrite $\mathrm{LapMech}_{f,m,b}$ using the do notation:
\begin{equation}\label{eq:LapMech_def2}
\mathrm{LapMech}_{f,m,b}(D) = \{ \vec{r} \leftarrow \mathrm{Lap}(b)^m; \return f(D) + \vec{r}\}.
\end{equation}
Here, $\vec{r} = (r_0,\ldots,r_{m -1})$, and $\mathrm{Lap}(b)^m \in \prob(\Real^m)$ is the distribution of $m$ values sampled independently from $\mathrm{Lap}(b)$. 
We can define $\mathrm{Lap}(b)^m$ inductively as follows:
\begin{align*}
\mathrm{Lap}(b)^0 &= \return (),\\
\mathrm{Lap}(b)^{k+1} &= \{ \vec{r} \leftarrow \mathrm{Lap}(b)^{k}; r_0 \leftarrow \mathrm{Lap}(b); \return (r_0,\vec{r}) \}.
\end{align*}

For simple formalization,
we can rewrite $\mathrm{LapMech}_{f,m,b}$:
\begin{equation}\label{eq:LapMech_def2_rec}
\mathrm{LapMech}_{f,m,b}(D) = \mathrm{Lap}^m(b,f(D))
\end{equation}
Here, the procedure $\mathrm{Lap}^m(b,\vec{x})$ adding $\vec{r}$ sampled from $\mathrm{Lap}^m(b)$ to $\vec{x}$
is defined inductively in the similar way as $\mathrm{Lap}(b)^m$.
Of course, we have $\mathrm{Lap}^m(b) = \mathrm{Lap}^m(b,(0,\ldots,0))$, and 
\begin{equation}\label{eq:LapMech_def3}
\mathrm{Lap}^m(b,\vec{x}) = \{ \vec{r} \leftarrow \mathrm{Lap}^m(b); \return (\vec{x}+\vec{r})\}.
\end{equation}

To make the Laplace mechanism $(\varepsilon,0)$-differentially private for a fixed $0 < \varepsilon$, 
we define the $L_1$-sensitivity of $f$ by 
\[
\Delta f = \sup\{~ \| f(D_1) - f(D_2) \|_1 ~|~ \| D_1 - D_2 \|_ 1 \leq 1 \}. 
\]
We then obtain
\begin{lemma} \label{prop.DP.LapMech}
$\mathrm{LapMech}_{f,m,b}$ is $(\Delta f /b,0)$-DP.
\end{lemma}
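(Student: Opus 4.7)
The plan is to factor the claim through a sensitivity-style lemma for the shifted product Laplace distribution $\mathrm{Lap}^m(b,\vec{x})$, which I would then prove by induction on $m$ using the one-dimensional case Lemma~\ref{lem:Lap:DP:finer}. By (\ref{eq:LapMech_def2_rec}) we have $\mathrm{LapMech}_{f,m,b}(D) = \mathrm{Lap}^m(b,f(D))$, and when $D, D'$ are adjacent the definition of $\Delta f$ gives $\|f(D) - f(D')\|_1 \leq \Delta f$. So it is enough to establish the auxiliary statement: for all $\vec{x},\vec{y}\in\Real^m$, $0 < b$, and $S \in \Sigma_{\Real^m}$,
\[
\Pr[\mathrm{Lap}^m(b,\vec{x}) \in S] \leq \exp(\|\vec{x}-\vec{y}\|_1/b)\cdot \Pr[\mathrm{Lap}^m(b,\vec{y}) \in S].
\]
Instantiating this at $(\vec{x},\vec{y}) = (f(D), f(D'))$ and using monotonicity of $\exp$ then yields the required $(\Delta f/b, 0)$-DP bound.

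I would prove the auxiliary statement by induction on $m$. The base case $m = 0$ is trivial since both sides collapse to $\return()$. For the inductive step, write $\vec{x} = (x_0,\vec{x}')$ and $\vec{y} = (y_0, \vec{y}')$, and unfold the recursive definition so that $\mathrm{Lap}^{m+1}(b,\vec{x})$ is obtained by independent draws $r_0 \leftarrow \mathrm{Lap}(b,x_0)$ and $\vec{r}' \leftarrow \mathrm{Lap}^m(b,\vec{x}')$ returning $(r_0,\vec{r}')$. Expressing $\Pr[\mathrm{Lap}^{m+1}(b,\vec{x}) \in S]$ as an iterated integral via Fubini (the commutativity of the Giry monad), I would first fix $\vec{r}'$ and apply Lemma~\ref{lem:Lap:DP:finer} to the section $\{r_0 \mid (r_0,\vec{r}') \in S\}$ with $r = |x_0 - y_0|$, pulling $\exp(|x_0-y_0|/b)$ outside the outer integral. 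Swapping the order of integration back, I would fix $r_0$ and apply the induction hypothesis to the section $\{\vec{r}' \mid (r_0,\vec{r}') \in S\}$, pulling out $\exp(\|\vec{x}'-\vec{y}'\|_1/b)$. Since $\|\vec{x}-\vec{y}\|_1 = |x_0-y_0| + \|\vec{x}'-\vec{y}'\|_1$, the two exponential factors combine to give the required bound, with $\Pr[\mathrm{Lap}^{m+1}(b,\vec{y}) \in S]$ recognised as the remaining iterated integral.

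The main obstacle will be executing this scalar-multiplication bookkeeping cleanly in Isabelle, since the induction hypothesis is a set-level inequality rather than a pointwise one: each application requires identifying a suitable section of $S$, discharging its measurability, and propagating the multiplicative constant through the integral by monotonicity. Working with \isa{emeasure} valued in $[0,\infty]$ is natural for invoking Fubini from HOL-Probability, but the scalar multiplications by $\exp(\cdot)$ and the eventual return to ordinary real probabilities require care, since multiplication in $[0,\infty]$ is degenerate at $0 \cdot \infty$ and all intermediate quantities must be shown finite. A secondary, more routine obstacle is bridging the three definitional shapes (\ref{eq:LapMech_def2}), (\ref{eq:LapMech_def2_rec}) and (\ref{eq:LapMech_def3}) via the Giry monad laws together with Lemma~\ref{lem:Lap:shifting}, so that the inductive recursion is available in the form the proof needs.
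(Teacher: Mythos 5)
Your proposal is correct and follows the same overall skeleton as the paper: reduce via equation~(\ref{eq:LapMech_def2_rec}) and the definition of $\Delta f$ to a statement comparing $\mathrm{Lap}^m(b,\vec{x})$ and $\mathrm{Lap}^m(b,\vec{y})$ at level $\|\vec{x}-\vec{y}\|_1/b$ (the paper's \isa{DP\_Lap\_dist\_list}, transferred to datasets by \isa{differential\_privacy\_preprocessing}), and prove that statement by induction on $m$ from the one-dimensional Lemma~\ref{lem:Lap:DP:finer}. The one genuine difference is how you discharge the inductive step. The paper does not redo Fubini by hand: it phrases everything through the divergence $\Delta^\varepsilon$ and combines the head coordinate with the tail by invoking the already-proved general \emph{composability} lemma for $\Delta^\varepsilon$ (plus reflexivity for the final $\return$), so ``applying \isa{DP\_divergence\_Lap\_dist'} repeatedly'' is all that remains. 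You instead re-derive the composition inline via iterated integrals and sections of $S$; this is essentially the composability proof specialized to $\delta=0$, where it simplifies considerably (no Radon--Nikod\'{y}m derivatives, no truncation functions), so your route is more elementary and self-contained, at the cost of re-proving section measurability and the scalar-propagation bookkeeping that the paper gets for free from its divergence infrastructure. Both routes are sound; the paper's buys reuse of a lemma it needs anyway for the composition theorems, yours avoids depending on it.
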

By setting $b = \Delta f / \varepsilon$, we conclude the desired property:
\begin{proposition}[{\cite[Theorem 3.6]{DworkRothTCS-042}}]\label{prop:DP:Lapmech}
The Laplace mechanism $\mathrm{LapMech}_{f,m,\Delta f / \varepsilon}$ is $(\varepsilon,0)$-differentially private.
\end{proposition}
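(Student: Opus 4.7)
The plan is to derive this proposition as an immediate corollary of Lemma~\ref{prop.DP.LapMech}. Assuming the nondegenerate case $\Delta f > 0$, setting $b = \Delta f / \varepsilon$ is well-defined because $\varepsilon > 0$, and substituting into the lemma yields privacy level $\Delta f / b = \Delta f / (\Delta f / \varepsilon) = \varepsilon$, so $\mathrm{LapMech}_{f,m,\Delta f/\varepsilon}$ is $(\varepsilon, 0)$-DP as claimed.

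The degenerate case $\Delta f = 0$ needs separate care, since the scale $\Delta f / \varepsilon$ collapses to $0$ and Lemma~\ref{prop.DP.LapMech} requires $0 < b$. Here $f(D) = f(D')$ for all adjacent $D, D'$, so $\mathrm{LapMech}_{f,m,b}(D) = \mathrm{LapMech}_{f,m,b}(D')$ as probability measures for any positive scale $b$, and the mechanism is trivially $(0, 0)$-DP, hence $(\varepsilon, 0)$-DP by Lemma~\ref{DP:basic:trivial}. In the Isabelle formalization one would either fix a convention for the scale in this edge case or simply restate the proposition under a precondition $\Delta f > 0$; either way the argument is short once the measurability bookkeeping is in place.

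The main obstacle is therefore not the proposition itself but rather Lemma~\ref{prop.DP.LapMech}, which it depends on. I would expect that lemma to be proved by induction on $m$: the base case $m = 0$ is trivial since $\mathrm{LapMech}_{f,0,b}$ is a constant return and thus $(0,0)$-DP. For the inductive step I would use the product decomposition of $\mathrm{Lap}^m(b,\vec x)$ into a single Laplace coordinate and an $(m-1)$-fold product (via Lemma~\ref{lem:Lap:shifting} and commutativity of the Giry monad), apply the single-coordinate privacy bound from Lemma~\ref{lem:Lap:DP:finer} with $r_i = |f(D)_i - f(D')_i|$, and then combine via the sequential composition theorem (Lemma~\ref{DP:basic:sequential}) to accumulate the componentwise losses into $\sum_i |f(D)_i - f(D')_i|/b = \|f(D) - f(D')\|_1/b \leq \Delta f / b$, with $\delta = 0$. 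The measure-theoretic bookkeeping for the product structure in Isabelle is the most tedious part; the privacy inequality itself is a short calculation once the product form is established.
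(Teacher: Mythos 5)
Your proposal matches the paper's proof: Proposition~\ref{prop:DP:Lapmech} is obtained exactly by substituting $b = \Delta f/\varepsilon$ into Lemma~\ref{prop.DP.LapMech}, and the formalization indeed restricts to $0 < \Delta f < \infty$ as you anticipate. Your sketch of the underlying lemma also mirrors the paper's route (repeated application of the single-coordinate bound of Lemma~\ref{lem:Lap:DP:finer} along the components, combined via the composability of the divergence and a preprocessing step along $f$), so there is nothing substantive to flag.
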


\section{Divergence for Differential Privacy}\label{sec:DP_divergence}

Before formalizing differential privacy, we formalize the divergence\footnote{A divergence is a kind of metric between probability distributions. They may not satisfy the axioms of a metric function.} $\Delta^\varepsilon$ ($0 \leq \varepsilon$) for differential privacy introduced in \cite{BartheOlmedo2013,olmedo2014approximate}.
It is useful for the formalization of differential privacy.
First, basic properties of differential privacy shown in Section \ref{sec:DifferentialPrivacy} are derived from ones of $\Delta^\varepsilon$.
Second, $\Delta^\varepsilon$ is more compatible with the structure of Giry monad.
it forms a divergence on Giry monad, which is also an essential categorical structure of relational program logics reasoning about differential privacy~\cite{Sato_Katsumata_2023}.
In addition, the conversion law form R\'{e}nyi differential privacy~\cite{Mironov17} to the standard differential privacy
is derived from the conversion from R\'{e}nyi divergence to $\Delta^\varepsilon$~\cite{DBLP:journals/corr/abs-1905-09982}.

We define the divergence $\Delta^\varepsilon$ ($0 \leq \varepsilon$) below:
\begin{definition}[{\cite{BartheOlmedo2013}, \cite[Section 5.8]{Sato_Katsumata_2023}} ]
For each measurable space $X$ and $0 \leq \varepsilon$, we define a function $\Delta^\varepsilon_X \colon \prob(X) \times \prob(X) \to [0,\infty)$ by
\[
\Delta^\varepsilon_X (\mu,\nu) = \sup \{\mu(S) - \exp(\varepsilon) \nu(S) ~|~ S \in \Sigma_X \}.
\]
\end{definition}

Differential privacy can be reformulated equivalently using the divergence $\Delta^\varepsilon$:
\begin{lemma}\label{lem:divergenceDP:equivalence_DP}
For any pair $\mu_1,\mu_2 \in \prob(X)$ of probability distributions on a measurable space $X$, we obtain
\begin{align*}
&\Delta^\varepsilon_X (\mu,\nu) \leq \delta \\
&\iff \forall S \in \Sigma_X.~ \Pr[\mu_1 \in S] \leq \exp(\varepsilon) \Pr[\mu_2\in S] + \delta.
\end{align*}
\end{lemma}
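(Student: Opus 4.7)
The plan is to prove this as an essentially definitional unfolding of the supremum, handling the two implications separately but symmetrically. I would first read the right-hand side as the assertion that $\delta$ is an upper bound of the set $T = \{\mu_1(S) - \exp(\varepsilon)\mu_2(S) \mid S \in \Sigma_X\}$ (after rearranging $\mu_1(S) \leq \exp(\varepsilon)\mu_2(S) + \delta$), and the left-hand side as the assertion that $\sup T \leq \delta$. The equivalence is then the universal property of suprema.

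Concretely, for the $(\Rightarrow)$ direction I would fix $S \in \Sigma_X$, observe that $\mu_1(S) - \exp(\varepsilon)\mu_2(S) \in T$, so by definition of $\sup$ it is $\leq \Delta^\varepsilon_X(\mu_1,\mu_2) \leq \delta$, and rearrange. For the $(\Leftarrow)$ direction I would assume the DP inequality for every $S$, which by rearrangement says every element of $T$ is $\leq \delta$; hence $\delta$ is an upper bound and $\sup T \leq \delta$. Both steps are one-liners in Isabelle once the arithmetic rewrite between $a \leq b + \delta$ and $a - b \leq \delta$ is applied.

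The only mildly fiddly point, and the place I would expect a little technical friction in Isabelle/HOL, is ensuring the supremum is well-behaved as a real number rather than an extended real. The set $T$ is nonempty (take $S = \emptyset$, giving $0 \in T$) and bounded above by $\mu_1(X) = 1$, so $\sup T$ is a finite nonnegative real; in particular $\Delta^\varepsilon_X$ genuinely lands in $[0,\infty)$ as claimed in the definition. Depending on whether the Isabelle formalization uses \isa{ennreal} or \isa{real} for the outputs of \isa{emeasure}, I would need a coercion lemma or a \isa{Sup}/\isa{SUP} rewrite and possibly a boundedness witness, but no genuine measure-theoretic content is needed beyond the fact that $\mu_1, \mu_2$ are probability measures.

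Since the forward direction of this lemma is exactly how \textbf{Definition \ref{def:DP}} will be re-expressed in terms of $\Delta^\varepsilon$, I would state the proof so that both directions are visible as symmetric \emph{iff}-introductions, which makes it straightforward to lift it pointwise (over adjacent datasets) to show that $M$ is $(\varepsilon,\delta)$-DP if and only if $\Delta^\varepsilon_Y(M(D), M(D')) \leq \delta$ for all adjacent $D, D'$—the reformulation that will actually be used to derive the basic properties in Section \ref{sec:DifferentialPrivacy} from properties of $\Delta^\varepsilon$.
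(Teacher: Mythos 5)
Your proposal is correct and matches the paper's (essentially definitional) argument: the formal lemma \isa{DP{\isacharunderscore}divergence{\isacharunderscore}forall} is exactly the universal property of the supremum applied to the set $\{\mu_1(S)-\exp(\varepsilon)\mu_2(S)\mid S\in\Sigma_X\}$ after the rearrangement you describe. The "fiddly point" you anticipate is resolved in the paper by defining the divergence into \isa{ereal} rather than \isa{real}, which is precisely why their version of the lemma needs no nonemptiness or boundedness side conditions (and not even the assumption that $\mu_1,\mu_2$ are probability measures).
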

\begin{corollary}[cf. {\cite[Section 3.1]{BartheOlmedo2013}}]
A randomized algorithm $M \colon \Nat^{|\mathcal{X}|} \to \prob(Y)$ is $(\varepsilon,\delta)$-DP \emph{if and only if} for any adjacent datasets $D,D' \in \Nat^{|\mathcal{X}|}$,
$\Delta_Y^\varepsilon(M(D),M(D')) \leq \delta$ holds.
\end{corollary}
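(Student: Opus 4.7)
The plan is to derive this as a direct consequence of Lemma~\ref{lem:divergenceDP:equivalence_DP} combined with an unfolding of Definition~\ref{def:DP}. Since both sides of the corollary are of the form ``for all adjacent pairs $D, D' \in \Nat^{|\mathcal{X}|}$, [\ldots]'', it suffices to prove, at fixed adjacent $D, D'$, the equivalence between the pointwise DP inequality $\forall S \in \Sigma_Y.\ \Pr[M(D) \in S] \leq \exp(\varepsilon)\Pr[M(D') \in S] + \delta$ and the single inequality $\Delta_Y^\varepsilon(M(D), M(D')) \leq \delta$.

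The steps are as follows. First, unfold Definition~\ref{def:DP} to expose the inner universal quantifier over measurable sets $S \in \Sigma_Y$. Second, instantiate Lemma~\ref{lem:divergenceDP:equivalence_DP} at $\mu_1 := M(D)$ and $\mu_2 := M(D')$; its conclusion supplies exactly the required per-pair biconditional. Third, commute this biconditional with the outer universal quantifier over adjacent datasets to obtain the iff claimed by the corollary. Both directions of the outer iff fall out at once from this rewrite.

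In the Isabelle/HOL formalization I expect this to reduce to a single application of the iff from Lemma~\ref{lem:divergenceDP:equivalence_DP} followed by standard quantifier manipulations; the measure-theoretic content (taking the supremum over $S \in \Sigma_Y$ and relating it to the pointwise bound) has already been discharged inside that lemma, so there is no substantive obstacle here. The only subtlety worth noting is hygiene: the definition of $(\varepsilon,\delta)$-DP and that of $\Delta^\varepsilon$ in the formalization must agree on the ambient $\sigma$-algebra $\Sigma_Y$ and on the admissible ranges of $\varepsilon$ and $\delta$ (both nonnegative reals), so that the rewrite applies without accumulating side conditions on the adjacent pair $(D, D')$ at which it is invoked.
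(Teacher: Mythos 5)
Your proposal is correct and matches the paper's (implicit) argument: the corollary is stated as an immediate consequence of Lemma~\ref{lem:divergenceDP:equivalence_DP}, obtained exactly by instantiating $\mu_1 := M(D)$, $\mu_2 := M(D')$ and commuting the resulting biconditional with the universal quantifier over adjacent pairs. Your remark about matching $\sigma$-algebras and the handling of possibly asymmetric adjacency (the formal definition requires the inequality in both directions) is the only real bookkeeping point, and you have identified it.
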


The divergence $\Delta^\varepsilon$ have the following basic properties (see \cite{BartheOlmedo2013,olmedo2014approximate,Sato_Katsumata_2023}):
for all $\mu, \nu \in \prob(X)$ and $f , g \colon X \to \prob(Y)$,
\begin{description}
\item[Nonnegativity] $0 \leq \Delta^\varepsilon_X(\mu,\nu)$.
\item[Reflexivity] $\Delta^0_X(\mu,\mu) = 0$.
\item[Monotonicity] If $\varepsilon \leq \varepsilon'$ then $\Delta^\varepsilon_X(\mu,\nu) \geq \Delta^{\varepsilon'}_X(\mu,\nu)$.
\item[Composability] If $ \Delta^{\varepsilon_1}_X (\mu,\nu) \leq \delta_1$ and $\Delta^{\varepsilon_2}_Y (f(x),g(x)) \leq \delta_2$ for all $x \in X$
then $\Delta^{\varepsilon_1 + \varepsilon_2}_Y (\mu \bind f, \nu \bind g) \leq \delta_1 + \delta_2$.
\end{description}

Basic properties of differential privacy can be derived from those of the divergence $\Delta^\varepsilon$.
Lemma \ref{DP:basic:trivial} is proved by the monotonicity of $\Delta^\varepsilon$.
We here emphasize that the postprocessing property (Lemma \ref{DP:basic:postprocessing}) and both composition theorems (Lemmas \ref{DP:basic:sequential} and \ref{DP:basic:adaptive}) are derived from the reflexivity and composability of $\Delta^\varepsilon$.

The group privacy (Lemma \ref{DP:basic:group}) can be derived from the
specific properties of $\Delta^\varepsilon$ and the metric space $(\Nat^{|\mathcal{X}|},\| - \|_1)$:
\begin{description}
\item[``Transitivity''] Let $\mu_1, \mu_2,\mu_3 \in \prob(X)$.
If $\Delta^{\varepsilon_1}_X(\mu_1,\mu_2) \leq 0$ and $\Delta^{\varepsilon_2}_X(\mu_2,\mu_3) \leq 0$
then $\Delta^{\varepsilon_1+\varepsilon_2}_X(\mu_1,\mu_3) \leq 0$.
\end{description}
\begin{lemma}\label{lem:adj_k:group}
Let $k\in \Nat$, $D,D' \in \Nat^{|\mathcal{X}|}$. 
If $\|D - D' \|_1 \leq k$ then we obtain $(D,D') \in R^k$
where $R = \{(D,D') | D,D'\colon\text{adjacent}\}$.
\end{lemma}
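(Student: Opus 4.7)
The plan is to proceed by induction on $k$, with the inductive step reducing the $L_1$-distance by one via a single-coordinate move.

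For the base case $k = 0$, the hypothesis $\|D - D'\|_1 \leq 0$ forces $D = D'$, since the $L_1$-norm is a sum of nonnegative terms. The relation $R^0$ is the identity, so $(D,D) \in R^0$ and we are done. Before the inductive step proper, I would note one useful preliminary fact: $R$ is reflexive, because $\|D - D\|_1 = 0 \leq 1$ makes every dataset adjacent to itself, hence $R^k \subseteq R^{k+1}$. This lets me dispose of the case $D = D'$ in the inductive step just by appealing to the induction hypothesis at a lower value.

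For the inductive step, assume the statement holds for $k$ and suppose $\|D - D'\|_1 \leq k+1$. If $D = D'$, apply the induction hypothesis and the reflexive extension above. Otherwise, since $\|D - D'\|_1 > 0$, the finite-support function $i \mapsto |D[i] - D'[i]|$ has a nonzero entry, so pick any index $i$ with $D[i] \neq D'[i]$. Define the intermediate witness $D''$ by
\[
D''[j] = \begin{cases} D[j] & j \neq i \\ D[i] + 1 & j = i,~ D[i] < D'[i] \\ D[i] - 1 & j = i,~ D[i] > D'[i] \end{cases}
\]
so that $D'' \in \Nat^{|\mathcal{X}|}$ (the underflow case is excluded by the choice of sign). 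A short calculation gives $\|D - D''\|_1 = 1$, hence $(D, D'') \in R$, and $\|D'' - D'\|_1 = \|D - D'\|_1 - 1 \leq k$, so by the induction hypothesis $(D'', D') \in R^k$. Composing yields $(D, D') \in R^{k+1}$.

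The main obstacle is the Isabelle bookkeeping around the witness $D''$: one must justify that $D''$ lies in $\Nat^{|\mathcal{X}|}$ (the subtraction never underflows because it only occurs when $D[i] > D'[i] \geq 0$), and derive the arithmetic identities for the $L_1$-norm by splitting the sum on the singleton $\{i\}$ and its complement. If $|\mathcal{X}|$ is treated in the formalization as a possibly infinite index set with finitely supported functions, one also has to argue that $\{i \mid D[i] \neq D'[i]\}$ is finite and nonempty, which follows from finiteness of $\|D - D'\|_1$ and the base-case observation that $\|D - D'\|_1 = 0$ iff $D = D'$. Aside from this, all pieces are routine.
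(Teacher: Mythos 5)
Your proof is correct, but it takes a genuinely different route from the paper's. The paper gives no pencil-and-paper argument for this lemma; its formal proof (as indicated by the helper lemma \texttt{L1\_adj\_iterate\_Cons1} in the section on the $L_1$-metric on lists) proceeds by structural induction on the dataset lists: one shows that prepending the \emph{same} head element to both lists preserves membership in $R^{k}$ (passing from the length-$n$ relation to the length-$(n{+}1)$ relation), and then adjusts the head coordinate from $D[0]$ to $D'[0]$ by $|D[0]-D'[0]|$ single-unit adjacent moves, composing with the induction hypothesis applied to the tails. You instead induct on $k$ and, at each step, manufacture an intermediate dataset $D''$ that moves one unit toward $D'$ in a single coordinate, so that $(D,D'')\in R$ and $\|D''-D'\|_1 = \|D-D'\|_1 - 1 \leq k$. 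Both arguments are sound; your bookkeeping (reflexivity of $R$ giving $R^k\subseteq R^{k+1}$, non-underflow of the decrement since it is only taken when $D[i]>D'[i]\geq 0$, and the exact drop of the $L_1$-norm by one) is all in order, and the composition order is harmless since $R\circ R^k = R^k\circ R$. Your induction on $k$ is the more natural argument on paper and is representation-independent; the paper's list induction buys a proof that lines up directly with the Isabelle/HOL encoding of datasets as length-$n$ lists, where structural induction and the Cons-prepending lemma are the path of least resistance, at the cost of having to manage two instances of the metric-space locale (lengths $n$ and $n{+}1$) simultaneously.
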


The nonnegativity, reflexivity, monotonicity, and ``transitivity'' of $\Delta^\varepsilon$ are easy to prove.
The composability was proved in the discrete setting~\cite{BartheOlmedo2013,olmedo2014approximate}, then it was extended to the continuous setting~\cite{Sato2016MFPS,DBLP:conf/lics/SatoBGHK19,Sato_Katsumata_2023}.
Combining these existing proofs, we provide a simplified but detailed pencil and paper proof.

\begin{proof}[A simplified proof of composability]
We assume
$\Delta^{\varepsilon_1}_X (\mu,\nu) \leq \delta_1$
and 
$\Delta^{\varepsilon_2}_Y (f(x),g(x)) \leq \delta_2$ ($\forall x \in X$).
Let $S' \in \Sigma_Y$, and 
\begin{align*}
F_{S'} (x) &= \max(0,f(x)(S')- \delta_2) && (x \in X)\\
G_{S'}(x) &= \min(1,\exp(\varepsilon_2) g(x)(S')) && (x \in X)
\end{align*}
From the assumption, we obtain
\begin{align}
\label{DP:comp:condition1}
\mu(S) - \exp(\varepsilon_1) \nu(S) \leq \delta_1 & && (S \in \Sigma_X) \tag{a}\\
 \label{DP:comp:condition2} 
 0 \leq F_{S'} (x) \leq G_{S'} (x) \leq 1 &&& (x \in X) \tag{b}
\end{align}
Consider the measure $\pi = \mu+\nu$ on $X$ defined by $\pi(S) = \mu(S)+\nu(S)$.
It is finite, and dominates both $\mu$ and $\nu$ (i.e. $\mu$ and $\nu$ are absolutely continuous with respect to $\pi$).
Hence, by Radon-Nikod\'{y}m theorem, 
we can take the density functions (Radon-Nikod\'{y}m derivatives) ${d\mu}/{d\pi}$ and ${d\nu}/{d\pi}$ of $\mu$ and $\nu$, respectively.
Then, from (\ref{DP:comp:condition1}) and (\ref{DP:comp:condition2}), we evaluate as follows:
\begin{align*}
&(\mu \bind f)(S') - \exp(\varepsilon_1+\varepsilon_2)(\nu \bind g)(S')\\
&= \int f(x)(S') d\mu(x) - \exp(\varepsilon_1+\varepsilon_2) \int g(x)(S) d\nu(x)\\
&= \int f(x)(S') \frac{d\mu}{d\pi}(x) - \exp(\varepsilon_1+\varepsilon_2) g(x)(S') \frac{d\nu}{d\pi}(x) ~d\pi\\
&\leq \int \left(F_{S'} (x) +\delta_2\right) \frac{d\mu}{d\pi}(x) - \exp(\varepsilon_1) G_{S'} (x) \frac{d\nu}{d\pi}(x) ~d\pi(x)\\
&= \int F_{S'} (x) \frac{d\mu}{d\pi}(x) - \exp(\varepsilon_1) G_{S'} (x) \frac{d\nu}{d\pi}(x) ~d\pi(x) + \delta_2 \\
&\leq \int  G_{S'} (x) \frac{d\mu}{d\pi}(x) - \exp(\varepsilon_1) G_{S'} (x) \frac{d\nu}{d\pi}(x) ~d\pi(x) + \delta_2 \\
&\leq \int_{x \in B} 
\left( \frac{d\mu}{d\pi}(x) - \exp(\varepsilon_1)\frac{d\nu}{d\pi}(x)\right)
\cdot G_{S'} (x) ~d\pi(x) + \delta_2\\
&\leq \mu(B) - \exp(\varepsilon_1)\nu(B)+ \delta_2\\
&\leq \delta_1 + \delta_2
\end{align*}
Here, $B = \left\{ x ~\middle|~ 0 \leq \left(\frac{d\mu}{d\pi}(x) - \exp(\varepsilon_1)\frac{d\nu}{d\pi}(x) \right) \right\}$.
Since $S' \in \Sigma_Y$ is arbitrary, we conclude $\Delta^{\varepsilon_1 + \varepsilon_2}_Y (\mu \bind f ,\nu \bind g) \leq \delta_1+\delta_2$.
\end{proof}

\subsection{Divergence $\Delta^\varepsilon$ in Isabelle HOL}

We formalize the divergence $\Delta^\varepsilon$ in Isabelle/HOL.
\begin{isabelle}
\isacommand{definition}\isamarkupfalse%
\ DP{\isacharunderscore}{\kern0pt}divergence\ {\isacharcolon}{\kern0pt}{\isacharcolon}{\kern0pt}\ {\isachardoublequoteopen}{\isacharprime}{\kern0pt}a\ measure\ {\isasymRightarrow}\ {\isacharprime}{\kern0pt}a\ measure\ {\isasymRightarrow}\ real\ {\isasymRightarrow}\ ereal\ {\isachardoublequoteclose}\ \isakeyword{where}\isanewline
\ \ {\isachardoublequoteopen}DP{\isacharunderscore}{\kern0pt}divergence\ M\ N\ {\isasymepsilon}\ {\isacharequal}{\kern0pt}\ Sup\ {\isacharbraceleft}{\kern0pt}ereal{\isacharparenleft}{\kern0pt}measure\ M\ A\ {\isacharminus}{\kern0pt}\ {\isacharparenleft}{\kern0pt}exp\ {\isasymepsilon}{\isacharparenright}{\kern0pt}\ {\isacharasterisk}{\kern0pt}\ measure\ N\ A{\isacharparenright}{\kern0pt}\ {\isacharbar}{\kern0pt}\ A{\isacharcolon}{\kern0pt}{\isacharcolon}{\kern0pt}{\isacharprime}{\kern0pt}a\ set{\isachardot}{\kern0pt}\ A\ {\isasymin}\ {\isacharparenleft}{\kern0pt}sets\ M{\isacharparenright}{\kern0pt}{\isacharbraceright}{\kern0pt}{\isachardoublequoteclose}
\end{isabelle}
We here remark that \isa{M} and \isa{N} are general measures of type \isa{{\isacharprime}{\kern0pt}a\ measure}, and \isa{\isasymepsilon} may be negative.
We restrict them later.

The equivalence between $\Delta^\varepsilon$ and the inequality of differential privacy (Lemma \ref{lem:divergenceDP:equivalence_DP}) is formalized as follows:
\begin{isabelle}
\isacommand{lemma}\isamarkupfalse%
\ DP{\isacharunderscore}{\kern0pt}divergence{\isacharunderscore}{\kern0pt}forall{\isacharcolon}{\kern0pt}\isanewline
\ \ \isakeyword{shows}\ {\isachardoublequoteopen}{\isacharparenleft}{\kern0pt}{\isasymforall}\ A\ {\isasymin}\ sets\ M{\isachardot}{\kern0pt}\ measure\ M\ A\ {\isacharminus}{\kern0pt}\ {\isacharparenleft}{\kern0pt}exp\ {\isasymepsilon}{\isacharparenright}{\kern0pt}\ {\isacharasterisk}{\kern0pt}\ measure\ N\ A\ {\isasymle}\ {\isacharparenleft}{\kern0pt}{\isasymdelta}\ {\isacharcolon}{\kern0pt}{\isacharcolon}{\kern0pt}\ real{\isacharparenright}{\kern0pt}{\isacharparenright}{\kern0pt}\ {\isasymlongleftrightarrow}\ DP{\isacharunderscore}{\kern0pt}divergence\ M\ N\ {\isasymepsilon}\ {\isasymle}\ {\isacharparenleft}{\kern0pt}{\isasymdelta}\ {\isacharcolon}{\kern0pt}{\isacharcolon}{\kern0pt}\ real{\isacharparenright}{\kern0pt}{\isachardoublequoteclose}
\end{isabelle}

For the simplicity of formalization, we choose \isa{ereal} (extended real) for the type of values of the divergence instead of \isa{real} and \isa{ennreal} (extended nonnegative real).

The first reason is that \isa{ereal} forms a complete linear order\footnote{linear order + complete lattice}.
We could formalize $\Delta^\varepsilon$ using \isa{real} instead of \isa{ereal}: 
\begin{isabelle}
\isacommand{definition}\isamarkupfalse%
\ DP{\isacharunderscore}{\kern0pt}divergence{\isacharunderscore}{\kern0pt}real\ {\isacharcolon}{\kern0pt}{\isacharcolon}{\kern0pt}\ {\isachardoublequoteopen}{\isacharprime}{\kern0pt}a\ measure\ {\isasymRightarrow}\ {\isacharprime}{\kern0pt}a\ measure\ {\isasymRightarrow}\ real\ {\isasymRightarrow}\ real\ {\isachardoublequoteclose}\ \isakeyword{where}\isanewline
\ \ {\isachardoublequoteopen}DP{\isacharunderscore}{\kern0pt}divergence{\isacharunderscore}{\kern0pt}real\ M\ N\ {\isasymepsilon}\ {\isacharequal}{\kern0pt}\ Sup\ {\isacharbraceleft}{\kern0pt}\ measure\ M\ A\ {\isacharminus}{\kern0pt}\ {\isacharparenleft}{\kern0pt}exp\ {\isasymepsilon}{\isacharparenright}{\kern0pt}\ {\isacharasterisk}{\kern0pt}\ measure\ N\ A\ {\isacharbar}{\kern0pt}\ A{\isacharcolon}{\kern0pt}{\isacharcolon}{\kern0pt}{\isacharprime}{\kern0pt}a\ set{\isachardot}{\kern0pt}\ A\ {\isasymin}\ {\isacharparenleft}{\kern0pt}sets\ M{\isacharparenright}{\kern0pt}{\isacharbraceright}{\kern0pt}{\isachardoublequoteclose}
\end{isabelle}
It is equal to \isa{DP{\isacharunderscore}{\kern0pt}divergence} for probability measures on \isa{L}.
\begin{isabelle}
\isacommand{lemma}\isamarkupfalse%
\ DP{\isacharunderscore}{\kern0pt}divergence{\isacharunderscore}{\kern0pt}is{\isacharunderscore}{\kern0pt}real{\isacharcolon}{\kern0pt}\isanewline
\ \ \isakeyword{assumes}\ M{\isacharcolon}{\kern0pt}\ {\isachardoublequoteopen}M\ {\isasymin}\ space\ {\isacharparenleft}{\kern0pt}prob{\isacharunderscore}{\kern0pt}algebra\ L{\isacharparenright}{\kern0pt}{\isachardoublequoteclose}\isanewline
\ \ \ \ \isakeyword{and}\ N{\isacharcolon}{\kern0pt}\ {\isachardoublequoteopen}N\ {\isasymin}\ space\ {\isacharparenleft}{\kern0pt}prob{\isacharunderscore}{\kern0pt}algebra\ L{\isacharparenright}{\kern0pt}{\isachardoublequoteclose}\ \isanewline
\ \ \isakeyword{shows}\ {\isachardoublequoteopen}DP{\isacharunderscore}{\kern0pt}divergence\ M\ N\ {\isasymepsilon}\ {\isacharequal}{\kern0pt}\ DP{\isacharunderscore}{\kern0pt}divergence{\isacharunderscore}{\kern0pt}real\ M\ N\ {\isasymepsilon}{\isachardoublequoteclose}
\end{isabelle}
However, \isa{DP{\isacharunderscore}{\kern0pt}divergence{\isacharunderscore}{\kern0pt}real\ M\ N} is not well-defined if \isa{M} and \isa{N} are not finite.
It is not convenient for formalizing properties of $\Delta^\varepsilon$. 
For example, we need to add extra assumptions to \isa{DP{\isacharunderscore}{\kern0pt}divergence{\isacharunderscore}{\kern0pt}forall} when we choose \isa{real}.
In contrast, it is convenient that \isa{DP{\isacharunderscore}{\kern0pt}divergence\ M\ N} is always defined.

The second reason is that \isa{ereal} is more convenient than another complete linear order \isa{ennreal},
while we later show the nonnegativity of \isa{DP{\isacharunderscore}{\kern0pt}divergence}.
It supports addition and subtraction in the usual sense\footnote{For example, $(1 - 2) + 2 = 1$ holds in \isa{ereal}, but it does not hold in \isa{ennreal}.}.
It is helpful for formalizing properties of $\Delta^\varepsilon$.
In particular, the proof of composability uses many transpositions of terms.

\subsubsection{A Locale for Two Probability Measures on a Common Space} \label{sec:LocaleComparableMeasures}

In the formal proofs of properties of $\Delta^\varepsilon$, 
we often need to use many mathematical properties about two probability distributions on a \emph{common} measurable space.
In particular, in the proof of composability of $\Delta^\varepsilon$, for given two probability distributions $\mu,\nu \in \prob (X)$, we take the density functions $d\mu/d\pi$ and $d\nu/d\pi$ with respect to the sum $\pi = \mu+\nu$.

For these reasons, we introduce the following locale for two probability distributions on a common measurable space, which contains the following features:
\begin{itemize}
\item lemmas for conversions of the underlying sets and $\sigma$-algebras among \isa{M}, \isa{N}, \isa{sum{\isacharunderscore}{\kern0pt}measure\ M\ N} and \isa{L}.
\item the density functions \isa{dM} and \isa{dN} of \isa{M} and \isa{N} with respect to \isa{sum{\isacharunderscore}{\kern0pt}measure\ M\ N}.
\item properties of \isa{dM} and \isa{dN}: being density functions, measurability, boundedness, and integrability. 
\end{itemize}
\begin{isabelle}
\isacommand{locale}\isamarkupfalse%
\ comparable{\isacharunderscore}{\kern0pt}probability{\isacharunderscore}{\kern0pt}measures\ {\isacharequal}{\kern0pt}\isanewline
\ \ \isakeyword{fixes}\ L\ M\ N\ {\isacharcolon}{\kern0pt}{\isacharcolon}{\kern0pt}\ {\isachardoublequoteopen}{\isacharprime}{\kern0pt}a\ measure{\isachardoublequoteclose}\isanewline
\ \ \isakeyword{assumes}\ M{\isacharcolon}{\kern0pt}\ {\isachardoublequoteopen}M\ {\isasymin}\ space\ {\isacharparenleft}{\kern0pt}prob{\isacharunderscore}{\kern0pt}algebra\ L{\isacharparenright}{\kern0pt}{\isachardoublequoteclose}\isanewline
\ \ \ \ \isakeyword{and}\ N{\isacharcolon}{\kern0pt}\ {\isachardoublequoteopen}N\ {\isasymin}\ space\ {\isacharparenleft}{\kern0pt}prob{\isacharunderscore}{\kern0pt}algebra\ L{\isacharparenright}{\kern0pt}{\isachardoublequoteclose}\isanewline
\isakeyword{begin}
\isanewline
\ \ ...
\isanewline
\isacommand{lemma}\isamarkupfalse%
\ spaceM{\isacharbrackleft}{\kern0pt}simp{\isacharbrackright}{\kern0pt}{\isacharcolon}{\kern0pt}\ {\isachardoublequoteopen}sets\ M\ {\isacharequal}{\kern0pt}\ sets\ L{\isachardoublequoteclose}
\isanewline
\isacommand{lemma}\isamarkupfalse%
\ spaceN{\isacharbrackleft}{\kern0pt}simp{\isacharbrackright}{\kern0pt}{\isacharcolon}{\kern0pt}\ {\isachardoublequoteopen}sets\ N\ {\isacharequal}{\kern0pt}\ sets\ L{\isachardoublequoteclose}
\isanewline
\ \ ...
\isanewline
\isacommand{lemma}\isamarkupfalse%
\ McontMN{\isacharbrackleft}{\kern0pt}simp{\isacharcomma}{\kern0pt}intro{\isacharbrackright}{\kern0pt}{\isacharcolon}\isanewline \ \ {\kern0pt}\ {\isachardoublequoteopen}absolutely{\isacharunderscore}{\kern0pt}continuous\ {\isacharparenleft}{\kern0pt}sum{\isacharunderscore}{\kern0pt}measure\ M\ N{\isacharparenright}{\kern0pt}\ M{\isachardoublequoteclose}
\isanewline
\isacommand{lemma}\isamarkupfalse%
\ NcontMN{\isacharbrackleft}{\kern0pt}simp{\isacharcomma}{\kern0pt}intro{\isacharbrackright}{\kern0pt}{\isacharcolon}{\kern0pt}
\isanewline\ \ {\isachardoublequoteopen}absolutely{\isacharunderscore}{\kern0pt}continuous\ {\isacharparenleft}{\kern0pt}sum{\isacharunderscore}{\kern0pt}measure\ M\ N{\isacharparenright}{\kern0pt}\ N{\isachardoublequoteclose}
\isanewline
\ \ ...
\isanewline
\isacommand{definition}\isamarkupfalse%
\ {\isachardoublequoteopen}dM\ {\isacharequal}{\kern0pt}\ real{\isacharunderscore}{\kern0pt}RN{\isacharunderscore}{\kern0pt}deriv\ {\isacharparenleft}{\kern0pt}sum{\isacharunderscore}{\kern0pt}measure\ M\ N{\isacharparenright}{\kern0pt}\ M{\isachardoublequoteclose}
\isanewline
\isacommand{definition}\isamarkupfalse%
\ {\isachardoublequoteopen}dN\ {\isacharequal}{\kern0pt}\ real{\isacharunderscore}{\kern0pt}RN{\isacharunderscore}{\kern0pt}deriv\ {\isacharparenleft}{\kern0pt}sum{\isacharunderscore}{\kern0pt}measure\ M\ N{\isacharparenright}{\kern0pt}\ N{\isachardoublequoteclose}
\isanewline
\ \ ...
\isanewline
\isacommand{lemma}\isamarkupfalse%
\ dM{\isacharunderscore}{\kern0pt}less{\isacharunderscore}{\kern0pt}{\isadigit{1}}{\isacharunderscore}{\kern0pt}AE{\isacharcolon}{\kern0pt}\isanewline
\ \ \isakeyword{shows}\ {\isachardoublequoteopen}AE\ x\ in\ {\isacharparenleft}{\kern0pt}sum{\isacharunderscore}{\kern0pt}measure\ M\ N{\isacharparenright}{\kern0pt}{\isachardot}{\kern0pt}\ dM\ x\ {\isasymle}\ {\isadigit{1}}{\isachardoublequoteclose}
\isanewline
\isacommand{lemma}\isamarkupfalse%
\ dN{\isacharunderscore}{\kern0pt}less{\isacharunderscore}{\kern0pt}{\isadigit{1}}{\isacharunderscore}{\kern0pt}AE{\isacharcolon}{\kern0pt}\isanewline
\ \ \isakeyword{shows}\ {\isachardoublequoteopen}AE\ x\ in\ {\isacharparenleft}{\kern0pt}sum{\isacharunderscore}{\kern0pt}measure\ M\ N{\isacharparenright}{\kern0pt}{\isachardot}{\kern0pt}\ dN\ x\ {\isasymle}\ {\isadigit{1}}{\isachardoublequoteclose}
\isanewline
\isacommand{end}
\end{isabelle}
The assumptions \isa{{\isachardoublequoteopen}M\ {\isasymin}\ space\ {\isacharparenleft}{\kern0pt}prob{\isacharunderscore}{\kern0pt}algebra\ L{\isacharparenright}{\kern0pt}{\isachardoublequoteclose}} and \isa{{\isachardoublequoteopen}N\ {\isasymin}\ space\ {\isacharparenleft}{\kern0pt}prob{\isacharunderscore}{\kern0pt}algebra\ L{\isacharparenright}{\kern0pt}{\isachardoublequoteclose}}
in this locale corresponds to the mathematical condition $\mu,\nu \in \prob(X)$.

We also give a constant \isa{real{\isacharunderscore}{\kern0pt}RN{\isacharunderscore}{\kern0pt}deriv} for the construction of real-valued Radon-Nikod\'{y}m derivatives 
instead of usual \isa{ennreal}-valued \isa{RN{\isacharunderscore}{\kern0pt}deriv}.
The existence is provided in the last of the theory \isa{Radon\_Nikodym} in the standard library of Isabelle/HOL.
The constant \isa{real{\isacharunderscore}{\kern0pt}RN{\isacharunderscore}{\kern0pt}deriv} can be defined in the same way as usual $\isa{RN\_deriv}$.

\subsubsection{Formalization of Properties of $\Delta^\varepsilon$}
We formalize the properties of $\Delta^\varepsilon$ as in Figure \ref{fig:basic:DP_divergence}.
\begin{figure*}
\begin{minipage}[t]{0.9\columnwidth}
\begin{isabelle}
\isacommand{lemma}\isamarkupfalse%
\ DP{\isacharunderscore}{\kern0pt}divergence{\isacharunderscore}{\kern0pt}nonnegativity{\isacharcolon}{\kern0pt}\isanewline
\ \ \isakeyword{shows}\ {\isachardoublequoteopen}{\isadigit{0}}\ {\isasymle}\ DP{\isacharunderscore}{\kern0pt}divergence\ M\ N\ {\isasymepsilon}{\isachardoublequoteclose}
\end{isabelle}

\begin{isabelle}
\isacommand{lemma}\isamarkupfalse%
\ DP{\isacharunderscore}{\kern0pt}divergence{\isacharunderscore}{\kern0pt}monotonicity{\isacharcolon}{\kern0pt}\isanewline
\ \ \isakeyword{assumes}\ M{\isacharcolon}{\kern0pt}\ {\isachardoublequoteopen}M\ {\isasymin}\ space\ {\isacharparenleft}{\kern0pt}prob{\isacharunderscore}{\kern0pt}algebra\ L{\isacharparenright}{\kern0pt}{\isachardoublequoteclose}\isanewline
\ \ \ \ \isakeyword{and}\ N{\isacharcolon}{\kern0pt}\ {\isachardoublequoteopen}N\ {\isasymin}\ space\ {\isacharparenleft}{\kern0pt}prob{\isacharunderscore}{\kern0pt}algebra\ L{\isacharparenright}{\kern0pt}{\isachardoublequoteclose}\ \isakeyword{and}\ {\isachardoublequoteopen}{\isasymepsilon}{\isadigit{1}}\ {\isasymle}\ {\isasymepsilon}{\isadigit{2}}{\isachardoublequoteclose}\isanewline
\ \ \isakeyword{shows}\ {\isachardoublequoteopen}DP{\isacharunderscore}{\kern0pt}divergence\ M\ N\ {\isasymepsilon}{\isadigit{2}}\ {\isasymle}\ DP{\isacharunderscore}{\kern0pt}divergence\ M\ N\ {\isasymepsilon}{\isadigit{1}}{\isachardoublequoteclose}
\end{isabelle}

\begin{isabelle}
\isacommand{lemma}\isamarkupfalse%
\ DP{\isacharunderscore}{\kern0pt}divergence{\isacharunderscore}{\kern0pt}transitivity{\isacharcolon}{\kern0pt}\isanewline
\ \ \isakeyword{assumes}\ M{\isadigit{1}}{\isacharcolon}{\kern0pt}\ {\isachardoublequoteopen}M{\isadigit{1}}\ {\isasymin}\ space\ {\isacharparenleft}{\kern0pt}prob{\isacharunderscore}{\kern0pt}algebra\ L{\isacharparenright}{\kern0pt}{\isachardoublequoteclose}\isanewline
\ \ \ \ \isakeyword{and}\ M{\isadigit{2}}{\isacharcolon}{\kern0pt}\ {\isachardoublequoteopen}M{\isadigit{2}}\ {\isasymin}\ space\ {\isacharparenleft}{\kern0pt}prob{\isacharunderscore}{\kern0pt}algebra\ L{\isacharparenright}{\kern0pt}{\isachardoublequoteclose}\isanewline
\ \ \ \ \isakeyword{and}\ DP{\isadigit{1}}{\isacharcolon}{\kern0pt}\ {\isachardoublequoteopen}DP{\isacharunderscore}{\kern0pt}divergence\ M{\isadigit{1}}\ M{\isadigit{2}}\ {\isasymepsilon}{\isadigit{1}}\ {\isasymle}\ {\isadigit{0}}{\isachardoublequoteclose}\isanewline
\ \ \ \ \isakeyword{and}\ DP{\isadigit{2}}{\isacharcolon}{\kern0pt}\ {\isachardoublequoteopen}DP{\isacharunderscore}{\kern0pt}divergence\ M{\isadigit{2}}\ M{\isadigit{3}}\ {\isasymepsilon}{\isadigit{2}}\ {\isasymle}\ {\isadigit{0}}{\isachardoublequoteclose}\isanewline
\ \ \isakeyword{shows}\ {\isachardoublequoteopen}DP{\isacharunderscore}{\kern0pt}divergence\ M{\isadigit{1}}\ M{\isadigit{3}}\ {\isacharparenleft}{\kern0pt}{\isasymepsilon}{\isadigit{1}}{\isacharplus}{\kern0pt}{\isasymepsilon}{\isadigit{2}}{\isacharparenright}{\kern0pt}\ {\isasymle}\ {\isadigit{0}}{\isachardoublequoteclose}
\end{isabelle}
\end{minipage}
\hspace{0.04\columnwidth}
 \begin{minipage}[t]{1.1\columnwidth}
 \begin{isabelle}
\isacommand{lemma}\isamarkupfalse%
\ DP{\isacharunderscore}{\kern0pt}divergence{\isacharunderscore}{\kern0pt}reflexivity{\isacharcolon}{\kern0pt}\isanewline
\ \ \isakeyword{shows}\ {\isachardoublequoteopen}DP{\isacharunderscore}{\kern0pt}divergence\ M\ M\ {\isadigit{0}}\ {\isacharequal}{\kern0pt}\ {\isadigit{0}}{\isachardoublequoteclose}
\end{isabelle}
\vspace{0.5em}

\begin{isabelle}
\isacommand{proposition}\isamarkupfalse%
\ DP{\isacharunderscore}{\kern0pt}divergence{\isacharunderscore}{\kern0pt}composability{\isacharcolon}{\kern0pt}\isanewline
\ \ \isakeyword{assumes}\ M{\isacharcolon}{\kern0pt}\ {\isachardoublequoteopen}M\ {\isasymin}\ space\ {\isacharparenleft}{\kern0pt}prob{\isacharunderscore}{\kern0pt}algebra\ L{\isacharparenright}{\kern0pt}{\isachardoublequoteclose}\isanewline
\ \ \ \ \isakeyword{and}\ N{\isacharcolon}{\kern0pt}\ {\isachardoublequoteopen}N\ {\isasymin}\ space\ {\isacharparenleft}{\kern0pt}prob{\isacharunderscore}{\kern0pt}algebra\ L{\isacharparenright}{\kern0pt}{\isachardoublequoteclose}\isanewline
\ \ \ \ \isakeyword{and}\ f{\isacharcolon}{\kern0pt}\ {\isachardoublequoteopen}f\ {\isasymin}\ L\ {\isasymrightarrow}\isactrlsub M\ prob{\isacharunderscore}{\kern0pt}algebra\ K{\isachardoublequoteclose}\isanewline
\ \ \ \ \isakeyword{and}\ g{\isacharcolon}{\kern0pt}\ {\isachardoublequoteopen}g\ {\isasymin}\ L\ {\isasymrightarrow}\isactrlsub M\ prob{\isacharunderscore}{\kern0pt}algebra\ K{\isachardoublequoteclose}\isanewline
\ \ \ \ \isakeyword{and}\ div{\isadigit{1}}{\isacharcolon}{\kern0pt}\ {\isachardoublequoteopen}DP{\isacharunderscore}{\kern0pt}divergence\ M\ N\ {\isasymepsilon}{\isadigit{1}}\ {\isasymle}\ {\isacharparenleft}{\kern0pt}{\isasymdelta}{\isadigit{1}}\ {\isacharcolon}{\kern0pt}{\isacharcolon}{\kern0pt}\ real{\isacharparenright}{\kern0pt}{\isachardoublequoteclose}\isanewline
\ \ \ \ \isakeyword{and}\ div{\isadigit{2}}{\isacharcolon}{\kern0pt}\ {\isachardoublequoteopen}{\isasymforall}x\ {\isasymin}\ {\isacharparenleft}{\kern0pt}space\ L{\isacharparenright}{\kern0pt}{\isachardot}{\kern0pt}\ DP{\isacharunderscore}{\kern0pt}divergence\ {\isacharparenleft}{\kern0pt}f\ x{\isacharparenright}{\kern0pt}\ {\isacharparenleft}{\kern0pt}g\ x{\isacharparenright}{\kern0pt}\ {\isasymepsilon}{\isadigit{2}}\ {\isasymle}\ {\isacharparenleft}{\kern0pt}{\isasymdelta}{\isadigit{2}}\ {\isacharcolon}{\kern0pt}{\isacharcolon}{\kern0pt}\ real{\isacharparenright}{\kern0pt}{\isachardoublequoteclose}\isanewline
\ \ \ \ \isakeyword{and}\ {\isachardoublequoteopen}{\isadigit{0}}\ {\isasymle}\ {\isasymepsilon}{\isadigit{1}}{\isachardoublequoteclose}\ \isakeyword{and}\ {\isachardoublequoteopen}{\isadigit{0}}\ {\isasymle}\ {\isasymepsilon}{\isadigit{2}}{\isachardoublequoteclose}\isanewline
\ \ \isakeyword{shows}\ {\isachardoublequoteopen}DP{\isacharunderscore}{\kern0pt}divergence\ {\isacharparenleft}{\kern0pt}M\ {\isasymbind}\ f{\isacharparenright}{\kern0pt}\ {\isacharparenleft}{\kern0pt}N\ {\isasymbind}\ g{\isacharparenright}{\kern0pt}\ {\isacharparenleft}{\kern0pt}{\isasymepsilon}{\isadigit{1}}\ {\isacharplus}{\kern0pt}\ {\isasymepsilon}{\isadigit{2}}{\isacharparenright}{\kern0pt}\ {\isasymle}\ {\isasymdelta}{\isadigit{1}}\ {\isacharplus}{\kern0pt}\ {\isasymdelta}{\isadigit{2}}{\isachardoublequoteclose}
\end{isabelle}
\end{minipage} 
\caption{Formalization of the nonnegativity, reflexivity, monotonicity, composability and ``transitivity'' of $\Delta^\varepsilon$.}
\label{fig:basic:DP_divergence}
\end{figure*}
To prove them, we often use the locale \isa{comparable{\isacharunderscore}{\kern0pt}probability{\isacharunderscore}{\kern0pt}measures}. 

It is easy to give formal proofs of the first four lemmas.
In the formal proof of the composability, we follow the simplified proof given in this section.
We use \isa{sum{\isacharunderscore}{\kern0pt}measure\ M\ N} for the base measure $\pi$.
We also use \isa{dM} and \isa{dN} for ${d\mu}/{d\pi}$ and ${d\nu}/{d\pi}$. 
The evaluations of integrals in the composability proof contain many subtractions, which nonnegative integrals do not support well.
We thus choose the Lebesgue integral (\isa{lebesgue\_integral})
instead of the nonnegative integral (\isa{nn\_integral}).
They impose many integrability proofs, but almost all of them are automated.
We remark that we have chosen the finite base measure \isa{sum{\isacharunderscore}{\kern0pt}measure\ M\ N}, and density functions of \isa{dM} and \isa{dN} of \isa{M} and \isa{N} which are (essentially) bounded by $1$.
Hence, all integrations in the proof are ones of (essentially) bounded functions under the finite measure, which are easy to automate.

\section{Differential Privacy in Isabelle/HOL}
In this section, we formalize differential privacy and its properties shown in Section \ref{sec:DifferentialPrivacy}.
Later, in Section \ref{sec:formal_RNM}, we formalize the differential privacy of the report noisy max mechanism.

For simplicity, we write $n \in \Nat$ for the number $|\mathcal{X}|$ of data types, and assume $\mathcal{X} = \{0,\ldots,n-1\}$.
Throughout our formalization, we implement a tuple $(x_0,\ldots,x_{k-1})$ of $k$ as a list with length $k$.
Our library contains the construction of measurable spaces of finite lists and the measurability of basic list operations,
but we omit their details because of space limitations.

\subsection{Differential Privacy for General Adjacency}
To formalize the notion of differential privacy, 
we consider a general domain $X$ of datasets and a general adjacency relation $R^\mathrm{adj}$.
Later, we instantiate $X = \Nat^{|\mathcal{X}|}$ and $R^\mathrm{adj} = \{ (D,D') ~|~ \|D - D' \| \leq 1\}$.

We formalize differential privacy \emph{with respect to} general $X$ and $R^\mathrm{adj}$.
First, we give the below Isabelle/HOL-term corresponding to the following condition ($\mu,\nu \in \prob (Y)$):
\[
\forall S \in \Sigma_Y.~ \Pr[\mu \in S] \leq \exp(\varepsilon) \Pr[\nu \in S] + \delta.
\]
\begin{isabelle}
\isacommand{definition}\isamarkupfalse%
\ DP{\isacharunderscore}{\kern0pt}inequality{\isacharcolon}{\kern0pt}{\isacharcolon}{\kern0pt}\ {\isachardoublequoteopen}{\isacharprime}{\kern0pt}a\ measure\ {\isasymRightarrow}\ {\isacharprime}{\kern0pt}a\ measure\ {\isasymRightarrow}\ real\ {\isasymRightarrow}\ real\ {\isasymRightarrow}\ bool{\isachardoublequoteclose}\ \isakeyword{where}\isanewline
\ \ {\isachardoublequoteopen}DP{\isacharunderscore}{\kern0pt}inequality\ M\ N\ {\isasymepsilon}\ {\isasymdelta}\ {\isasymequiv}\ {\isacharparenleft}{\kern0pt}{\isasymforall}\ A\ {\isasymin}\ sets\ M{\isachardot}{\kern0pt}\ measure\ M\ A\ {\isasymle}\ {\isacharparenleft}{\kern0pt}exp\ {\isasymepsilon}{\isacharparenright}{\kern0pt}\ {\isacharasterisk}{\kern0pt}\ measure\ N\ A\ {\isacharplus}{\kern0pt}\ {\isasymdelta}{\isacharparenright}{\kern0pt}{\isachardoublequoteclose}
\end{isabelle}

Then, we give a formal definition of differential privacy:
\begin{isabelle}
\isacommand{definition}\isamarkupfalse%
\ differential{\isacharunderscore}{\kern0pt}privacy\ {\isacharcolon}{\kern0pt}{\isacharcolon}{\kern0pt}\ {\isachardoublequoteopen}{\isacharparenleft}{\kern0pt}{\isacharprime}{\kern0pt}a\ {\isasymRightarrow}\ {\isacharprime}{\kern0pt}b\ measure{\isacharparenright}{\kern0pt}\ {\isasymRightarrow}\ {\isacharparenleft}{\kern0pt}{\isacharprime}{\kern0pt}a\ rel{\isacharparenright}{\kern0pt}\ {\isasymRightarrow}\ real\ {\isasymRightarrow}\ real\ {\isasymRightarrow}\ bool\ {\isachardoublequoteclose}\isakeyword{where}\isanewline
\ \ {\isachardoublequoteopen}differential{\isacharunderscore}{\kern0pt}privacy\ M\ adj\ {\isasymepsilon}\ {\isasymdelta}\ {\isasymequiv}\ {\isasymforall}{\isacharparenleft}{\kern0pt}d{\isadigit{1}}{\isacharcomma}{\kern0pt}d{\isadigit{2}}{\isacharparenright}{\kern0pt}{\isasymin}adj{\isachardot}{\kern0pt}\ DP{\isacharunderscore}{\kern0pt}inequality\ {\isacharparenleft}{\kern0pt}M\ d{\isadigit{1}}{\isacharparenright}{\kern0pt}\ {\isacharparenleft}{\kern0pt}M\ d{\isadigit{2}}{\isacharparenright}{\kern0pt}\ {\isasymepsilon}\ {\isasymdelta}\ {\isasymand}\ DP{\isacharunderscore}{\kern0pt}inequality\ {\isacharparenleft}{\kern0pt}M\ d{\isadigit{2}}{\isacharparenright}{\kern0pt}\ {\isacharparenleft}{\kern0pt}M\ d{\isadigit{1}}{\isacharparenright}{\kern0pt}\ {\isasymepsilon}\ {\isasymdelta}{\isachardoublequoteclose}
\end{isabelle}
Here \isa{M} is a randomized algorithm, and \isa{adj} is for the adjacency relation $R^{\mathrm{adj}}$.
Since \isa{adj} may not be symmetric\footnote{A binary relation $R$ is symmetric if $R = R^{-1}$.
We later apply an asymmetric \isa{adj} in the formalization of report noisy max mechanism.}, both inequalities are needed in the formal definition.
If \isa{adj} is symmetric, it corresponds to the original definition (Def. \ref{def:DP}).
%

We give the formal proof of basic properties of differential privacy using the formalization \isa{DP\_Divergence} of $\Delta^\varepsilon$.
For example, 
the postprocessing property and
the composition theorem (Lemmas \ref{DP:basic:postprocessing} and \ref{DP:basic:sequential}) are formalized as follows:
\begin{isabelle}
\isacommand{proposition}\isamarkupfalse%
\ differential{\isacharunderscore}{\kern0pt}privacy{\isacharunderscore}{\kern0pt}postprocessing{\isacharcolon}{\kern0pt}\isanewline
\ \ \isakeyword{assumes}\ {\isachardoublequoteopen}{\isasymepsilon}\ {\isasymge}\ {\isadigit{0}}{\isachardoublequoteclose}\ \isakeyword{and}\ {\isachardoublequoteopen}{\isasymdelta}\ {\isasymge}\ {\isadigit{0}}{\isachardoublequoteclose}\isanewline
\ \ \ \ \isakeyword{and}\ {\isachardoublequoteopen}differential{\isacharunderscore}{\kern0pt}privacy\ M\ adj\ {\isasymepsilon}\ {\isasymdelta}{\isachardoublequoteclose}\isanewline
\ \ \ \ \isakeyword{and}\ M{\isacharcolon}{\kern0pt}\ {\isachardoublequoteopen}M\ {\isasymin}\ X\ {\isasymrightarrow}\isactrlsub M\ prob{\isacharunderscore}{\kern0pt}algebra\ R{\isachardoublequoteclose}\isanewline
\ \ \ \ \isakeyword{and}\ f{\isacharcolon}{\kern0pt}\ {\isachardoublequoteopen}f\ {\isasymin}\ R\ {\isasymrightarrow}\isactrlsub M\ prob{\isacharunderscore}{\kern0pt}algebra\ R{\isacharprime}{\kern0pt}{\isachardoublequoteclose}\ \isanewline
\ \ \ \ \isakeyword{and}\ {\isachardoublequoteopen}adj\ {\isasymsubseteq}\ space\ X\ {\isasymtimes}\ space\ X{\isachardoublequoteclose}\isanewline
\ \ \isakeyword{shows}\ {\isachardoublequoteopen}differential{\isacharunderscore}{\kern0pt}privacy\ {\isacharparenleft}{\kern0pt}{\isasymlambda}x{\isachardot}{\kern0pt}\ do{\isacharbraceleft}{\kern0pt}y\ {\isasymleftarrow}\ M\ x{\isacharsemicolon}{\kern0pt}\ f\ y{\isacharbraceright}{\kern0pt}{\isacharparenright}{\kern0pt}\ adj\ {\isasymepsilon}\ {\isasymdelta}{\isachardoublequoteclose}
\end{isabelle}
\begin{isabelle}
\isacommand{proposition}\isamarkupfalse%
\ differential{\isacharunderscore}{\kern0pt}privacy{\isacharunderscore}{\kern0pt}composition{\isacharunderscore}{\kern0pt}pair{\isacharcolon}{\kern0pt}\isanewline
\ \ \isakeyword{assumes}\ {\isachardoublequoteopen}{\isasymepsilon}\ {\isasymge}\ {\isadigit{0}}{\isachardoublequoteclose}\ \isakeyword{and}\ {\isachardoublequoteopen}{\isasymdelta}\ {\isasymge}\ {\isadigit{0}}{\isachardoublequoteclose}\isanewline
\ \ \ \ \isakeyword{and}\ {\isachardoublequoteopen}{\isasymepsilon}{\isacharprime}{\kern0pt}\ {\isasymge}\ {\isadigit{0}}{\isachardoublequoteclose}\ \isakeyword{and}\ {\isachardoublequoteopen}{\isasymdelta}{\isacharprime}{\kern0pt}\ {\isasymge}\ {\isadigit{0}}{\isachardoublequoteclose}\isanewline
\ \ \ \ \isakeyword{and}\ DPM{\isacharcolon}{\kern0pt}\ {\isachardoublequoteopen}differential{\isacharunderscore}{\kern0pt}privacy\ M\ adj\ {\isasymepsilon}\ {\isasymdelta}{\isachardoublequoteclose}\isanewline
\ \ \ \ \isakeyword{and}\ M{\isacharbrackleft}{\kern0pt}measurable{\isacharbrackright}{\kern0pt}{\isacharcolon}{\kern0pt}\ {\isachardoublequoteopen}M\ {\isasymin}\ X\ {\isasymrightarrow}\isactrlsub M\ prob{\isacharunderscore}{\kern0pt}algebra\ Y{\isachardoublequoteclose}\isanewline
\ \ \ \ \isakeyword{and}\ DPN{\isacharcolon}{\kern0pt}\ {\isachardoublequoteopen}differential{\isacharunderscore}{\kern0pt}privacy\ N\ adj\ {\isasymepsilon}{\isacharprime}{\kern0pt}\ {\isasymdelta}{\isacharprime}{\kern0pt}{\isachardoublequoteclose}\isanewline
\ \ \ \ \isakeyword{and}\ N{\isacharbrackleft}{\kern0pt}measurable{\isacharbrackright}{\kern0pt}{\isacharcolon}{\kern0pt}\ {\isachardoublequoteopen}N\ {\isasymin}\ X\ {\isasymrightarrow}\isactrlsub M\ prob{\isacharunderscore}{\kern0pt}algebra\ Z{\isachardoublequoteclose}\isanewline
\ \ \ \ \isakeyword{and}\ {\isachardoublequoteopen}adj\ {\isasymsubseteq}\ space\ X\ {\isasymtimes}\ space\ X{\isachardoublequoteclose}\isanewline
\ \ \isakeyword{shows}\ {\isachardoublequoteopen}differential{\isacharunderscore}{\kern0pt}privacy\ {\isacharparenleft}{\kern0pt}{\isasymlambda}x{\isachardot}{\kern0pt}\ do{\isacharbraceleft}{\kern0pt}y\ {\isasymleftarrow}\ M\ x{\isacharsemicolon}{\kern0pt}\ z\ {\isasymleftarrow}\ N\ x{\isacharsemicolon}{\kern0pt}\ return\ {\isacharparenleft}{\kern0pt}Y\ {\isasymOtimes}\isactrlsub M\ Z{\isacharparenright}{\kern0pt}\ {\isacharparenleft}{\kern0pt}y{\isacharcomma}{\kern0pt}z{\isacharparenright}{\kern0pt}{\isacharbraceright}{\kern0pt}\ {\isacharparenright}{\kern0pt}\ adj\ {\isacharparenleft}{\kern0pt}{\isasymepsilon}\ {\isacharplus}{\kern0pt}\ {\isasymepsilon}{\isacharprime}{\kern0pt}{\isacharparenright}{\kern0pt}\ {\isacharparenleft}{\kern0pt}{\isasymdelta}\ {\isacharplus}{\kern0pt}\ {\isasymdelta}{\isacharprime}{\kern0pt}{\isacharparenright}{\kern0pt}{\isachardoublequoteclose}
\end{isabelle}

Borrowing ideas from relational program logics reasoning about differential privacy, 
we also give the following \emph{preprocessing} lemma which transfers DP with respect to some $X$ and $R^\mathrm{adj}$ to DP with respect to another $X'$ and $R^\mathrm{adj'}$.
It is convenient to formalize Laplace mechanism, and to split the formalization of report noisy max mechanism into the main part and counting query.

\begin{isabelle}
\isacommand{lemma}\isamarkupfalse%
\ differential{\isacharunderscore}{\kern0pt}privacy{\isacharunderscore}{\kern0pt}preprocessing{\isacharcolon}{\kern0pt}\isanewline
\ \ \isakeyword{assumes}\ {\isachardoublequoteopen}{\isasymepsilon}\ {\isasymge}\ {\isadigit{0}}{\isachardoublequoteclose}\ \isakeyword{and}\ {\isachardoublequoteopen}{\isasymdelta}\ {\isasymge}\ {\isadigit{0}}{\isachardoublequoteclose}\isanewline
\ \ \ \ \isakeyword{and}\ {\isachardoublequoteopen}differential{\isacharunderscore}{\kern0pt}privacy\ M\ adj\ {\isasymepsilon}\ {\isasymdelta}{\isachardoublequoteclose}\isanewline
\ \ \ \ \isakeyword{and}\ f{\isacharcolon}{\kern0pt}\ {\isachardoublequoteopen}f\ {\isasymin}\ X{\isacharprime}{\kern0pt}\ {\isasymrightarrow}\isactrlsub M\ X{\isachardoublequoteclose}\ \isanewline
\ \ \ \ \isakeyword{and}\ ftr{\isacharcolon}{\kern0pt}\ {\isachardoublequoteopen}{\isasymforall}{\isacharparenleft}{\kern0pt}x{\isacharcomma}{\kern0pt}y{\isacharparenright}{\kern0pt}\ {\isasymin}\ adj{\isacharprime}{\kern0pt}{\isachardot}{\kern0pt}\ {\isacharparenleft}{\kern0pt}f\ x{\isacharcomma}{\kern0pt}\ f\ y{\isacharparenright}{\kern0pt}\ {\isasymin}\ adj{\isachardoublequoteclose}\isanewline
\ \ \ \ \isakeyword{and}\ {\isachardoublequoteopen}adj\ {\isasymsubseteq}\ space\ X\ {\isasymtimes}\ space\ X{\isachardoublequoteclose}\isanewline
\ \ \ \ \isakeyword{and}\ {\isachardoublequoteopen}adj{\isacharprime}{\kern0pt}\ {\isasymsubseteq}\ space\ X{\isacharprime}{\kern0pt}\ {\isasymtimes}\ space\ X{\isacharprime}{\kern0pt}{\isachardoublequoteclose}\isanewline
\ \ \isakeyword{shows}\ {\isachardoublequoteopen}differential{\isacharunderscore}{\kern0pt}privacy\ {\isacharparenleft}{\kern0pt}M\ o\ f{\isacharparenright}{\kern0pt}\ adj{\isacharprime}{\kern0pt}\ {\isasymepsilon}\ {\isasymdelta}{\isachardoublequoteclose}
\end{isabelle}

\subsection{Laplace Mechanism}

We next formalize the Laplace mechanism.

\subsubsection{Laplace Distribution}

To formalize the Laplace mechanism, we first implement the density function and cumulative distribution function of the Laplace distribution.

\begin{isabelle}
\isacommand{definition}\isamarkupfalse%
\ laplace{\isacharunderscore}{\kern0pt}density\ {\isacharcolon}{\kern0pt}{\isacharcolon}{\kern0pt}\ {\isachardoublequoteopen}real\ {\isasymRightarrow}\ real\ {\isasymRightarrow}\ real\ {\isasymRightarrow}\ real{\isachardoublequoteclose}\ \isakeyword{where}\isanewline
\ \ {\isachardoublequoteopen}laplace{\isacharunderscore}{\kern0pt}density\ l\ m\ x\ {\isacharequal}{\kern0pt}\ {\isacharparenleft}{\kern0pt}if\ l\ {\isachargreater}{\kern0pt}\ {\isadigit{0}}\ then\ exp{\isacharparenleft}{\kern0pt}{\isacharminus}{\kern0pt}{\isasymbar}x\ {\isacharminus}{\kern0pt}\ m{\isasymbar}\ {\isacharslash}{\kern0pt}\ l{\isacharparenright}{\kern0pt}\ {\isacharslash}{\kern0pt}\ {\isacharparenleft}{\kern0pt}{\isadigit{2}}\ {\isacharasterisk}{\kern0pt}\ l{\isacharparenright}{\kern0pt}\ else\ {\isadigit{0}}{\isacharparenright}{\kern0pt}{\isachardoublequoteclose}
\end{isabelle}
\begin{isabelle}
\isacommand{definition}\isamarkupfalse%
\ laplace{\isacharunderscore}{\kern0pt}CDF\ {\isacharcolon}{\kern0pt}{\isacharcolon}{\kern0pt}\ {\isachardoublequoteopen}real\ {\isasymRightarrow}\ real\ {\isasymRightarrow}\ real\ {\isasymRightarrow}\ real{\isachardoublequoteclose}\ \isakeyword{where}\isanewline
\ \ {\isachardoublequoteopen}laplace{\isacharunderscore}{\kern0pt}CDF\ l\ m\ x\ {\isacharequal}{\kern0pt}\ {\isacharparenleft}{\kern0pt}if\ {\isadigit{0}}\ {\isacharless}{\kern0pt}\ l\ then\ if\ x\ {\isacharless}{\kern0pt}\ m\ then\ exp\ {\isacharparenleft}{\kern0pt}{\isacharparenleft}{\kern0pt}x\ {\isacharminus}{\kern0pt}\ m{\isacharparenright}{\kern0pt}\ {\isacharslash}{\kern0pt}\ l{\isacharparenright}{\kern0pt}\ {\isacharslash}{\kern0pt}\ {\isadigit{2}}\ else\ {\isadigit{1}}\ {\isacharminus}{\kern0pt}\ exp\ {\isacharparenleft}{\kern0pt}{\isacharminus}{\kern0pt}\ {\isacharparenleft}{\kern0pt}x\ {\isacharminus}{\kern0pt}\ m{\isacharparenright}{\kern0pt}\ {\isacharslash}{\kern0pt}\ l{\isacharparenright}{\kern0pt}\ {\isacharslash}{\kern0pt}\ {\isadigit{2}}\ else\ {\isadigit{0}}{\isacharparenright}{\kern0pt}{\isachardoublequoteclose}
\end{isabelle}
Then, \isa{density\ lborel\ {\isacharparenleft}{\kern0pt}laplace{\isacharunderscore}{\kern0pt}density\ b\ z{\isacharparenright}{\kern0pt}{\isacharparenright}}
is an implementation of the Laplace distribution $\mathrm{Lap}(b,z)$ in Isabelle/HOL.

We then formalize the properties of the Laplace distribution. 
For example, the lemma below shows that \isa{laplace{\isacharunderscore}{\kern0pt}CDF\ l\ m} is actually the cumulative distribution function.
\begin{isabelle}
\isacommand{lemma}\isamarkupfalse%
\ emeasure{\isacharunderscore}{\kern0pt}laplace{\isacharunderscore}{\kern0pt}density{\isacharcolon}{\kern0pt}\isanewline
\ \ \isakeyword{assumes}\ {\isachardoublequoteopen}{\isadigit{0}}\ {\isacharless}{\kern0pt}\ l{\isachardoublequoteclose}\isanewline
\ \ \isakeyword{shows}\ {\isachardoublequoteopen}emeasure\ {\isacharparenleft}{\kern0pt}density\ lborel\ {\isacharparenleft}{\kern0pt}laplace{\isacharunderscore}{\kern0pt}density\ l\ m{\isacharparenright}{\kern0pt}{\isacharparenright}{\kern0pt}\ {\isacharbraceleft}{\kern0pt}{\isachardot}{\kern0pt}{\isachardot}{\kern0pt}\ a{\isacharbraceright}{\kern0pt}\ {\isacharequal}{\kern0pt}\ laplace{\isacharunderscore}{\kern0pt}CDF\ l\ m\ a{\isachardoublequoteclose}
\end{isabelle}

We used the formalization of Gaussian distribution in the standard library of Isabelle/HOL as a reference, and 
we also applied the fundamental theorem of calculus for improper integrals in the standard library.
Our current formalization is rather straightforward. 
It might be shortened by applying lemmas in the AFP entry \isa{Laplace Transform}~\cite{Laplace_Transform-AFP}.

\subsubsection{Single Laplace Noise}
Next, we formalize the mappings $(b,m) \mapsto \mathrm{Lap}(b,m)$ and $b \mapsto \mathrm{Lap}(b)$ ($ = \mathrm{Lap}(b,0)$).
We suppose $\mathrm{Lap}(b,m) = \return 0$ for $b \leq 0$.
We give formal proofs of their measurability (for fixed \isa{b}) and Lemma \ref{lem:Lap:shifting}.

\begin{isabelle}
\isacommand{definition}\isamarkupfalse%
\ Lap{\isacharunderscore}{\kern0pt}dist\ {\isacharcolon}{\kern0pt}{\isacharcolon}{\kern0pt}\ {\isachardoublequoteopen}real\ {\isasymRightarrow}\ real\ {\isasymRightarrow}\ real\ measure{\isachardoublequoteclose}\ \isakeyword{where}\isanewline
\ \ {\isachardoublequoteopen}Lap{\isacharunderscore}{\kern0pt}dist\ b\ {\isasymmu}\ {\isacharequal}{\kern0pt}\ {\isacharparenleft}{\kern0pt}if\ b\ {\isasymle}\ {\isadigit{0}}\ then\ return\ borel\ {\isasymmu}\ else\ density\ lborel\ {\isacharparenleft}{\kern0pt}laplace{\isacharunderscore}{\kern0pt}density\ b\ {\isasymmu}{\isacharparenright}{\kern0pt}{\isacharparenright}{\kern0pt}{\isachardoublequoteclose}
\end{isabelle}
\begin{isabelle}
\isacommand{lemma}\isamarkupfalse%
\ measurable{\isacharunderscore}{\kern0pt}Lap{\isacharunderscore}{\kern0pt}dist{\isacharbrackleft}{\kern0pt}measurable{\isacharbrackright}{\kern0pt}{\isacharcolon}{\kern0pt}\isanewline
\ \ \isakeyword{shows}\ {\isachardoublequoteopen}Lap{\isacharunderscore}{\kern0pt}dist\ b\ {\isasymin}\ borel\ {\isasymrightarrow}\isactrlsub M\ prob{\isacharunderscore}{\kern0pt}algebra\ borel{\isachardoublequoteclose}
\end{isabelle}
\begin{isabelle}
\isacommand{definition}\isamarkupfalse%
\ {\isachardoublequoteopen}Lap{\isacharunderscore}{\kern0pt}dist{\isadigit{0}}\ b\ {\isasymequiv}\ Lap{\isacharunderscore}{\kern0pt}dist\ b\ {\isadigit{0}}\ {\isachardoublequoteclose}
\end{isabelle}
\begin{isabelle}
\isacommand{lemma}\isamarkupfalse%
\ Lap{\isacharunderscore}{\kern0pt}dist{\isacharunderscore}{\kern0pt}def{\isadigit{2}}{\isacharcolon}{\kern0pt}\isanewline
\ \ \isakeyword{shows}\ {\isachardoublequoteopen}Lap{\isacharunderscore}{\kern0pt}dist\ b\ x\ {\isacharequal}{\kern0pt}\ do{\isacharbraceleft}{\kern0pt}r\ {\isasymleftarrow}\ Lap{\isacharunderscore}{\kern0pt}dist{\isadigit{0}}\ b{\isacharsemicolon}{\kern0pt}\ return\ borel\ {\isacharparenleft}{\kern0pt}x\ {\isacharplus}{\kern0pt}\ r{\isacharparenright}{\kern0pt}{\isacharbraceright}{\kern0pt}{\isachardoublequoteclose}
\end{isabelle}

We now formalize Lemma \ref{lem:Lap:DP:finer} in the divergence form.
\begin{isabelle}
\isacommand{proposition}\isamarkupfalse%
\ DP{\isacharunderscore}{\kern0pt}divergence{\isacharunderscore}{\kern0pt}Lap{\isacharunderscore}{\kern0pt}dist{\isacharprime}{\kern0pt}{\isacharcolon}{\kern0pt}\isanewline
\ \ \isakeyword{assumes}\ {\isachardoublequoteopen}b\ {\isachargreater}{\kern0pt}\ {\isadigit{0}}{\isachardoublequoteclose}\ \isakeyword{and}\ {\isachardoublequoteopen}{\isasymbar}\ x\ {\isacharminus}{\kern0pt}\ y\ {\isasymbar}\ {\isasymle}\ r{\isachardoublequoteclose}\isanewline
\ \ \isakeyword{shows}\ {\isachardoublequoteopen}DP{\isacharunderscore}{\kern0pt}divergence\ {\isacharparenleft}{\kern0pt}Lap{\isacharunderscore}{\kern0pt}dist\ b\ x{\isacharparenright}{\kern0pt}\ {\isacharparenleft}{\kern0pt}Lap{\isacharunderscore}{\kern0pt}dist\ b\ y{\isacharparenright}{\kern0pt}\ {\isacharparenleft}{\kern0pt}r\ {\isacharslash}{\kern0pt}\ b{\isacharparenright}{\kern0pt}\ {\isasymle}\ {\isacharparenleft}{\kern0pt}{\isadigit{0}}\ {\isacharcolon}{\kern0pt}{\isacharcolon}{\kern0pt}\ real{\isacharparenright}{\kern0pt}{\isachardoublequoteclose}
\end{isabelle}

\subsubsection{Bundled Laplace Noise}\label{sec:Lap_dist_list}
Next, we implement $\mathrm{Lap}^m (b)$ and $\mathrm{Lap}^m(b,\vec{x})$ for a fixed $b$ as the following premitive recursive functions:
\begin{isabelle}
\isacommand{context}\isamarkupfalse%
\isanewline
\ \ \isakeyword{fixes}\ b{\isacharcolon}{\kern0pt}{\isacharcolon}{\kern0pt}real\ \isanewline
\isakeyword{begin}
\end{isabelle}
\begin{isabelle}
\isacommand{primrec}\isamarkupfalse%
\ Lap{\isacharunderscore}{\kern0pt}dist{\isadigit{0}}{\isacharunderscore}{\kern0pt}list\ {\isacharcolon}{\kern0pt}{\isacharcolon}{\kern0pt}\ {\isachardoublequoteopen}nat\ {\isasymRightarrow}\ {\isacharparenleft}{\kern0pt}real\ list{\isacharparenright}{\kern0pt}\ measure{\isachardoublequoteclose}\ \isakeyword{where}\isanewline
\ \ {\isachardoublequoteopen}Lap{\isacharunderscore}{\kern0pt}dist{\isadigit{0}}{\isacharunderscore}{\kern0pt}list\ {\isadigit{0}}\ {\isacharequal}{\kern0pt}\ return\ {\isacharparenleft}{\kern0pt}listM\ borel{\isacharparenright}{\kern0pt}\ {\isacharbrackleft}{\kern0pt}{\isacharbrackright}{\kern0pt}{\isachardoublequoteclose}\ {\isacharbar}{\kern0pt}\isanewline
\ \ {\isachardoublequoteopen}Lap{\isacharunderscore}{\kern0pt}dist{\isadigit{0}}{\isacharunderscore}{\kern0pt}list\ {\isacharparenleft}{\kern0pt}Suc\ n{\isacharparenright}{\kern0pt}\ {\isacharequal}{\kern0pt}\ do{\isacharbraceleft}{\kern0pt}x{\isadigit{1}}\ {\isasymleftarrow}\ {\isacharparenleft}{\kern0pt}Lap{\isacharunderscore}{\kern0pt}dist{\isadigit{0}}\ b{\isacharparenright}{\kern0pt}{\isacharsemicolon}{\kern0pt}\ x{\isadigit{2}}\ {\isasymleftarrow}\ {\isacharparenleft}{\kern0pt}Lap{\isacharunderscore}{\kern0pt}dist{\isadigit{0}}{\isacharunderscore}{\kern0pt}list\ n{\isacharparenright}{\kern0pt}{\isacharsemicolon}{\kern0pt}\ return\ {\isacharparenleft}{\kern0pt}listM\ borel{\isacharparenright}{\kern0pt}\ {\isacharparenleft}{\kern0pt}x{\isadigit{1}}\ {\isacharhash}{\kern0pt}\ x{\isadigit{2}}{\isacharparenright}{\kern0pt}{\isacharbraceright}{\kern0pt}{\isachardoublequoteclose}%
\end{isabelle}
\begin{isabelle}
\isacommand{primrec}\isamarkupfalse%
\ Lap{\isacharunderscore}{\kern0pt}dist{\isacharunderscore}{\kern0pt}list\ {\isacharcolon}{\kern0pt}{\isacharcolon}{\kern0pt}\ {\isachardoublequoteopen}real\ list\ {\isasymRightarrow}\ {\isacharparenleft}{\kern0pt}real\ list{\isacharparenright}{\kern0pt}\ measure{\isachardoublequoteclose}\ \isakeyword{where}\isanewline
\ \ {\isachardoublequoteopen}Lap{\isacharunderscore}{\kern0pt}dist{\isacharunderscore}{\kern0pt}list\ {\isacharbrackleft}{\kern0pt}{\isacharbrackright}{\kern0pt}\ {\isacharequal}{\kern0pt}\ return\ {\isacharparenleft}{\kern0pt}listM\ borel{\isacharparenright}{\kern0pt}\ {\isacharbrackleft}{\kern0pt}{\isacharbrackright}{\kern0pt}{\isachardoublequoteclose}{\isacharbar}{\kern0pt}\isanewline
\ \ {\isachardoublequoteopen}Lap{\isacharunderscore}{\kern0pt}dist{\isacharunderscore}{\kern0pt}list\ {\isacharparenleft}{\kern0pt}x\ {\isacharhash}{\kern0pt}\ xs{\isacharparenright}{\kern0pt}\ {\isacharequal}{\kern0pt}\ do{\isacharbraceleft}{\kern0pt}x{\isadigit{1}}\ {\isasymleftarrow}\ {\isacharparenleft}{\kern0pt}Lap{\isacharunderscore}{\kern0pt}dist\ b\ x{\isacharparenright}{\kern0pt}{\isacharsemicolon}{\kern0pt}\ x{\isadigit{2}}\ {\isasymleftarrow}\ {\isacharparenleft}{\kern0pt}Lap{\isacharunderscore}{\kern0pt}dist{\isacharunderscore}{\kern0pt}list\ xs{\isacharparenright}{\kern0pt}{\isacharsemicolon}{\kern0pt}\ return\ {\isacharparenleft}{\kern0pt}listM\ borel{\isacharparenright}{\kern0pt}\ {\isacharparenleft}{\kern0pt}x{\isadigit{1}}\ {\isacharhash}{\kern0pt}\ x{\isadigit{2}}{\isacharparenright}{\kern0pt}{\isacharbraceright}{\kern0pt}{\isachardoublequoteclose}
\end{isabelle}
We formalize the equation (\ref{eq:LapMech_def3}) in Section \ref{sec:laplace_mechanism_sensitivity}. 
\begin{isabelle}
\isacommand{lemma}\isamarkupfalse%
\ Lap{\isacharunderscore}{\kern0pt}dist{\isacharunderscore}{\kern0pt}list{\isacharunderscore}{\kern0pt}def{\isadigit{2}}{\isacharcolon}{\kern0pt}\isanewline
\ \ \isakeyword{shows}\ {\isachardoublequoteopen}Lap{\isacharunderscore}{\kern0pt}dist{\isacharunderscore}{\kern0pt}list\ xs\ {\isacharequal}{\kern0pt}\ do{\isacharbraceleft}{\kern0pt}ys\ {\isasymleftarrow}\ {\isacharparenleft}{\kern0pt}Lap{\isacharunderscore}{\kern0pt}dist{\isadigit{0}}{\isacharunderscore}{\kern0pt}list\ {\isacharparenleft}{\kern0pt}length\ xs{\isacharparenright}{\kern0pt}{\isacharparenright}{\kern0pt}{\isacharsemicolon}{\kern0pt}\ return\ {\isacharparenleft}{\kern0pt}listM\ borel{\isacharparenright}{\kern0pt}\ {\isacharparenleft}{\kern0pt}map{\isadigit{2}}\ {\isacharparenleft}{\kern0pt}{\isacharplus}{\kern0pt}{\isacharparenright}{\kern0pt}\ xs\ ys{\isacharparenright}{\kern0pt}{\isacharbraceright}{\kern0pt}{\isachardoublequoteclose}
\end{isabelle}
In the proof of the measurability of \isa{Lap{\isacharunderscore}{\kern0pt}dist{\isacharunderscore}{\kern0pt}list}, we use our library of measurable spaces of lists (Section \ref{sec:list_space}), and apply the measurability of \isa{rec\_list}.

By applying \isa{DP{\isacharunderscore}{\kern0pt}divergence{\isacharunderscore}{\kern0pt}Lap{\isacharunderscore}{\kern0pt}dist{\isacharprime}} repeatedly, we obtain the following lemma, an essential part of Lemma \ref{prop.DP.LapMech}:
\begin{isabelle}
\isacommand{lemma}\isamarkupfalse%
\ DP{\isacharunderscore}{\kern0pt}Lap{\isacharunderscore}{\kern0pt}dist{\isacharunderscore}{\kern0pt}list{\isacharcolon}{\kern0pt}\isanewline
\ \ \isakeyword{fixes}\ xs\ ys\ {\isacharcolon}{\kern0pt}{\isacharcolon}{\kern0pt}\ {\isachardoublequoteopen}real\ list{\isachardoublequoteclose}\ \isakeyword{and}\ n\ {\isacharcolon}{\kern0pt}{\isacharcolon}{\kern0pt}\ nat\ \isakeyword{and}\ r\ {\isacharcolon}{\kern0pt}{\isacharcolon}{\kern0pt}\ real\ \isakeyword{and}\ b{\isacharcolon}{\kern0pt}{\isacharcolon}{\kern0pt}real\isanewline
\ \ \isakeyword{assumes}\ posb{\isacharcolon}{\kern0pt}\ {\isachardoublequoteopen}b\ {\isachargreater}{\kern0pt}\ {\isacharparenleft}{\kern0pt}{\isadigit{0}}\ {\isacharcolon}{\kern0pt}{\isacharcolon}{\kern0pt}\ real{\isacharparenright}{\kern0pt}{\isachardoublequoteclose}\isanewline
\ \ \ \ \isakeyword{and}\ {\isachardoublequoteopen}length\ xs\ {\isacharequal}{\kern0pt}\ n{\isachardoublequoteclose}\ \isakeyword{and}\ {\isachardoublequoteopen}length\ ys\ {\isacharequal}{\kern0pt}\ n{\isachardoublequoteclose}\isanewline
\ \ \ \ \isakeyword{and}\ adj{\isacharcolon}{\kern0pt}\ {\isachardoublequoteopen}{\isacharparenleft}{\kern0pt}{\isasymSum}\ i{\isasymin}{\isacharbraceleft}{\kern0pt}{\isadigit{1}}{\isachardot}{\kern0pt}{\isachardot}{\kern0pt}n{\isacharbraceright}{\kern0pt}{\isachardot}{\kern0pt}\ {\isasymbar}\ nth\ xs\ {\isacharparenleft}{\kern0pt}i{\isacharminus}{\kern0pt}{\isadigit{1}}{\isacharparenright}{\kern0pt}\ {\isacharminus}{\kern0pt}\ nth\ ys\ {\isacharparenleft}{\kern0pt}i{\isacharminus}{\kern0pt}{\isadigit{1}}{\isacharparenright}{\kern0pt}\ {\isasymbar}{\isacharparenright}{\kern0pt}\ {\isasymle}\ r{\isachardoublequoteclose}\isanewline
\ \ \ \ \isakeyword{and}\ posr{\isacharcolon}{\kern0pt}\ {\isachardoublequoteopen}r\ {\isasymge}\ {\isadigit{0}}{\isachardoublequoteclose}\isanewline
\ \ \isakeyword{shows}\ {\isachardoublequoteopen}DP{\isacharunderscore}{\kern0pt}divergence\ {\isacharparenleft}{\kern0pt}Lap{\isacharunderscore}{\kern0pt}dist{\isacharunderscore}{\kern0pt}list\ b\ xs{\isacharparenright}{\kern0pt}\ {\isacharparenleft}{\kern0pt}Lap{\isacharunderscore}{\kern0pt}dist{\isacharunderscore}{\kern0pt}list\ b\ ys{\isacharparenright}{\kern0pt}\ {\isacharparenleft}{\kern0pt}r\ {\isacharslash}{\kern0pt}\ b{\isacharparenright}{\kern0pt}\ {\isasymle}\ {\isadigit{0}}{\isachardoublequoteclose}
\end{isabelle}

\subsubsection{Laplace Mechanism}\label{sec:Lap_mechanism}

We finally formalize the Laplace mechanism in Isabelle/HOL.
Again, we first consider general $X$ and $R^\mathrm{adj}$.
Later we instantiate $X = \Nat^{|\mathcal{X}|}$ and $R^\mathrm{adj} = \{ (D,D') ~|~ \|D - D' \| \leq 1\}$.
We introduce a locale for the proof.
\begin{isabelle}
\isacommand{locale}\isamarkupfalse%
\ Lap{\isacharunderscore}{\kern0pt}Mechanism{\isacharunderscore}{\kern0pt}list\ {\isacharequal}{\kern0pt}\isanewline
\ \ \isakeyword{fixes}\ X{\isacharcolon}{\kern0pt}{\isacharcolon}{\kern0pt}{\isachardoublequoteopen}{\isacharprime}{\kern0pt}a\ measure{\isachardoublequoteclose}\isanewline
\ \ \ \ \isakeyword{and}\ f{\isacharcolon}{\kern0pt}{\isacharcolon}{\kern0pt}{\isachardoublequoteopen}{\isacharprime}{\kern0pt}a\ {\isasymRightarrow}\ real\ list{\isachardoublequoteclose}\isanewline
\ \ \ \ \isakeyword{and}\ adj{\isacharcolon}{\kern0pt}{\isacharcolon}{\kern0pt}{\isachardoublequoteopen}{\isacharprime}{\kern0pt}a\ rel{\isachardoublequoteclose}\isanewline
\ \ \ \ \isakeyword{and}\ m{\isacharcolon}{\kern0pt}{\isacharcolon}{\kern0pt}nat\ \isanewline
\ \ \isakeyword{assumes}\ {\isacharbrackleft}{\kern0pt}measurable{\isacharbrackright}{\kern0pt}{\isacharcolon}{\kern0pt}\ {\isachardoublequoteopen}f\ {\isasymin}\ X\ {\isasymrightarrow}\isactrlsub M\ listM\ borel{\isachardoublequoteclose}\isanewline
\ \ \ \ \isakeyword{and}\ len{\isacharcolon}{\kern0pt}\ {\isachardoublequoteopen}{\isasymAnd}\ x{\isachardot}{\kern0pt}\ x\ {\isasymin}\ space\ X\ {\isasymLongrightarrow}\ length\ {\isacharparenleft}{\kern0pt}f\ x{\isacharparenright}{\kern0pt}\ {\isacharequal}{\kern0pt}\ m{\isachardoublequoteclose}\isanewline
\ \ \ \ \isakeyword{and}\ adj{\isacharcolon}{\kern0pt}\ {\isachardoublequoteopen}adj\ {\isasymsubseteq}\ space\ X\ {\isasymtimes}\ space\ X{\isachardoublequoteclose}\isanewline
\isakeyword{begin}
\end{isabelle}
\begin{isabelle}
\isacommand{definition}\isamarkupfalse%
\ sensitivity{\isacharcolon}{\kern0pt}{\isacharcolon}{\kern0pt}\ ereal\ \isakeyword{where}\isanewline
\ \ {\isachardoublequoteopen}sensitivity\ {\isacharequal}{\kern0pt}\ Sup{\isacharbraceleft}{\kern0pt}\ ereal\ {\isacharparenleft}{\kern0pt}\ {\isasymSum}\ i{\isasymin}{\isacharbraceleft}{\kern0pt}{\isadigit{1}}{\isachardot}{\kern0pt}{\isachardot}{\kern0pt}m{\isacharbraceright}{\kern0pt}{\isachardot}{\kern0pt}\ {\isasymbar}\ nth\ {\isacharparenleft}{\kern0pt}f\ x{\isacharparenright}{\kern0pt}\ {\isacharparenleft}{\kern0pt}i{\isacharminus}{\kern0pt}{\isadigit{1}}{\isacharparenright}{\kern0pt}\ {\isacharminus}{\kern0pt}\ nth\ {\isacharparenleft}{\kern0pt}f\ y{\isacharparenright}{\kern0pt}\ {\isacharparenleft}{\kern0pt}i{\isacharminus}{\kern0pt}{\isadigit{1}}{\isacharparenright}{\kern0pt}\ {\isasymbar}{\isacharparenright}{\kern0pt}\ {\isacharbar}{\kern0pt}\ x\ y{\isacharcolon}{\kern0pt}{\isacharcolon}{\kern0pt}{\isacharprime}{\kern0pt}a{\isachardot}{\kern0pt}\ x\ {\isasymin}\ space\ X\ {\isasymand}\ y\ {\isasymin}\ space\ X\ {\isasymand}\ {\isacharparenleft}{\kern0pt}x{\isacharcomma}{\kern0pt}y{\isacharparenright}{\kern0pt}\ {\isasymin}\ adj{\isacharbraceright}{\kern0pt}{\isachardoublequoteclose}
\end{isabelle}
\begin{isabelle}
\isacommand{definition}\isamarkupfalse%
\ LapMech{\isacharunderscore}{\kern0pt}list{\isacharcolon}{\kern0pt}{\isacharcolon}{\kern0pt}{\isachardoublequoteopen}real\ {\isasymRightarrow}\ {\isacharprime}{\kern0pt}a\ {\isasymRightarrow}\ {\isacharparenleft}{\kern0pt}real\ list{\isacharparenright}{\kern0pt}\ measure{\isachardoublequoteclose}\ \isakeyword{where}\isanewline
{\isachardoublequoteopen}LapMech{\isacharunderscore}{\kern0pt}list\ {\isasymepsilon}\ x\ {\isacharequal}{\kern0pt}\ Lap{\isacharunderscore}{\kern0pt}dist{\isacharunderscore}{\kern0pt}list\ {\isacharparenleft}{\kern0pt}{\isacharparenleft}{\kern0pt}real{\isacharunderscore}{\kern0pt}of{\isacharunderscore}{\kern0pt}ereal\ sensitivity{\isacharparenright}{\kern0pt}\ {\isacharslash}{\kern0pt}\ {\isasymepsilon}{\isacharparenright}{\kern0pt}\ {\isacharparenleft}{\kern0pt}f\ x{\isacharparenright}{\kern0pt}{\isachardoublequoteclose}
\end{isabelle}
\begin{isabelle}
\isacommand{lemma}\isamarkupfalse%
\ LapMech{\isacharunderscore}{\kern0pt}list{\isacharunderscore}{\kern0pt}def{\isadigit{2}}{\isacharcolon}{\kern0pt}\isanewline
\ \ \isakeyword{assumes}\ {\isachardoublequoteopen}x\ {\isasymin}\ space\ X{\isachardoublequoteclose}\isanewline
\ \ \isakeyword{shows}\ {\isachardoublequoteopen}LapMech{\isacharunderscore}{\kern0pt}list\ {\isasymepsilon}\ x\ {\isacharequal}{\kern0pt}\ do{\isacharbraceleft}{\kern0pt}\ xs\ {\isasymleftarrow}\ Lap{\isacharunderscore}{\kern0pt}dist{\isadigit{0}}{\isacharunderscore}{\kern0pt}list\ {\isacharparenleft}{\kern0pt}real{\isacharunderscore}{\kern0pt}of{\isacharunderscore}{\kern0pt}ereal\ sensitivity\ {\isacharslash}{\kern0pt}\ {\isasymepsilon}{\isacharparenright}{\kern0pt}\ m{\isacharsemicolon}{\kern0pt}\ return\ {\isacharparenleft}{\kern0pt}listM\ borel{\isacharparenright}{\kern0pt}\ {\isacharparenleft}{\kern0pt}map{\isadigit{2}}\ {\isacharparenleft}{\kern0pt}{\isacharplus}{\kern0pt}{\isacharparenright}{\kern0pt}\ {\isacharparenleft}{\kern0pt}f\ x{\isacharparenright}{\kern0pt}\ xs{\isacharparenright}{\kern0pt}{\isacharbraceright}{\kern0pt}{\isachardoublequoteclose}
\end{isabelle}
\begin{isabelle}
\isacommand{proposition}\isamarkupfalse%
\ differential{\isacharunderscore}{\kern0pt}privacy{\isacharunderscore}{\kern0pt}LapMech{\isacharunderscore}{\kern0pt}list{\isacharcolon}{\kern0pt}\isanewline
\ \ \isakeyword{assumes}\ pose{\isacharcolon}{\kern0pt}\ {\isachardoublequoteopen}{\isasymepsilon}\ {\isachargreater}{\kern0pt}\ {\isadigit{0}}{\isachardoublequoteclose}\ \isakeyword{and}\ {\isachardoublequoteopen}sensitivity\ {\isachargreater}{\kern0pt}\ {\isadigit{0}}{\isachardoublequoteclose}\ \isakeyword{and}\ {\isachardoublequoteopen}sensitivity\ {\isacharless}{\kern0pt}\ {\isasyminfinity}{\isachardoublequoteclose}\isanewline
\ \ \isakeyword{shows}\ {\isachardoublequoteopen}differential{\isacharunderscore}{\kern0pt}privacy\ {\isacharparenleft}{\kern0pt}LapMech{\isacharunderscore}{\kern0pt}list\ {\isasymepsilon}{\isacharparenright}{\kern0pt}\ adj\ {\isasymepsilon}\ {\isadigit{0}}{\isachardoublequoteclose}
\isanewline
\isacommand{end}
\end{isabelle}
Here, the formal definition of $\mathrm{LapMech}_{f,m,b}(D) $ follows equation (\ref{eq:LapMech_def2_rec}).
The equation (\ref{eq:LapMech_def2}) is formalized as \isa{LapMech{\isacharunderscore}{\kern0pt}list{\isacharunderscore}{\kern0pt}def{\isadigit{2}}}.
The formal proof of \isa{differential{\isacharunderscore}{\kern0pt}privacy{\isacharunderscore}{\kern0pt}LapMech{\isacharunderscore}{\kern0pt}list} is done by combining the lemmas
\isa{differential{\isacharunderscore}{\kern0pt}privacy{\isacharunderscore}{\kern0pt}preprocessing}
and
\isa{DP{\isacharunderscore}{\kern0pt}Lap{\isacharunderscore}{\kern0pt}dist{\isacharunderscore}{\kern0pt}list}.

\subsection{Instantiatiation of the Datasets and Adjacency}
To formalize differential privacy in the sense of \cite{DworkRothTCS-042}, 
we need to instantiate \isa{X} and \isa{adj} according to the situation in \cite{DworkRothTCS-042}.
To do this, we introduce the following locale.
\begin{isabelle}
\isacommand{locale}\isamarkupfalse%
\ results{\isacharunderscore}{\kern0pt}AFDP\ {\isacharequal}{\kern0pt}\isanewline
\ \ \isakeyword{fixes}\ n\ {\isacharcolon}{\kern0pt}{\isacharcolon}{\kern0pt}nat\ \isanewline
\isakeyword{begin}
\end{isabelle}
\begin{isabelle}
\isacommand{interpretation}\isamarkupfalse%
\ L{\isadigit{1}}{\isacharunderscore}{\kern0pt}norm{\isacharunderscore}{\kern0pt}list\ {\isachardoublequoteopen}{\isacharparenleft}{\kern0pt}UNIV{\isacharcolon}{\kern0pt}{\isacharcolon}{\kern0pt}nat\ set{\isacharparenright}{\kern0pt}{\isachardoublequoteclose}{\isachardoublequoteopen}{\isacharparenleft}{\kern0pt}{\isasymlambda}\ x\ y{\isachardot}{\kern0pt}\ {\isasymbar}int\ x\ {\isacharminus}{\kern0pt}\ int\ y{\isasymbar}{\isacharparenright}{\kern0pt}{\isachardoublequoteclose}n
\end{isabelle}
\begin{isabelle}
\isacommand{definition}\isamarkupfalse%
\ sp{\isacharunderscore}{\kern0pt}Dataset\ {\isacharcolon}{\kern0pt}{\isacharcolon}{\kern0pt}\ {\isachardoublequoteopen}nat\ list\ measure{\isachardoublequoteclose}\ \isakeyword{where}\isanewline
\ \ {\isachardoublequoteopen}sp{\isacharunderscore}{\kern0pt}Dataset\ {\isasymequiv}\ restrict{\isacharunderscore}{\kern0pt}space\ {\isacharparenleft}{\kern0pt}listM\ {\isacharparenleft}{\kern0pt}count{\isacharunderscore}{\kern0pt}space\ UNIV{\isacharparenright}{\kern0pt}{\isacharparenright}{\kern0pt}\ space{\isacharunderscore}{\kern0pt}L{\isadigit{1}}{\isacharunderscore}{\kern0pt}norm{\isacharunderscore}{\kern0pt}list{\isachardoublequoteclose}
\end{isabelle}
\begin{isabelle}
\isacommand{definition}\isamarkupfalse%
\ adj{\isacharunderscore}{\kern0pt}L{\isadigit{1}}{\isacharunderscore}{\kern0pt}norm\ {\isacharcolon}{\kern0pt}{\isacharcolon}{\kern0pt}\ {\isachardoublequoteopen}{\isacharparenleft}{\kern0pt}nat\ list\ {\isasymtimes}\ nat\ list{\isacharparenright}{\kern0pt}\ set{\isachardoublequoteclose}\ \isakeyword{where}\isanewline
\ \ {\isachardoublequoteopen}adj{\isacharunderscore}{\kern0pt}L{\isadigit{1}}{\isacharunderscore}{\kern0pt}norm\ {\isasymequiv}\ {\isacharbraceleft}{\kern0pt}{\isacharparenleft}{\kern0pt}xs{\isacharcomma}{\kern0pt}ys{\isacharparenright}{\kern0pt}{\isacharbar}{\kern0pt}\ xs\ ys{\isachardot}{\kern0pt}\ xs\ {\isasymin}\ space\ sp{\isacharunderscore}{\kern0pt}Dataset\ {\isasymand}\ ys\ {\isasymin}\ space\ sp{\isacharunderscore}{\kern0pt}Dataset\ {\isasymand}\ dist{\isacharunderscore}{\kern0pt}L{\isadigit{1}}{\isacharunderscore}{\kern0pt}norm{\isacharunderscore}{\kern0pt}list\ xs\ ys\ {\isasymle}\ {\isadigit{1}}{\isacharbraceright}{\kern0pt}{\isachardoublequoteclose}
\end{isabelle}
\begin{isabelle}
\isacommand{abbreviation}\isamarkupfalse
\ {\isachardoublequoteopen}differential{\isacharunderscore}{\kern0pt}privacy{\isacharunderscore}{\kern0pt}AFDP\ M\ {\isasymepsilon}\ {\isasymdelta}\ {\isasymequiv}\ differential{\isacharunderscore}{\kern0pt}privacy\ M\ adj{\isacharunderscore}{\kern0pt}L{\isadigit{1}}{\isacharunderscore}{\kern0pt}norm\ {\isasymepsilon}\ {\isasymdelta}{\isachardoublequoteclose}
\end{isabelle}
Here, \isa{n} is the number $|\mathcal{X}|$ of data types (i.e. the length of datasets);
\isa{L{\isadigit{1}}{\isacharunderscore}{\kern0pt}norm{\isacharunderscore}{\kern0pt}list} is the locale 
for $L_1$-norm of lists introduced in Section \ref{sec:list_metric};
\isa{sp{\isacharunderscore}{\kern0pt}Dataset} is the space $\Nat^{|\mathcal{X}|}$;
\isa{adj{\isacharunderscore}{\kern0pt}L{\isadigit{1}}{\isacharunderscore}{\kern0pt}norm} is the 
adjacency relation $R^\mathrm{adj} = \{ (D,D') ~|~ \|D - D' \| \leq 1\}$.

In this locale, we formalize Lemmas \ref{DP:basic:trivial}, \ref{DP:basic:postprocessing}, \ref{DP:basic:group}, \ref{DP:basic:sequential} and \ref{DP:basic:adaptive}.
For Lemmas \ref{DP:basic:trivial}, \ref{DP:basic:postprocessing}, \ref{DP:basic:sequential} and \ref{DP:basic:adaptive}, 
we just instantiate \isa{sp{\isacharunderscore}{\kern0pt}Dataset} and \isa{adj{\isacharunderscore}{\kern0pt}L{\isadigit{1}}{\isacharunderscore}{\kern0pt}norm} to their general versions. 
Lemma \ref{DP:basic:group} is formalized as:
\begin{isabelle}
\isacommand{lemma}\isamarkupfalse%
\ group{\isacharunderscore}{\kern0pt}privacy{\isacharunderscore}{\kern0pt}AFDP{\isacharcolon}{\kern0pt}\isanewline
\ \ \isakeyword{assumes}\ M{\isacharcolon}{\kern0pt}\ {\isachardoublequoteopen}M\ {\isasymin}\ sp{\isacharunderscore}{\kern0pt}Dataset\ {\isasymrightarrow}\isactrlsub M\ prob{\isacharunderscore}{\kern0pt}algebra\ Y{\isachardoublequoteclose}\isanewline
\ \ \ \ \isakeyword{and}\ DP{\isacharcolon}{\kern0pt}\ {\isachardoublequoteopen}\ differential{\isacharunderscore}{\kern0pt}privacy{\isacharunderscore}{\kern0pt}AFDP\ M\ {\isasymepsilon}\ {\isadigit{0}}{\isachardoublequoteclose}\isanewline
\ \ \isakeyword{shows}\ {\isachardoublequoteopen}differential{\isacharunderscore}{\kern0pt}privacy\ M\ {\isacharparenleft}{\kern0pt}dist{\isacharunderscore}{\kern0pt}L{\isadigit{1}}{\isacharunderscore}{\kern0pt}norm\ k{\isacharparenright}{\kern0pt}\ {\isacharparenleft}{\kern0pt}real\ k\ {\isacharasterisk}{\kern0pt}\ {\isasymepsilon}{\isacharparenright}{\kern0pt}\ {\isadigit{0}}{\isachardoublequoteclose}
\end{isabelle}
Here, \isa{dist{\isacharunderscore}{\kern0pt}L{\isadigit{1}}{\isacharunderscore}{\kern0pt}norm\ k} is an implementation of the binary relation $R_k = \{ (D,D') ~|~ \|D - D' \| \leq k\}$. 
In the formal proof, we use a formal version of Lemma \ref{lem:adj_k:group} (see, also Secion \ref{sec:list_metric}).

To complete the formalization of the Laplace mechanism (e.g. formal proof of Proposition \ref{prop:DP:Lapmech}), we set the following context. 
We interpret the locale \isa{Lap{\isacharunderscore}{\kern0pt}Mechanism{\isacharunderscore}{\kern0pt}list} with \isa{sp{\isacharunderscore}{\kern0pt}Dataset} and \isa{adj{\isacharunderscore}{\kern0pt}L{\isadigit{1}}{\isacharunderscore}{\kern0pt}norm}.
That interpretation provides the differential privacy of the instance of \isa{LapMech{\isacharunderscore}{\kern0pt}list},
which is an inductive version of \isa{LapMech{\isacharunderscore}{\kern0pt}list{\isacharunderscore}{\kern0pt}AFDP}.

Finally, the formal proof of Proposition \ref{prop:DP:Lapmech} is done by proving that \isa{ LapMech{\isacharunderscore}{\kern0pt}list{\isacharunderscore}{\kern0pt}AFDP}
and the instance of \isa{LapMech{\isacharunderscore}{\kern0pt}list} are equal (Lemma \isa{LapMech{\isacharunderscore}{\kern0pt}list{\isacharunderscore}{\kern0pt}AFDP{\isacharprime}}).

\begin{isabelle}
\isacommand{context}\isamarkupfalse%
\isanewline
\ \ \isakeyword{fixes}\ f{\isacharcolon}{\kern0pt}{\isacharcolon}{\kern0pt}{\isachardoublequoteopen}nat\ list\ {\isasymRightarrow}\ real\ list{\isachardoublequoteclose}\isanewline
\ \ \ \ \isakeyword{and}\ m{\isacharcolon}{\kern0pt}{\isacharcolon}{\kern0pt}nat\ \isanewline
\ \ \isakeyword{assumes}\ {\isacharbrackleft}{\kern0pt}measurable{\isacharbrackright}{\kern0pt}{\isacharcolon}{\kern0pt}\ {\isachardoublequoteopen}f\ {\isasymin}\ sp{\isacharunderscore}{\kern0pt}Dataset\ {\isasymrightarrow}\isactrlsub M\ {\isacharparenleft}{\kern0pt}listM\ borel{\isacharparenright}{\kern0pt}{\isachardoublequoteclose}\isanewline
\ \ \ \ \isakeyword{and}\ len{\isacharcolon}{\kern0pt}\ {\isachardoublequoteopen}{\isasymAnd}\ x{\isachardot}{\kern0pt}\ x\ {\isasymin}\ space\ X\ {\isasymLongrightarrow}\ length\ {\isacharparenleft}{\kern0pt}f\ x{\isacharparenright}{\kern0pt}\ {\isacharequal}{\kern0pt}\ m{\isachardoublequoteclose}\isanewline
\isakeyword{begin}
\end{isabelle}
\begin{isabelle}
\isacommand{interpretation}\isamarkupfalse%
\ Lap{\isacharunderscore}{\kern0pt}Mechanism{\isacharunderscore}{\kern0pt}list\ {\isachardoublequoteopen}sp{\isacharunderscore}{\kern0pt}Dataset{\isachardoublequoteclose}\ f\ {\isachardoublequoteopen}adj{\isacharunderscore}{\kern0pt}L{\isadigit{1}}{\isacharunderscore}{\kern0pt}norm{\isachardoublequoteclose}\ m
\end{isabelle}
\begin{isabelle}
\isacommand{definition}\isamarkupfalse%
\ LapMech{\isacharunderscore}{\kern0pt}list{\isacharunderscore}{\kern0pt}AFDP\ {\isacharcolon}{\kern0pt}{\isacharcolon}{\kern0pt}\ {\isachardoublequoteopen}real\ {\isasymRightarrow}\ nat\ list\ {\isasymRightarrow}\ real\ list\ measure{\isachardoublequoteclose}\ \isakeyword{where}\isanewline
\ \ {\isachardoublequoteopen}LapMech{\isacharunderscore}{\kern0pt}list{\isacharunderscore}{\kern0pt}AFDP\ {\isasymepsilon}\ x\ {\isacharequal}{\kern0pt}\ do{\isacharbraceleft}{\kern0pt}\ ys\ {\isasymleftarrow}\ {\isacharparenleft}{\kern0pt}Lap{\isacharunderscore}{\kern0pt}dist{\isadigit{0}}{\isacharunderscore}{\kern0pt}list{\isacharparenleft}{\kern0pt}real{\isacharunderscore}{\kern0pt}of{\isacharunderscore}{\kern0pt}ereal\ sensitivity\ {\isacharslash}{\kern0pt}\ {\isasymepsilon}{\isacharparenright}{\kern0pt}\ m{\isacharparenright}{\kern0pt}{\isacharsemicolon}{\kern0pt}\ return\ {\isacharparenleft}{\kern0pt}listM\ borel{\isacharparenright}{\kern0pt}\ {\isacharparenleft}{\kern0pt}map{\isadigit{2}}\ {\isacharparenleft}{\kern0pt}{\isacharplus}{\kern0pt}{\isacharparenright}{\kern0pt}\ {\isacharparenleft}{\kern0pt}f\ x{\isacharparenright}{\kern0pt}\ ys{\isacharparenright}{\kern0pt}\ {\isacharbraceright}{\kern0pt}{\isachardoublequoteclose}
\end{isabelle}
\begin{isabelle}
\isacommand{lemma}\isamarkupfalse%
\ LapMech{\isacharunderscore}{\kern0pt}list{\isacharunderscore}{\kern0pt}AFDP{\isacharprime}{\kern0pt}{\isacharcolon}{\kern0pt}\isanewline
\ \ \isakeyword{assumes}\ {\isachardoublequoteopen}x\ {\isasymin}\ space\ sp{\isacharunderscore}{\kern0pt}Dataset{\isachardoublequoteclose}\isanewline
\ \ \isakeyword{shows}\ {\isachardoublequoteopen}LapMech{\isacharunderscore}{\kern0pt}list{\isacharunderscore}{\kern0pt}AFDP\ {\isasymepsilon}\ x\ {\isacharequal}{\kern0pt}\ LapMech{\isacharunderscore}{\kern0pt}list\ {\isasymepsilon}\ x{\isachardoublequoteclose}
\end{isabelle}
\begin{isabelle}
\isacommand{lemma}\isamarkupfalse%
\ differential{\isacharunderscore}{\kern0pt}privacy{\isacharunderscore}{\kern0pt}Lap{\isacharunderscore}{\kern0pt}Mechanism{\isacharunderscore}{\kern0pt}list{\isacharunderscore}{\kern0pt}AFDP{\isacharcolon}{\kern0pt}\isanewline
\ \ \isakeyword{assumes}\ {\isachardoublequoteopen}{\isadigit{0}}\ {\isacharless}{\kern0pt}\ {\isasymepsilon}{\isachardoublequoteclose}\ \isakeyword{and}\ {\isachardoublequoteopen}{\isadigit{0}}\ {\isacharless}{\kern0pt}\ sensitivity{\isachardoublequoteclose}\ \isakeyword{and}\ {\isachardoublequoteopen}sensitivity\ {\isacharless}{\kern0pt}\ {\isasyminfinity}{\isachardoublequoteclose}\isanewline
\ \ \isakeyword{shows}\ {\isachardoublequoteopen}differential{\isacharunderscore}{\kern0pt}privacy{\isacharunderscore}{\kern0pt}AFDP\ {\isacharparenleft}{\kern0pt}LapMech{\isacharunderscore}{\kern0pt}list{\isacharunderscore}{\kern0pt}AFDP\ {\isasymepsilon}{\isacharparenright}{\kern0pt}\ {\isasymepsilon}\ {\isadigit{0}}{\isachardoublequoteclose}
\end{isabelle}
\begin{isabelle}
\isacommand{end}
\end{isabelle}

\section{Report Noisy Max Mechanism}\label{sec:report_noisy_max_DP}

The \emph{report noisy max mechanism} is a randomized algorithm that returns the index of maximum values in tuples sampled from a Laplace mechanism.
\begin{align*}
\mathrm{RNM}_{f,m,\varepsilon}(D) = \{
&(y_0,\ldots,y_{m-1}) \leftarrow \mathrm{LapMech}_{f,m,1/\varepsilon}(D); \\ 
&\return {\textstyle \argmax_{0\leq i < m}}y_i \}\}.
\end{align*}
Here, we assume that $\argmax_{0\leq i < m}y_i$ returns the least $i$ such that $y_i$ is the maximum value of $\{y_0,\ldots,y_{m-1}\}$.

We can prove that it is $(\Delta f \cdot \varepsilon,0)$-DP by Lemma \ref{DP:basic:postprocessing} for the post-processing $(y_0,\ldots,y_{m-1}) \mapsto \return \argmax_{0\leq i < m}y_i$ and Lemma \ref{prop.DP.LapMech}.
The strength $\Delta f \cdot \varepsilon$ of privacy guarantee may depend on the length $m$ of outputs.
We suppose that $f$ is a tuple of $m$ functions defined by $f(D) = (f_0(D),\ldots,f_{m-1}(D))$ for each $D \in \Nat^{|\mathcal{X}|}$ and $\Delta f_i = 1$ holds for each $0 \leq i < m$.
Then, $\Delta f$ is at most $m$\footnote{Moreover, we have $\Delta f = m$ if we choose $\Delta f_0 = 1$ and $f_i = f_0$ for $0 < i < m$.}.
Thus, we naively conclude that $\mathrm{RNM}_{f,m,\varepsilon}$ is $(m \cdot \varepsilon,0)$-DP.

However, when $f$ is a tuple of $m$ \emph{counting queries} described below, 
$\mathrm{RNM}_{f,m,\varepsilon}$ is actually $(\varepsilon,0)$-DP regardless of $m$.

\subsection{Counting Queries}
A counting query is a function that counts the number of elements in certain types in a given dataset.

Since each dataset $D$ is a histogram in which each $D[i]$ represents the number of elements of type $i$,
a counting query is formulated as a function $q \colon \Nat^{|\mathcal{X}|} \to \Nat$ defined by
\[
q(D) = \sum_{i \in \mathcal{X}_q} D[i].
\]
Here, $\mathcal{X}_q \subseteq \mathcal{X}$ is the set of types in which $q$ counts the number of elements that belong. 
We regard each $i \in X_q$ as their indexes $0 \leq i < |\mathcal{X}|$.

\subsection{Differential Privacy}
We consider a tuple $\vec{q} \colon \Nat^{|\mathcal{X}|} \to \Nat^m$ of $m$ counting queries $q_i \colon \Nat^{|\mathcal{X}|} \to \Nat$ ($0 \leq i < m$).
We regard it as a function $\vec{q} \colon \Nat^{|\mathcal{X}|} \to \Real^m$ to apply the Laplace mechanism.

First, we evaluate naively the differential privacy of $\mathrm{RNM}_{\vec{q},m,\varepsilon}$.
The sensitivity of any counting query $q$ is $1$ because $| q(D) - q(D') | \leq 1$ for any adjacent datasets $D,D' \in \Nat^{|\mathcal{X}|}$.
Hence, the sensitivity of the $\vec{q}$ is at most $m$. 
Thus, 
\begin{proposition}[Naive]\label{prop:DP:RNM_naive}
$\mathrm{RNM}_{\vec{q},m,\varepsilon}$ is $(m\cdot \varepsilon,0)$-DP.
\end{proposition}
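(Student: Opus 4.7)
The plan is to derive this naive bound by chaining three earlier results: a sensitivity bound for counting queries, the Laplace mechanism lemma (Lemma \ref{prop.DP.LapMech}), and the post-processing closure property (Lemma \ref{DP:basic:postprocessing}). No step requires a new idea beyond what is already developed in the paper, so the proof is essentially a composition of black boxes.

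First, I would bound the $L_1$-sensitivity of the tuple-valued query. For each counting query $q_i$ and each pair of adjacent datasets $D, D' \in \Nat^{|\mathcal{X}|}$, the vector $D - D'$ is supported on a single coordinate with absolute value at most $1$, so $|q_i(D) - q_i(D')| = \left|\sum_{j \in \mathcal{X}_{q_i}} (D[j] - D'[j])\right| \leq 1$. Summing over $i$ gives $\|\vec{q}(D) - \vec{q}(D')\|_1 \leq m$, hence $\Delta \vec{q} \leq m$.

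Second, I would apply Lemma \ref{prop.DP.LapMech} with scale $b = 1/\varepsilon$, which states that $\mathrm{LapMech}_{\vec{q}, m, 1/\varepsilon}$ is $(\Delta \vec{q} \cdot \varepsilon, 0)$-DP. Combining this with the bound $\Delta \vec{q} \leq m$ and the monotonicity Lemma \ref{DP:basic:trivial} upgrades the conclusion to $(m\varepsilon, 0)$-DP.

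Third, I would observe that $\mathrm{RNM}_{\vec{q}, m, \varepsilon}$ is exactly the post-composition of the above Laplace mechanism with the deterministic map $N \colon \Real^m \to \prob(\{0, \ldots, m-1\})$ defined by $N(\vec{y}) = \return\bigl(\argmax_{0 \leq i < m} y_i\bigr)$ (using the tie-breaking convention of returning the least such index). Lemma \ref{DP:basic:postprocessing} then immediately yields that $\mathrm{RNM}_{\vec{q}, m, \varepsilon}$ is $(m\varepsilon, 0)$-DP. The only mild technicality, especially relevant for the Isabelle/HOL formalization, is verifying the measurability of the argmax map and the length-$m$ assumption on $\vec{q}(D)$, both of which follow from the coordinate structure and the fact that the least-index tie-break can be expressed by finitely many pointwise comparisons; this is the only step I expect to require any real care, but it is routine compared to the mathematical content of the earlier lemmas.
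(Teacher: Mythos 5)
Your proposal is correct and follows essentially the same route as the paper: bound the sensitivity of each counting query by $1$ so that $\Delta\vec{q} \leq m$, apply Lemma \ref{prop.DP.LapMech} with $b = 1/\varepsilon$ (together with Lemma \ref{DP:basic:trivial}) to get $(m\varepsilon,0)$-DP for the Laplace mechanism, and then close under post-processing by the argmax map via Lemma \ref{DP:basic:postprocessing}. The measurability concerns you flag are indeed the only nontrivial work in the formalization, matching the paper's remark that the formal proof combines \isa{DP{\isacharunderscore}Lap{\isacharunderscore}dist{\isacharunderscore}list} with the formalized basic properties of differential privacy.
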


Next, by using specific properties of counting queries and $\argmax$ operations,
regardless of $m$, we conclude, 
\begin{proposition}[{\cite[Claim 3.9]{DworkRothTCS-042}}]\label{prop:DP:RNM_true}
$\mathrm{RNM}_{\vec{q},m,\varepsilon}$ is $(\varepsilon,0)$-DP.
\end{proposition}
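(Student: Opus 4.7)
The plan is to fix adjacent datasets $D, D'$ and a target index $i^* \in \{0,\ldots,m-1\}$ and prove the pointwise bound $\Pr[\mathrm{RNM}_{\vec{q},m,\varepsilon}(D) = i^*] \leq e^{\varepsilon}\Pr[\mathrm{RNM}_{\vec{q},m,\varepsilon}(D') = i^*]$ in both orderings of $(D, D')$; since the output space is the finite set $\{0,\ldots,m-1\}$, summing such pointwise bounds over any output subset delivers $(\varepsilon,0)$-DP without a $\delta$ term. After reducing $\|D-D'\|_1 \leq 1$ to the non-trivial case $D' = D + e_k$ for some data type $k$, every counting query satisfies $q_j(D') \in \{q_j(D), q_j(D)+1\}$. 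This ``directional $1$-sensitivity'' is the structural property that beats the naive $m\varepsilon$ bound of Proposition~\ref{prop:DP:RNM_naive}.

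For the core estimate I condition on $r_{-i^*} = (r_j)_{j \neq i^*}$. On each such conditioning the event ``$i^*$ is the (least-index tie-broken) argmax on $D$'' becomes $\{r_{i^*} \geq r^\star(r_{-i^*}, D)\}$ for
\[
r^\star(r_{-i^*}, D) = \max_{j \neq i^*}\bigl(q_j(D) + r_j - q_{i^*}(D)\bigr).
\]
From $q_j(D') - q_{i^*}(D') \leq q_j(D) - q_{i^*}(D) + 1$ --- a consequence of $q_j(D') \leq q_j(D)+1$ together with $q_{i^*}(D') \geq q_{i^*}(D)$ that genuinely uses monotonicity at every coordinate --- I obtain the pointwise comparison $r^\star(r_{-i^*}, D') \leq r^\star(r_{-i^*}, D) + 1$, and symmetrically $r^\star(r_{-i^*}, D) \leq r^\star(r_{-i^*}, D') + 1$. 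Combined with the one-dimensional Laplace tail inequality $\Pr_{x \sim \mathrm{Lap}(1/\varepsilon)}[x \geq t+1] \geq e^{-\varepsilon}\Pr_{x \sim \mathrm{Lap}(1/\varepsilon)}[x \geq t]$, which is Lemma~\ref{lem:Lap:DP:finer} applied to the ray $[t+1,\infty)$, this yields both directions of the pointwise conditional bound, and Fubini with respect to $\mathrm{Lap}(1/\varepsilon)^{m-1}$ integrates them to the desired inequalities.

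The main obstacle I anticipate is measurability and tie-breaking bookkeeping: the threshold $r^\star$ and the ``$i^*$ wins under the least-index tiebreak'' event must be shown measurable in $r_{-i^*}$, and one must verify that equality of these events with $\{r_{i^*} \geq r^\star\}$ holds up to a $\mathrm{Lap}(1/\varepsilon)$-null set (the boundary $r_{i^*} = r^\star$ has measure zero because $\mathrm{Lap}(1/\varepsilon)$ is continuous). Unfolding $\|D - D'\|_1 \leq 1$ into the directional form $D' = D \pm e_k$ also requires a small amount of list-level case analysis before the clean counting-query inequality can be invoked. The remaining ingredients --- independence of coordinates of $\mathrm{Lap}(1/\varepsilon)^m$, the Laplace tail ratio, and Fubini --- are already available from the earlier formalizations and HOL-Probability.
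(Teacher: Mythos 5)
Your proposal is correct and follows essentially the same route as the paper's proof: the directional $1$-sensitivity and monotonicity of counting queries (Lemma \ref{lem:adjacent:counting_query}), conditioning on the remaining noises $r_{-i}$ and comparing the winning threshold for index $i$ across the two datasets, the shifted Laplace inequality (Lemmas \ref{lem:Lap:shifting} and \ref{lem:Lap:DP:finer}), integration via Fubini/commutativity of the Giry monad, and summation of the pointwise bounds over the finite output set. The only cosmetic difference is that you dispose of the least-index tie-break by a null-set argument, whereas the paper uses an exact characterization of the argmax event (Lemma \ref{lem:argmax_insert_i_i}); this does not change the substance.
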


To give a formal proof of this proposition, we reconstruct the proof in \cite[Claim 3.9]{DworkRothTCS-042}. 
Instead of the minimum value $r^\ast = \min\{r_i | \forall j \neq i.  c_i + r_i > c_j + r_j \}$,
we use its underlying set $\{r | c_i + r \geq \max_{j \neq i} {c_j + r_j }\}$.
For example, the probability $\Pr_{r_i \sim \mathrm{Lap}(1/\varepsilon)}[r_i \geq r^\ast]$ in the original proof 
is rewritten by $\Pr_{r \sim \mathrm{Lap}(1/\varepsilon)}[ c_i + r \geq \max_{j \neq i} {c_j + r_j }]$.
In addition, for easy formalization, we also partition the proof into several lemmas. 

\begin{proof}[A reconstructed proof of Proposition \ref{prop:DP:RNM_true}]
We first define the main body $\mathrm{RNM'}_{m,\varepsilon}$ of $\mathrm{RNM}_{\vec{q},m,\varepsilon}$ as follows:
\begin{align*}
\mathrm{RNM'}_{m,\varepsilon}(\vec{c}) &= \{ (z_0,\ldots, z_{m-1}) \leftarrow \mathrm{Lap}^{m}(1/\varepsilon,\vec{c}); \\
& \qquad \return {\textstyle \argmax_{0 \leq j < m}} (z_j) \}.
\end{align*}
By the definition of $\mathrm{RNM}_{\vec{q},m,\varepsilon}$ and equation (\ref{eq:LapMech_def2_rec}), we obtain,
\begin{equation}\label{eq:LapMech_RNM_RNM'_c}
\mathrm{RNM}_{\vec{q},m,\varepsilon} = \mathrm{RNM'}_{m,\varepsilon} \circ \vec{q}.
\end{equation}

We here fix adjacent datasets $D, D' \in \Nat^{|\mathcal{X}|}$, and write
\begin{align*}
\vec{q}(D) &= (c_0,\ldots,c_{m-1}) = \vec{c},&
\vec{q}(D') &= (c'_0,\ldots,c'_{m-1}) = \vec{c'}.
\end{align*}
Here, $| D[i] - D'[i] | \leq1$ and $D[j] = D'[j]$ ($j \neq i$) holds for some $i$.
Since $\vec{q}$ is a tuple of counting queries, we obtain,
\begin{lemma}[Lipschitz \& Monotonicity]\label{lem:adjacent:counting_query}
We have one of the two following two cases:
\begin{itemize}
\item[(A)] $c_j \geq c'_j$ and $c_j\leq c'_j + 1$ for all $0 \leq j < m$,
\item[(B)] $c'_j \geq c_j$ and $c'_j\leq c_j + 1$ for all $0 \leq j < m$.
\end{itemize}
\end{lemma}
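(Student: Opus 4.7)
The plan is to unfold adjacency and the counting-query definition to reduce the claim to a statement about $D[i] - D'[i]$ for the single index where $D$ and $D'$ may differ. Since $\|D - D'\|_1 \leq 1$, at most one index $i \in \mathcal{X}$ satisfies $D[i] \neq D'[i]$, and in that case $|D[i] - D'[i]| = 1$; for every $j \neq i$ we have $D[j] = D'[j]$. If $D = D'$, both alternatives (A) and (B) hold trivially, so the interesting case is exactly when such an $i$ exists.

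For each counting query $q_k$ with counted-type set $\mathcal{X}_{q_k} \subseteq \mathcal{X}$, I then compute
\[
c_k - c'_k = q_k(D) - q_k(D') = \sum_{l \in \mathcal{X}_{q_k}} (D[l] - D'[l]) = \chi_{\mathcal{X}_{q_k}}(i) \cdot (D[i] - D'[i]),
\]
where $\chi_{\mathcal{X}_{q_k}}(i) \in \{0,1\}$ is the indicator of $i \in \mathcal{X}_{q_k}$. Hence every difference $c_k - c'_k$ lies in $\{0,\, D[i] - D'[i]\}$, and in particular $|c_k - c'_k| \leq 1$ uniformly in $k$.

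Finally I case-split on the sign of $D[i] - D'[i]$. If $D[i] \geq D'[i]$ then $D[i] - D'[i] \in \{0, 1\}$, so $c_k - c'_k \in \{0, 1\}$ for every $k$, which is exactly case (A). If instead $D'[i] \geq D[i]$ the symmetric argument yields case (B). Since adjacency forces one of these two sign conditions to hold, the required disjunction follows.

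The main (minor) obstacle is that the indicator $\chi_{\mathcal{X}_{q_k}}(i)$ varies with $k$ while the sign of $D[i] - D'[i]$ is a single global choice; the whole argument hinges on observing that this one sign simultaneously certifies the required inequalities for \emph{every} coordinate $k$. In the Isabelle formalization the main bookkeeping will therefore be extracting the discrepant index from the $L_1$-distance bound on lists and unfolding the $k$-th counting query through its sum-of-entries definition so that the telescoping of the difference becomes available to the automation.
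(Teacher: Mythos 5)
Your proof is correct and fills in exactly the argument the paper leaves implicit (the paper merely asserts the lemma ``since $\vec{q}$ is a tuple of counting queries'' and records the formal version as \isa{finer{\isacharunderscore}sensitivity{\isacharunderscore}counting{\isacharunderscore}query}). Isolating the single discrepant index from the integer-valued $L_1$ bound, telescoping each $c_k - c'_k$ to $\chi_{\mathcal{X}_{q_k}}(i)\cdot(D[i]-D'[i])$, and case-splitting on the one global sign is precisely the intended route.
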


We here fix $c_j, c'_j \in \Real$ ($0 \leq j < m$) and assume (A) without loss of generality.
We have the following lemmas.
\begin{lemma}\label{lem:fst_max_argmax_adj}
Fix arbitrary values $r_j \in \Real$ ($0 \leq j < m$).
We write $d = \max_{0\leq j < m} (c_j + r_j) $ and $d' = \max_{0\leq j < m} (c'_j + r_j)$.
Then, we have $d \geq d'$ and $d \leq d' + 1$.
\end{lemma}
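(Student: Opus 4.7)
The plan is to prove the two inequalities by directly exploiting the pointwise bounds in case (A), together with the elementary monotonicity properties of the $\max$ operator over a finite index set $\{0,\ldots,m-1\}$. Since $m$ counting queries produce finitely many values and both maxima are attained, I would pick witnesses and reason about them pairwise.

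First I would show $d \geq d'$. Pick an index $j^{\ast}$ achieving the maximum on the primed side, so that $d' = c'_{j^{\ast}} + r_{j^{\ast}}$. By assumption (A) we have $c_{j^{\ast}} \geq c'_{j^{\ast}}$, hence
\[
d = \max_{0 \leq j < m}(c_j + r_j) \geq c_{j^{\ast}} + r_{j^{\ast}} \geq c'_{j^{\ast}} + r_{j^{\ast}} = d'.
\]
This is just monotonicity of $\max$ combined with pointwise dominance.

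Next I would show $d \leq d' + 1$. Pick an index $k^{\ast}$ achieving the unprimed maximum, so $d = c_{k^{\ast}} + r_{k^{\ast}}$. By (A) we have $c_{k^{\ast}} \leq c'_{k^{\ast}} + 1$, hence
\[
d = c_{k^{\ast}} + r_{k^{\ast}} \leq (c'_{k^{\ast}} + 1) + r_{k^{\ast}} = (c'_{k^{\ast}} + r_{k^{\ast}}) + 1 \leq d' + 1,
\]
using again that $d' \geq c'_{k^{\ast}} + r_{k^{\ast}}$ by definition of maximum.

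There is no real obstacle here: the lemma is a pure consequence of the two scalar inequalities in (A) propagated through $\max$. The only mild bookkeeping issue in Isabelle/HOL will be justifying that the maxima over $\{0,\ldots,m-1\}$ are actually attained when $m > 0$, and separately handling the degenerate case $m = 0$ (where both sides are trivial or the lemma vacuously holds, depending on how $\max$ over the empty index set is interpreted in the enclosing context). For the symmetric case (B), the argument is identical after swapping the roles of $\vec{c}$ and $\vec{c'}$ — but since the enclosing proof already fixed case (A) without loss of generality, no separate treatment is needed here.
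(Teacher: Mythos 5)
Your proof is correct and is essentially the intended argument: the paper states this lemma without an explicit pencil-and-paper proof, treating it as an immediate consequence of the pointwise bounds in case (A) propagated through $\max$, which is exactly the witness-based computation you carry out (including the correct handling of the degenerate $m=0$ case, where the paper's convention takes the empty maximum to be $-\infty$). The only divergence is a formalization detail — the mechanized version in Figure~\ref{Fig:Lemmas_RNM} phrases the statement via the recursively defined \emph{max\_argmax}, so the Isabelle proof proceeds by induction on lists rather than by selecting attaining indices — but mathematically there is nothing more to it.
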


\begin{lemma}\label{lem:argmax_insert_i_i}
For all $d_j \in \Real$ ($0 \leq j < m$) and $0 \leq i < m$, 
\begin{align*}
\argmax_{0 \leq j < m} d_j = i
&\iff \max_{0 \leq j < m, j \neq i} d_j \leq d_i \land \max_{0 \leq j < i} d_j \neq d_i.
\end{align*}
\end{lemma}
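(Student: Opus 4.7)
The plan is to work directly from the definition of $\argmax$: by convention in this paper, $\argmax_{0 \leq j < m} d_j$ returns the least index $i$ at which $d_i = \max_{0 \leq j < m} d_j$. Thus $\argmax_{0 \leq j < m} d_j = i$ is equivalent to the conjunction of (a) $d_i$ is the maximum of $d_0, \ldots, d_{m-1}$, and (b) no index $j < i$ achieves this maximum. I would prove the two directions of the claimed equivalence by separately matching (a) with the first conjunct on the right-hand side and (b) with the second conjunct.

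For the forward direction, assume $\argmax_{0 \leq j < m} d_j = i$. From (a) we have $d_j \leq d_i$ for every $j$; dropping $j = i$ (which is trivial) gives $\max_{0 \leq j < m,\, j \neq i} d_j \leq d_i$. From (b), no $j < i$ satisfies $d_j = d_i$, and combined with $d_j \leq d_i$ from (a) this yields $d_j < d_i$ for all such $j$, hence $\max_{0 \leq j < i} d_j < d_i$ and in particular $\neq d_i$.

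For the reverse direction, assume the two conjuncts. The first, together with $d_i \leq d_i$, shows that $d_i = \max_{0 \leq j < m} d_j$, giving (a). For (b), each $d_j$ with $j < i$ satisfies $d_j \leq d_i$ by (a), so if any such $d_j$ equalled $d_i$ we would get $\max_{0 \leq j < i} d_j = d_i$, contradicting the second conjunct. Hence no $j < i$ attains the maximum, and $i$ is indeed the least maximizer.

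The main obstacle I expect is the edge-case bookkeeping in Isabelle/HOL, where $\max$ of an empty collection needs a convention: the case $i = 0$ makes $\{0 \leq j < i\}$ empty, and the case $m = 1$ (forcing $i = 0$) makes $\{0 \leq j < m,\, j \neq i\}$ empty. I would handle these by a case split on $i = 0$ and on $m = 1$ up front, discharging each degenerate side of the conjunction as vacuously true, and only running the argument above in the generic case. Once the list-based versions of $\max$ and $\argmax$ from Section \ref{sec:Lap_dist_list} are rephrased over $\{0, \ldots, m-1\}$, both directions reduce to routine arithmetic reasoning that Isabelle's automation should close after the case split.
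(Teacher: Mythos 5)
Your proof is correct for the lemma as printed, and since the paper gives no pencil-and-paper argument for this statement (it is only stated and then exhibited in formalized form in Figure~\ref{Fig:Lemmas_RNM}), your direct unfolding of the least-maximizer characterization into the two conjuncts, with the empty-range cases $i=0$ and $m=1$ discharged by the $-\infty$ convention, is exactly the natural route. One point of comparison with the paper's actual proof is worth flagging, however: the recursive definition of \texttt{max\_argmax} in Section~\ref{sec:formal_RNM} breaks ties toward \emph{later} indices (the head displaces the tail's maximum only under strict inequality), and accordingly the formal counterpart \texttt{argmax\_insert\_i\_i'} tests $k \neq \max(\mathrm{drop}\ i\ xs)$, i.e., ties against the elements that land \emph{after} position $i$, not the prefix. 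Under that implementation the printed statement is actually false as written: for $d = (5,5)$ the implementation returns $\argmax = 1$, while the right-hand side for $i = 1$ demands $\max_{0 \leq j < 1} d_j \neq d_1$, i.e., $5 \neq 5$. So your argument proves the lemma under the prose convention ``$\argmax$ returns the least maximizing index,'' which is what the displayed statement encodes, but it would not discharge the corresponding Isabelle obligation; there the second conjunct ranges over $j > i$, and your proof goes through verbatim once ``earlier'' is replaced by ``later'' throughout. This is a discrepancy internal to the paper rather than a gap in your reasoning, but anyone formalizing your proof should resolve it before starting.
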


We here define for each index $0 \leq i < m$ and fixed values of noise $r_j \in \Real$ ($0 \leq j < m, j \neq i$),
\begin{align}
p_i &= \!\!\Pr_{r_i \sim \mathrm{Lap}(1/\varepsilon)}\left[{\textstyle \argmax_{0 \leq j < m}} (c_j + r_j) = i \right],\label{eq:p_i}\\
p'_i &= \!\!\Pr_{r_i \sim \mathrm{Lap}(1/\varepsilon)}\left[{\textstyle \argmax_{0 \leq j < m}} (c'_j + r_j) = i \right].\notag
\end{align}
Then we obtain, 
\begin{lemma}\label{lem:DP_RNM_M_i}
For each $0 \leq i < m$, $r_j \in \Real$ ($0 \leq j < m$ and $j \neq i$),
$p_i \leq \exp(\varepsilon) p'_i$ and $p'_i \leq \exp(\varepsilon) p_i$.
\end{lemma}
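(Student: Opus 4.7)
The plan is to reduce both inequalities to applications of Lemma \ref{lem:Lap:DP:finer} after reformulating the argmax event as a half-line event for the Laplace-distributed noise $r_i$. First, I would use Lemma \ref{lem:argmax_insert_i_i} to rewrite $\argmax_{0 \leq j < m}(c_j + r_j) = i$ as the conjunction of $c_i + r_i \geq M$ with the (finite) inequality $c_i + r_i \notin \{c_0 + r_0, \ldots, c_{i-1} + r_{i-1}\}$, where $M = \max_{j \neq i}(c_j + r_j)$. Since $\mathrm{Lap}(1/\varepsilon)$ is absolutely continuous with respect to Lebesgue measure, the second condition fails on a null set and may be dropped. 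By Lemma \ref{lem:Lap:shifting}, the event $c_i + r_i \geq M$ for $r_i \sim \mathrm{Lap}(1/\varepsilon)$ then has probability $\Pr[\mathrm{Lap}(1/\varepsilon, c_i) \in [M, \infty)]$, so $p_i = \Pr[\mathrm{Lap}(1/\varepsilon, c_i) \in [M, \infty)]$ and analogously $p'_i = \Pr[\mathrm{Lap}(1/\varepsilon, c'_i) \in [M', \infty)]$ with $M' = \max_{j \neq i}(c'_j + r_j)$.

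For $p_i \leq \exp(\varepsilon) p'_i$, I would use that assumption (A) gives $c_j \geq c'_j$ for every $j$, hence $M \geq M'$ and $[M, \infty) \subseteq [M', \infty)$, so $p_i \leq \Pr[\mathrm{Lap}(1/\varepsilon, c_i) \in [M', \infty)]$. Since $|c_i - c'_i| \leq 1$, Lemma \ref{lem:Lap:DP:finer} with $b = 1/\varepsilon$ and $r = 1$ yields $\Pr[\mathrm{Lap}(1/\varepsilon, c_i) \in [M', \infty)] \leq \exp(\varepsilon)\Pr[\mathrm{Lap}(1/\varepsilon, c'_i) \in [M', \infty)] = \exp(\varepsilon)p'_i$, closing this direction.

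For the reverse inequality $p'_i \leq \exp(\varepsilon) p_i$, the same monotonicity must be combined with a unit-shift of the Laplace centre. From Lemma \ref{lem:fst_max_argmax_adj}, applied to the max over $j \neq i$ rather than over all $j$, assumption (A) gives $M \leq M' + 1$, so $[M', \infty) \subseteq [M - 1, \infty)$ and $p'_i \leq \Pr[\mathrm{Lap}(1/\varepsilon, c'_i) \in [M - 1, \infty)]$. Applying Lemma \ref{lem:Lap:shifting} to translate the centre and the half-line simultaneously by $1$, this equals $\Pr[\mathrm{Lap}(1/\varepsilon, c'_i + 1) \in [M, \infty)]$. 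Since $c'_i \leq c_i \leq c'_i + 1$, one has $|c'_i + 1 - c_i| \leq 1$, and a second appeal to Lemma \ref{lem:Lap:DP:finer} delivers the bound $\exp(\varepsilon)\,\Pr[\mathrm{Lap}(1/\varepsilon, c_i) \in [M, \infty)] = \exp(\varepsilon) p_i$.

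The main obstacle is not any single estimate but the asymmetry between the two directions: $p_i \leq \exp(\varepsilon) p'_i$ falls out from set inclusion alone (centres and thresholds both move in the same direction under (A)), whereas the reverse inequality needs the unit increase of $M$ over $M'$ to be absorbed by shifting the centre $c'_i$ to $c'_i + 1$ before invoking the Laplace DP bound. For the eventual Isabelle formalisation I would expect the bookkeeping around translation invariance of \isa{Lap{\isacharunderscore}{\kern0pt}dist} and the measurability of the half-line events to require some care, while the null-set argument discarding the tie-breaking condition should follow automatically from absolute continuity of the Laplace density with respect to \isa{lborel}.
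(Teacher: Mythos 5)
Your proof is correct and follows essentially the same route as the paper's: rewrite the argmax event as the threshold event $c_i + r_i \geq \max_{j\neq i}(c_j+r_j)$ via Lemma \ref{lem:argmax_insert_i_i} (discarding the tie-breaking clause as a null event), use the two halves of condition (A) for an event inclusion, and then invoke Lemma \ref{lem:Lap:DP:finer} with $r=1$, $b=1/\varepsilon$ together with the shift Lemma \ref{lem:Lap:shifting}. The only cosmetic difference is that the paper keeps the noise centred at $0$ and shifts its mean to $-1$, whereas you absorb the centres into $\mathrm{Lap}(1/\varepsilon,c_i)$ versus $\mathrm{Lap}(1/\varepsilon,c'_i)$; you also spell out the reverse inequality that the paper dismisses with ``similarly,'' and your treatment of it is correct.
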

\begin{proof}[Proof sketch]
By applying Lemmas \ref{lem:argmax_insert_i_i}, \ref{lem:fst_max_argmax_adj}, \ref{lem:Lap:DP:finer} and \ref{lem:Lap:shifting} in this order, 
we prove $p_i \leq \exp(\varepsilon) p'_i$.
\begin{align*}
p_i & = \Pr_{r_i \sim \mathrm{Lap}(1/\varepsilon,0)}\left[c_i + r_i \geq {\textstyle \max_{j \neq i}} {c_j + r_j } \right] \\
	& \leq \Pr_{r_i \sim \mathrm{Lap}(1/\varepsilon,0)}\left[ c'_i + (r_i + 1) \geq {\textstyle \max_{j \neq i}} {c'_j + r_j } \right] \\
	& \leq \exp(\varepsilon) \Pr_{r_i \sim \mathrm{Lap}(1/\varepsilon, -1)}\left[ c'_i + (r_i + 1) \geq {\textstyle \max_{j \neq i}} {c'_j + r_j } \right] \\
	& = \exp(\varepsilon) \Pr_{r_i \sim \mathrm{Lap}(1/\varepsilon,0)}\left[ c'_i + r_i\geq {\textstyle \max_{j \neq i}} {c'_j + r_j } \right] = p'_i.
\end{align*}
Similarly, we also have $p'_i \leq \exp(\varepsilon) p_i$.
\end{proof}
%
From the definition of $\mathrm{Lap}^{m}(1/\varepsilon)$, equation (\ref{eq:LapMech_def3}), and
the monad laws and commutativity of Giry monad,
for each $0 \leq i < m$ , we obtain (we write $\vec{r}_{-i} = (r_0,\ldots,r_{i-1},r_{i+1},\ldots,r_{m-1})$):
\begin{equation}\label{eq:RNM_expand}
\begin{aligned}
&\mathrm{RNM'}_{m,\varepsilon}(x_0,\ldots,x_{m-1})\\
&\quad = \{ \vec{r}_{-i} \leftarrow \mathrm{Lap}^{m-1}(1/\varepsilon);\\
&\quad \qquad r_i \leftarrow \mathrm{Lap}(1/\varepsilon) ; \return {\textstyle \argmax_{0 \leq j < m}} (x_j + r_j) \}.
\end{aligned}
\end{equation}
We hence prove core inequations for Proposition \ref{prop:DP:RNM_true}.
\begin{lemma}\label{DP_RNM'_M_i}
Assume that $c_j, c'_j \in \Real$ ($0 \leq j < m$) satisfy (A).
Then for any $0 \leq i < m$,
\begin{align*}
\Pr[\mathrm{RNM'}_{m,\varepsilon}(\vec{c}) = i] &\leq \exp(\varepsilon) \Pr[\mathrm{RNM'}_{m,\varepsilon}(\vec{c'}) = i],\\
\Pr[\mathrm{RNM'}_{m,\varepsilon}(\vec{c'}) = i] &\leq \exp(\varepsilon) \Pr[\mathrm{RNM'}_{m,\varepsilon}(\vec{c}) = i].
\end{align*}
\end{lemma}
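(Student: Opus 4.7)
The plan is to lift the pointwise inequality of Lemma~\ref{lem:DP_RNM_M_i} to the outer probability by integration, using the rewriting in equation~(\ref{eq:RNM_expand}) that separates the $i$-th coordinate of the noise.

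First, I would fix $0 \leq i < m$ and apply equation~(\ref{eq:RNM_expand}) to both $\mathrm{RNM'}_{m,\varepsilon}(\vec{c})$ and $\mathrm{RNM'}_{m,\varepsilon}(\vec{c'})$. Writing the probability of the event $\{\cdot = i\}$ as a Lebesgue integral against the Giry monad bind, we obtain
\begin{align*}
\Pr[\mathrm{RNM'}_{m,\varepsilon}(\vec{c}) = i]
&= \int \left( \int \mathbf{1}[{\textstyle \argmax_{0 \leq j < m}}(c_j + r_j) = i]~d\mathrm{Lap}(1/\varepsilon)(r_i) \right) d\mathrm{Lap}^{m-1}(1/\varepsilon)(\vec{r}_{-i}),
\end{align*}
and analogously for $\vec{c'}$. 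The inner integrand is precisely $p_i$ (resp.\ $p'_i$) from equation~(\ref{eq:p_i}), viewed as a function of $\vec{r}_{-i}$.

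Next, I would invoke Lemma~\ref{lem:DP_RNM_M_i} at each $\vec{r}_{-i}$: under assumption (A), it gives $p_i \leq \exp(\varepsilon)\, p'_i$ and $p'_i \leq \exp(\varepsilon)\, p_i$ pointwise in $\vec{r}_{-i}$. Integrating the first inequality against $\mathrm{Lap}^{m-1}(1/\varepsilon)$ and using linearity of the integral yields the first claim
\[
\Pr[\mathrm{RNM'}_{m,\varepsilon}(\vec{c}) = i] \leq \exp(\varepsilon)\, \Pr[\mathrm{RNM'}_{m,\varepsilon}(\vec{c'}) = i],
\]
and similarly integrating the second inequality yields the reverse direction.

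I expect the main obstacle to be the bookkeeping around equation~(\ref{eq:RNM_expand}): to justify splitting off the $i$-th coordinate of $\mathrm{Lap}^m(1/\varepsilon)$, I need commutativity of the Giry monad (already cited as \texttt{bind\_rotate} in HOL-Probability) together with the monad laws, plus the measurability of the $\argmax$ predicate as a function of $\vec{r}$. In Isabelle/HOL, pushing the indicator through the bind requires the measurability of $\vec{r} \mapsto \mathbf{1}[\argmax_j(x_j + r_j) = i]$ and the integrability needed to apply the monotonicity of the Lebesgue integral; these should be available from the locale for lists (Section~\ref{sec:list_space}) together with boundedness of the indicator. Once this reshuffling is set up, the actual inequality is an immediate consequence of Lemma~\ref{lem:DP_RNM_M_i} and monotonicity of integration.
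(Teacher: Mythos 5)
Your proposal is correct and follows essentially the same route as the paper: decompose $\mathrm{RNM'}_{m,\varepsilon}$ via equation~(\ref{eq:RNM_expand}) (justified by the monad laws and commutativity of the Giry monad), recognize the inner integral as $p_i$ from equation~(\ref{eq:p_i}), apply Lemma~\ref{lem:DP_RNM_M_i} pointwise in $\vec{r}_{-i}$, and integrate against $\mathrm{Lap}^{m-1}(1/\varepsilon)$. The only detail you omit, which the formalization treats separately, is the degenerate cases $m=0,1$ where the expansion does not apply directly and $\mathrm{RNM'}$ is shown to be $\return 0$ outright.
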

By the symmetry of (A) and (B) and the symmetry of the statement of this lemma, we also have the same inequations in the case (B).
Hence, in both the cases (A) and (B),
for each $S \subseteq \{0,\ldots,m-1\}$, we conclude (by taking sums over $S$),
\[
\Pr[\mathrm{RNM'}_{m,\varepsilon}(\vec{c}) \in S] \leq \exp(\varepsilon) \Pr[\mathrm{RNM'}_{m,\varepsilon}(\vec{c'}) \in S].
\]

Hence, by Lemma \ref{lem:adjacent:counting_query} and equation (\ref{eq:LapMech_RNM_RNM'_c}), 
we conclude for any adjacent datasets $D, D' \in \Nat^{|\mathcal{X}|}$, 
\[
\Pr[\mathrm{RNM}_{\vec{q},m,\varepsilon}(D) \in S] \leq \exp(\varepsilon) \Pr[\mathrm{RNM}_{\vec{q},m,\varepsilon}(D') \in S].
\]
That is, $\mathrm{RNM}_{\vec{q},m,\varepsilon}$ is $(\varepsilon,0)$-DP.
\end{proof}

\subsubsection{Remark on Formalizing Proposition \ref{prop:DP:RNM_true}}
The formalization of 
Proposition \ref{prop:DP:RNM_true} is based on the above pencil and paper proof.
The formal proof is quite long (about 1,000 lines) due to the following reasons.
First, we often need to deal with \emph{insertions} of elements to lists. 
In the proof, we mainly use the value $p_i$ (defined in equation (\ref{eq:p_i})).
To implement $p_i$, we need to implement the function for fixed $r_j$ ($j \neq i $):
\[
	\lambda r_i.~{\textstyle \argmax_{0 \leq j < m}} (c_j + r_j) .
\]
It contains implicitly the insertion of $c_i + r_i$ to the list $(c_0+r_0,\ldots,c_{i-1}+r_{i-1},c_{i+1}+r_{i+1},c_{m-1}+r_{m-1})$ at $i$-th position.

Second, the position $i$ in such insertions of elements must be arbitrary, that is, we need to consider all $0 \leq i < m$.
Thus, we often need inductions on $i$.
We also need to check corner cases of $i$ and $m$, such as $i = 0$, $i = m-1$, $i \geq m$ and $m = 1$.

Third, due to the insertions, we apply monad laws and commutativity of Giry monad many times.


\section{Formalization of the Report Noisy Max Mechanism in Isabelle/HOL}
\label{sec:formal_RNM}
\begin{figure*}
\begin{isabelle}
\isacommand{lemma}\isamarkupfalse%
\ finer{\isacharunderscore}{\kern0pt}sensitivity{\isacharunderscore}{\kern0pt}counting{\isacharunderscore}{\kern0pt}query{\isacharcolon}{\kern0pt}\isanewline
\ \ \isakeyword{assumes}\ {\isachardoublequoteopen}{\isacharparenleft}{\kern0pt}xs{\isacharcomma}{\kern0pt}ys{\isacharparenright}{\kern0pt}\ {\isasymin}\ adj{\isacharunderscore}{\kern0pt}L{\isadigit{1}}{\isacharunderscore}{\kern0pt}norm{\isachardoublequoteclose}\isanewline
\ \ \isakeyword{shows}\ {\isachardoublequoteopen}list{\isacharunderscore}{\kern0pt}all{\isadigit{2}}\ {\isacharparenleft}{\kern0pt}{\isasymlambda}\ x\ y{\isachardot}{\kern0pt}\ x\ {\isasymge}\ y\ {\isasymand}\ x\ {\isasymle}\ y\ {\isacharplus}{\kern0pt}\ {\isadigit{1}}{\isacharparenright}{\kern0pt}\ {\isacharparenleft}{\kern0pt}counting{\isacharunderscore}{\kern0pt}query\ xs{\isacharparenright}{\kern0pt}\ {\isacharparenleft}{\kern0pt}counting{\isacharunderscore}{\kern0pt}query\ ys{\isacharparenright}{\kern0pt}
\isanewline \ \ \ \ \ \ \ \ \ \ \ \ \ \ \ {\isasymor}\ list{\isacharunderscore}{\kern0pt}all{\isadigit{2}}\ {\isacharparenleft}{\kern0pt}{\isasymlambda}\ x\ y{\isachardot}{\kern0pt}\ x\ {\isasymge}\ y\ {\isasymand}\ x\ {\isasymle}\ y\ {\isacharplus}{\kern0pt}\ {\isadigit{1}}{\isacharparenright}{\kern0pt}\ {\isacharparenleft}{\kern0pt}counting{\isacharunderscore}{\kern0pt}query\ ys{\isacharparenright}{\kern0pt}\ {\isacharparenleft}{\kern0pt}counting{\isacharunderscore}{\kern0pt}query\ xs{\isacharparenright}{\kern0pt}{\isachardoublequoteclose}
\end{isabelle}
\begin{isabelle}
\isacommand{lemma}\isamarkupfalse%
\ fst{\isacharunderscore}{\kern0pt}max{\isacharunderscore}{\kern0pt}argmax{\isacharunderscore}{\kern0pt}adj{\isacharcolon}{\kern0pt}\isanewline
\ \ \isakeyword{fixes}\ xs\ ys\ rs\ {\isacharcolon}{\kern0pt}{\isacharcolon}{\kern0pt}\ {\isachardoublequoteopen}real\ list{\isachardoublequoteclose}\ \isakeyword{and}\ n\ {\isacharcolon}{\kern0pt}{\isacharcolon}{\kern0pt}\ nat\isanewline
\ \ \isakeyword{assumes}\ {\isachardoublequoteopen}length\ xs\ {\isacharequal}{\kern0pt}\ n{\isachardoublequoteclose}\ \isakeyword{and}\ {\isachardoublequoteopen}length\ ys\ {\isacharequal}{\kern0pt}\ n{\isachardoublequoteclose}\  \isakeyword{and}\ {\isachardoublequoteopen}length\ rs\ {\isacharequal}{\kern0pt}\ n{\isachardoublequoteclose}\ \isakeyword{and}\ adj{\isacharcolon}{\kern0pt}\ {\isachardoublequoteopen}list{\isacharunderscore}{\kern0pt}all{\isadigit{2}}\ {\isacharparenleft}{\kern0pt}{\isasymlambda}\ x\ y{\isachardot}{\kern0pt}\ x\ {\isasymge}\ y\ {\isasymand}\ x\ {\isasymle}\ y\ {\isacharplus}{\kern0pt}\ {\isadigit{1}}{\isacharparenright}{\kern0pt}\ xs\ ys{\isachardoublequoteclose}\isanewline
\ \ \isakeyword{shows}\ {\isachardoublequoteopen}{\isacharparenleft}{\kern0pt}fst\ {\isacharparenleft}{\kern0pt}max{\isacharunderscore}{\kern0pt}argmax\ {\isacharparenleft}{\kern0pt}map{\isadigit{2}}\ {\isacharparenleft}{\kern0pt}{\isacharplus}{\kern0pt}{\isacharparenright}{\kern0pt}\ xs\ rs{\isacharparenright}{\kern0pt}{\isacharparenright}{\kern0pt}{\isacharparenright}{\kern0pt}\ {\isasymge}\ {\isacharparenleft}{\kern0pt}fst\ {\isacharparenleft}{\kern0pt}max{\isacharunderscore}{\kern0pt}argmax\ {\isacharparenleft}{\kern0pt}map{\isadigit{2}}\ {\isacharparenleft}{\kern0pt}{\isacharplus}{\kern0pt}{\isacharparenright}{\kern0pt}\ ys\ rs{\isacharparenright}{\kern0pt}{\isacharparenright}{\kern0pt}{\isacharparenright}{\kern0pt}\ {\isasymand}\ {\isacharparenleft}{\kern0pt}fst\ {\isacharparenleft}{\kern0pt}max{\isacharunderscore}{\kern0pt}argmax\ {\isacharparenleft}{\kern0pt}map{\isadigit{2}}\ {\isacharparenleft}{\kern0pt}{\isacharplus}{\kern0pt}{\isacharparenright}{\kern0pt}\ xs\ rs{\isacharparenright}{\kern0pt}{\isacharparenright}{\kern0pt}{\isacharparenright}{\kern0pt}\ {\isasymle}\ {\isacharparenleft}{\kern0pt}fst\ {\isacharparenleft}{\kern0pt}max{\isacharunderscore}{\kern0pt}argmax\ {\isacharparenleft}{\kern0pt}map{\isadigit{2}}\ {\isacharparenleft}{\kern0pt}{\isacharplus}{\kern0pt}{\isacharparenright}{\kern0pt}\ ys\ rs{\isacharparenright}{\kern0pt}{\isacharparenright}{\kern0pt}{\isacharparenright}{\kern0pt}\ {\isacharplus}{\kern0pt}\ {\isadigit{1}}{\isachardoublequoteclose}
\end{isabelle}
\begin{isabelle}
\isacommand{lemma}\isamarkupfalse%
\ argmax{\isacharunderscore}{\kern0pt}insert{\isacharunderscore}{\kern0pt}i{\isacharunderscore}{\kern0pt}i{\isacharprime}{\kern0pt}{\isacharcolon}{\kern0pt}\isanewline
\ \ \isakeyword{assumes}\ {\isachardoublequoteopen}m\ {\isasymle}\ n{\isachardoublequoteclose}\ \isakeyword{and}\ {\isachardoublequoteopen}length\ xs\ {\isacharequal}{\kern0pt}\ n{\isachardoublequoteclose}\isanewline
\ \ \isakeyword{shows}\ {\isachardoublequoteopen}{\isacharparenleft}{\kern0pt}argmax{\isacharunderscore}{\kern0pt}insert\ k\ xs\ m\ {\isacharequal}{\kern0pt}\ m{\isacharparenright}{\kern0pt}\ {\isasymlongleftrightarrow}\ {\isacharparenleft}{\kern0pt}ereal\ k\ {\isasymge}\ {\isacharparenleft}{\kern0pt}fst\ {\isacharparenleft}{\kern0pt}max{\isacharunderscore}{\kern0pt}argmax\ xs{\isacharparenright}{\kern0pt}{\isacharparenright}{\kern0pt}\ {\isasymand}\ {\isacharparenleft}{\kern0pt}ereal\ k\ {\isasymnoteq}\ {\isacharparenleft}{\kern0pt}fst\ {\isacharparenleft}{\kern0pt}max{\isacharunderscore}{\kern0pt}argmax\ {\isacharparenleft}{\kern0pt}drop\ m\ xs{\isacharparenright}{\kern0pt}{\isacharparenright}{\kern0pt}{\isacharparenright}{\kern0pt}{\isacharparenright}{\kern0pt}{\isacharparenright}{\kern0pt}{\isachardoublequoteclose}
\end{isabelle}
\begin{isabelle}
\isacommand{lemma}\isamarkupfalse%
\ DP{\isacharunderscore}{\kern0pt}RNM{\isacharunderscore}{\kern0pt}M{\isacharunderscore}{\kern0pt}i{\isacharcolon}{\kern0pt}\isanewline
\ \ \isakeyword{fixes}\ xs\ ys\ rs\ {\isacharcolon}{\kern0pt}{\isacharcolon}{\kern0pt}\ {\isachardoublequoteopen}real\ list{\isachardoublequoteclose}\ \isakeyword{and}\ x\ y\ {\isacharcolon}{\kern0pt}{\isacharcolon}{\kern0pt}\ real\ \isakeyword{and}\ i\ n\ {\isacharcolon}{\kern0pt}{\isacharcolon}{\kern0pt}\ nat\isanewline
\ \ \isakeyword{assumes}\ {\isachardoublequoteopen}length\ xs\ {\isacharequal}{\kern0pt}\ n{\isachardoublequoteclose}\ \isakeyword{and}\ {\isachardoublequoteopen}length\ ys\ {\isacharequal}{\kern0pt}\ n{\isachardoublequoteclose}\ \isakeyword{and}\ {\isachardoublequoteopen}length\ rs\ {\isacharequal}{\kern0pt}\ n{\isachardoublequoteclose}\isanewline
\ \ \ \ \isakeyword{and}\ adj{\isacharprime}{\kern0pt}{\isacharcolon}{\kern0pt}\ {\isachardoublequoteopen}x\ {\isasymge}\ y\ {\isasymand}\ x\ {\isasymle}\ y\ {\isacharplus}{\kern0pt}\ {\isadigit{1}}{\isachardoublequoteclose}\  \isakeyword{and}\ adj{\isacharcolon}{\kern0pt}\ {\isachardoublequoteopen}list{\isacharunderscore}{\kern0pt}all{\isadigit{2}}\ {\isacharparenleft}{\kern0pt}{\isasymlambda}\ x\ y{\isachardot}{\kern0pt}\ x\ {\isasymge}\ y\ {\isasymand}\ x\ {\isasymle}\ y\ {\isacharplus}{\kern0pt}\ {\isadigit{1}}{\isacharparenright}{\kern0pt}\ xs\ ys{\isachardoublequoteclose}\ \isakeyword{and}\ {\isachardoublequoteopen}i\ {\isasymle}\ n{\isachardoublequoteclose}\isanewline
\ \ \isakeyword{shows}\ {\isachardoublequoteopen}{\isasymP}{\isacharparenleft}{\kern0pt}j\ in\ {\isacharparenleft}{\kern0pt}RNM{\isacharunderscore}{\kern0pt}M\ xs\ rs\ x\ i{\isacharparenright}{\kern0pt}{\isachardot}{\kern0pt}\ j\ {\isacharequal}{\kern0pt}\ i{\isacharparenright}{\kern0pt}\ {\isasymle}\ {\isacharparenleft}{\kern0pt}exp\ {\isasymepsilon}{\isacharparenright}{\kern0pt}\ {\isacharasterisk}{\kern0pt}\ {\isasymP}{\isacharparenleft}{\kern0pt}j\ in\ {\isacharparenleft}{\kern0pt}RNM{\isacharunderscore}{\kern0pt}M\ ys\ rs\ y\ i{\isacharparenright}{\kern0pt}{\isachardot}{\kern0pt}\ j\ {\isacharequal}{\kern0pt}\ i{\isacharparenright}{\kern0pt}
\isanewline \ \ \ \ \ \ \ \ \ \ \ \ \ \ \  {\isasymand}\ {\isasymP}{\isacharparenleft}{\kern0pt}j\ in\ {\isacharparenleft}{\kern0pt}RNM{\isacharunderscore}{\kern0pt}M\ ys\ rs\ y\ i{\isacharparenright}{\kern0pt}{\isachardot}{\kern0pt}\ j\ {\isacharequal}{\kern0pt}\ i{\isacharparenright}{\kern0pt}\ {\isasymle}\ {\isacharparenleft}{\kern0pt}exp\ {\isasymepsilon}{\isacharparenright}{\kern0pt}\ {\isacharasterisk}{\kern0pt}\ {\isasymP}{\isacharparenleft}{\kern0pt}j\ in\ {\isacharparenleft}{\kern0pt}RNM{\isacharunderscore}{\kern0pt}M\ xs\ rs\ x\ i{\isacharparenright}{\kern0pt}{\isachardot}{\kern0pt}\ j\ {\isacharequal}{\kern0pt}\ i{\isacharparenright}{\kern0pt}{\isachardoublequoteclose}
\end{isabelle}
\begin{isabelle}
\isacommand{lemma}\isamarkupfalse%
\ DP{\isacharunderscore}{\kern0pt}RNM{\isacharprime}{\kern0pt}{\isacharunderscore}{\kern0pt}M{\isacharunderscore}{\kern0pt}i{\isacharcolon}{\kern0pt}\isanewline
\ \ \isakeyword{fixes}\ xs\ ys\ {\isacharcolon}{\kern0pt}{\isacharcolon}{\kern0pt}\ {\isachardoublequoteopen}real\ list{\isachardoublequoteclose}\ \isakeyword{and}\ i\ n\ {\isacharcolon}{\kern0pt}{\isacharcolon}{\kern0pt}\ nat\isanewline
\ \ \isakeyword{assumes}\ lxs{\isacharcolon}{\kern0pt}\ {\isachardoublequoteopen}length\ xs\ {\isacharequal}{\kern0pt}\ n{\isachardoublequoteclose}
\ \isakeyword{and}\ lys{\isacharcolon}{\kern0pt}\ {\isachardoublequoteopen}length\ ys\ {\isacharequal}{\kern0pt}\ n{\isachardoublequoteclose}
\ \isakeyword{and}\ adj{\isacharcolon}{\kern0pt}\ {\isachardoublequoteopen}list{\isacharunderscore}{\kern0pt}all{\isadigit{2}}\ {\isacharparenleft}{\kern0pt}{\isasymlambda}\ x\ y{\isachardot}{\kern0pt}\ x\ {\isasymge}\ y\ {\isasymand}\ x\ {\isasymle}\ y\ {\isacharplus}{\kern0pt}\ {\isadigit{1}}{\isacharparenright}{\kern0pt}\ xs\ ys{\isachardoublequoteclose}\isanewline
\ \ \isakeyword{shows}\ {\isachardoublequoteopen}{\isasymP}{\isacharparenleft}{\kern0pt}j\ in\ {\isacharparenleft}{\kern0pt}RNM{\isacharprime}{\kern0pt}\ xs{\isacharparenright}{\kern0pt}{\isachardot}{\kern0pt}\ j\ {\isacharequal}{\kern0pt}\ i{\isacharparenright}{\kern0pt}\ {\isasymle}\ {\isacharparenleft}{\kern0pt}exp\ {\isasymepsilon}{\isacharparenright}{\kern0pt}\ {\isacharasterisk}{\kern0pt}\ {\isasymP}{\isacharparenleft}{\kern0pt}j\ in\ {\isacharparenleft}{\kern0pt}RNM{\isacharprime}{\kern0pt}\ ys{\isacharparenright}{\kern0pt}{\isachardot}{\kern0pt}\ j\ {\isacharequal}{\kern0pt}\ i{\isacharparenright}{\kern0pt}\ {\isasymand}\ {\isasymP}{\isacharparenleft}{\kern0pt}j\ in\ {\isacharparenleft}{\kern0pt}RNM{\isacharprime}{\kern0pt}\ ys{\isacharparenright}{\kern0pt}{\isachardot}{\kern0pt}\ j\ {\isacharequal}{\kern0pt}\ i{\isacharparenright}{\kern0pt}\ {\isasymle}\ {\isacharparenleft}{\kern0pt}exp\ {\isasymepsilon}{\isacharparenright}{\kern0pt}\ {\isacharasterisk}{\kern0pt}\ {\isasymP}{\isacharparenleft}{\kern0pt}j\ in\ {\isacharparenleft}{\kern0pt}RNM{\isacharprime}{\kern0pt}\ xs{\isacharparenright}{\kern0pt}{\isachardot}{\kern0pt}\ j\ {\isacharequal}{\kern0pt}\ i{\isacharparenright}{\kern0pt}{\isachardoublequoteclose}
\end{isabelle}
\caption{Formalizations of Lemmas \ref{lem:adjacent:counting_query}, \ref{lem:fst_max_argmax_adj}, \ref{lem:argmax_insert_i_i}, \ref{lem:DP_RNM_M_i} and \ref{DP_RNM'_M_i}.}\label{Fig:Lemmas_RNM}
\end{figure*}
In this section, we formalize the differential privacy of the report noisy max mechanism $\mathrm{RNM}_{\vec{q},m,\varepsilon}$, namely Propositions \ref{prop:DP:RNM_naive} and \ref{prop:DP:RNM_true}.
The report noisy max mechanism $\mathrm{RNM}_{\vec{q},m,\varepsilon}$ is implemented in Isabelle/HOL as follows:
\begin{isabelle}
\isacommand{definition}\isamarkupfalse%
\ RNM{\isacharunderscore}{\kern0pt}counting\ {\isacharcolon}{\kern0pt}{\isacharcolon}{\kern0pt}\ {\isachardoublequoteopen}real\ {\isasymRightarrow}\ nat\ list\ {\isasymRightarrow}\ nat\ measure{\isachardoublequoteclose}\ \isakeyword{where}\isanewline
\ \ {\isachardoublequoteopen}RNM{\isacharunderscore}{\kern0pt}counting\ {\isasymepsilon}\ x\ {\isacharequal}{\kern0pt}\ do\ {\isacharbraceleft}{\kern0pt}\isanewline
\ y\ {\isasymleftarrow}\ Lap{\isacharunderscore}{\kern0pt}dist{\isacharunderscore}{\kern0pt}list\ {\isacharparenleft}{\kern0pt}{\isadigit{1}}\ {\isacharslash}{\kern0pt}\ {\isasymepsilon}{\isacharparenright}{\kern0pt}\ {\isacharparenleft}{\kern0pt}counting{\isacharunderscore}{\kern0pt}query\ x{\isacharparenright}{\kern0pt}{\isacharsemicolon}{\kern0pt}\isanewline
\ return\ {\isacharparenleft}{\kern0pt}count{\isacharunderscore}{\kern0pt}space\ UNIV{\isacharparenright}{\kern0pt}\ {\isacharparenleft}{\kern0pt}argmax{\isacharunderscore}{\kern0pt}list\ y{\isacharparenright}{\kern0pt}\isanewline
\ {\isacharbraceright}{\kern0pt}{\isachardoublequoteclose}
\end{isabelle}
Here, \isa{counting{\isacharunderscore}{\kern0pt}query} is an implementation of a tuple of counting queries;
\isa{argmax{\isacharunderscore}{\kern0pt}list} is an implementation of the mapping $(y_0,\ldots,y_{m-1}) \mapsto \argmax_{0\leq i < m} y_i$;
\isa{Lap{\isacharunderscore}{\kern0pt}dist{\isacharunderscore}{\kern0pt}list} is the procedure adding noise sampled from $\mathrm{Lap}(1/\varepsilon)$ given in Section \ref{sec:Lap_dist_list}.
We give formal proofs of its differential privacy.

We check the details of \isa{argmax{\isacharunderscore}{\kern0pt}list}, and formalize Lemmas \ref{lem:fst_max_argmax_adj} and \ref{lem:argmax_insert_i_i}.
First, we define the function \isa{max{\isacharunderscore}{\kern0pt}argmax} returning the pair of max and  argmax of a list\footnote{we assume that the maximum of the empty list is $-\infty$.}.
We then define \isa{argmax{\isacharunderscore}{\kern0pt}list} by taking the second components of pairs.
\begin{isabelle}
\isacommand{primrec}\isamarkupfalse%
\ max{\isacharunderscore}{\kern0pt}argmax\ {\isacharcolon}{\kern0pt}{\isacharcolon}{\kern0pt}\ {\isachardoublequoteopen}real\ list\ {\isasymRightarrow}\ {\isacharparenleft}{\kern0pt}ereal\ {\isasymtimes}\ nat{\isacharparenright}{\kern0pt}{\isachardoublequoteclose}\ \isakeyword{where}\isanewline
\ \ {\isachardoublequoteopen}max{\isacharunderscore}{\kern0pt}argmax\ {\isacharbrackleft}{\kern0pt}{\isacharbrackright}{\kern0pt}\ {\isacharequal}{\kern0pt}\ {\isacharparenleft}{\kern0pt}{\isacharminus}{\kern0pt}{\isasyminfinity}{\isacharcomma}{\kern0pt}{\isadigit{0}}{\isacharparenright}{\kern0pt}{\isachardoublequoteclose}{\isacharbar}{\kern0pt}\isanewline
\ \ {\isachardoublequoteopen}max{\isacharunderscore}{\kern0pt}argmax\ {\isacharparenleft}{\kern0pt}x{\isacharhash}{\kern0pt}xs{\isacharparenright}{\kern0pt}\ {\isacharequal}{\kern0pt}\ {\isacharparenleft}{\kern0pt}let\ {\isacharparenleft}{\kern0pt}m{\isacharcomma}{\kern0pt}\ i{\isacharparenright}{\kern0pt}\ {\isacharequal}{\kern0pt}\ max{\isacharunderscore}{\kern0pt}argmax\ xs\ in\ if\ x\ {\isachargreater}{\kern0pt}\ m\ then\ {\isacharparenleft}{\kern0pt}x{\isacharcomma}{\kern0pt}{\isadigit{0}}{\isacharparenright}{\kern0pt}\ else\ {\isacharparenleft}{\kern0pt}m{\isacharcomma}{\kern0pt}\ Suc\ i{\isacharparenright}{\kern0pt}{\isacharparenright}{\kern0pt}{\isachardoublequoteclose}
\end{isabelle}
\begin{isabelle}
\isacommand{definition}\isamarkupfalse%
\ argmax{\isacharunderscore}{\kern0pt}list\ {\isacharcolon}{\kern0pt}{\isacharcolon}{\kern0pt}\ {\isachardoublequoteopen}real\ list\ {\isasymRightarrow}\ nat{\isachardoublequoteclose}\ \isakeyword{where}\isanewline
\ \ {\isachardoublequoteopen}argmax{\isacharunderscore}{\kern0pt}list\ {\isacharequal}{\kern0pt}\ snd\ o\ max{\isacharunderscore}{\kern0pt}argmax\ {\isachardoublequoteclose}
\end{isabelle}

\subsection{Differential Privacy of $\mathrm{RNM}_{\vec{q},m,\varepsilon}$}
\subsubsection{Naive Evaluation}
We formalize Proposition \ref{prop:DP:RNM_naive}.
Finally, by applying \isa{DP{\isacharunderscore}{\kern0pt}Lap{\isacharunderscore}{\kern0pt}dist{\isacharunderscore}{\kern0pt}list} and the basic properties of differential privacy (formalized in Isabelle/HOL),
we conclude the following formal version of Proposition \ref{prop:DP:RNM_naive}:
\begin{isabelle}
\isacommand{theorem}\isamarkupfalse%
\ Naive{\isacharunderscore}{\kern0pt}differential{\isacharunderscore}{\kern0pt}privacy{\isacharunderscore}{\kern0pt}LapMech{\isacharunderscore}{\kern0pt}RNM{\isacharunderscore}{\kern0pt}AFDP{\isacharcolon}{\kern0pt}\isanewline
\isakeyword{assumes}\ pose{\isacharcolon}{\kern0pt}\ {\isachardoublequoteopen}{\isacharparenleft}{\kern0pt}{\isasymepsilon}{\isacharcolon}{\kern0pt}{\isacharcolon}{\kern0pt}real{\isacharparenright}{\kern0pt}\ {\isachargreater}{\kern0pt}\ {\isadigit{0}}{\isachardoublequoteclose}\isanewline
\isakeyword{shows}\ {\isachardoublequoteopen}differential{\isacharunderscore}{\kern0pt}privacy{\isacharunderscore}{\kern0pt}AFDP\ {\isacharparenleft}{\kern0pt}RNM{\isacharunderscore}{\kern0pt}counting\ {\isasymepsilon}{\isacharparenright}{\kern0pt}{\isacharparenleft}{\kern0pt}real\ {\isacharparenleft}{\kern0pt}m\ {\isacharasterisk}{\kern0pt}\ {\isasymepsilon}{\isacharparenright}{\kern0pt}{\isacharparenright}{\kern0pt}\ {\isadigit{0}}{\isachardoublequoteclose}
\end{isabelle}

\subsubsection{Finar Evaluation}
We formalize Proposition \ref{prop:DP:RNM_true}.
It suffices to interpret the locale \isa{Lap{\isacharunderscore}{\kern0pt}Mechanism{\isacharunderscore}{\kern0pt}RNM{\isacharunderscore}{\kern0pt}mainpart} with \isa{counting{\isacharunderscore}{\kern0pt}query}. 
The remaining task is formalizing Lemma \ref{lem:adjacent:counting_query}.
It is given as \isa{ finer{\isacharunderscore}{\kern0pt}sensitivity{\isacharunderscore}{\kern0pt}counting{\isacharunderscore}{\kern0pt}query} in Figure \ref{Fig:Lemmas_RNM}.

Combining the proofs on the main body, we conclude the formal version of Proposition \ref{prop:DP:RNM_true}:
\begin{isabelle}
\isacommand{theorem}\isamarkupfalse%
\ differential{\isacharunderscore}{\kern0pt}privacy{\isacharunderscore}{\kern0pt}LapMech{\isacharunderscore}{\kern0pt}RNM{\isacharunderscore}{\kern0pt}AFDP{\isacharcolon}{\kern0pt}\isanewline
\ \ \isakeyword{assumes}\ pose{\isacharcolon}{\kern0pt}\ {\isachardoublequoteopen}{\isacharparenleft}{\kern0pt}{\isasymepsilon}{\isacharcolon}{\kern0pt}{\isacharcolon}{\kern0pt}real{\isacharparenright}{\kern0pt}\ {\isachargreater}{\kern0pt}\ {\isadigit{0}}{\isachardoublequoteclose}\isanewline
\ \ \isakeyword{shows}\ {\isachardoublequoteopen}differential{\isacharunderscore}{\kern0pt}privacy{\isacharunderscore}{\kern0pt}AFDP\ {\isacharparenleft}{\kern0pt}RNM{\isacharunderscore}{\kern0pt}counting\ {\isasymepsilon}{\isacharparenright}{\kern0pt}\ {\isasymepsilon}\ {\isadigit{0}}{\isachardoublequoteclose}
\end{isabelle}

\subsection{Differential Privacy of the Main Body of $\mathrm{RNM}_{\vec{q},m,\varepsilon}$}
We recall $\mathrm{RNM}_{\vec{q},m,\varepsilon} = \mathrm{RNM'}_{m,\varepsilon} \circ \vec{q}$
(equation (\ref{eq:LapMech_RNM_RNM'_c}) in Section \ref{sec:report_noisy_max_DP}).
Based on this, we first prove the differential privacy of $\mathrm{RNM'}_{m,\varepsilon} \circ c$ for general $X$, $R^{\mathrm{adj}}$ and $c \colon X \to \Real^m$
satisfying similar statement as Lemma \ref{lem:adjacent:counting_query} (intuitively ``(A) or (B) holds'').
Later, we instantiate them as expected.

For proof, we introduce the following locale.
\begin{isabelle}
\isacommand{locale}\isamarkupfalse%
\ Lap{\isacharunderscore}{\kern0pt}Mechanism{\isacharunderscore}{\kern0pt}RNM{\isacharunderscore}{\kern0pt}mainpart\ {\isacharequal}{\kern0pt}\isanewline
\ \ \isakeyword{fixes}\ M{\isacharcolon}{\kern0pt}{\isacharcolon}{\kern0pt}{\isachardoublequoteopen}{\isacharprime}{\kern0pt}a\ measure{\isachardoublequoteclose}\isanewline
\ \ \ \ \isakeyword{and}\ adj{\isacharcolon}{\kern0pt}{\isacharcolon}{\kern0pt}{\isachardoublequoteopen}{\isacharprime}{\kern0pt}a\ rel{\isachardoublequoteclose}\isanewline
\ \ \ \ \isakeyword{and}\ c{\isacharcolon}{\kern0pt}{\isacharcolon}{\kern0pt}{\isachardoublequoteopen}{\isacharprime}{\kern0pt}a\ {\isasymRightarrow}\ real\ list{\isachardoublequoteclose}\isanewline
\ \ \isakeyword{assumes}\ c{\isacharcolon}{\kern0pt}\ {\isachardoublequoteopen}c\ {\isasymin}\ M\ {\isasymrightarrow}\isactrlsub M\ listM\ borel{\isachardoublequoteclose}\isanewline
\ \ \ \ \isakeyword{and}\ cond{\isacharcolon}{\kern0pt}\ {\isachardoublequoteopen}{\isasymforall}\ {\isacharparenleft}{\kern0pt}x{\isacharcomma}{\kern0pt}y{\isacharparenright}{\kern0pt}\ {\isasymin}\ adj{\isachardot}{\kern0pt}\ list{\isacharunderscore}{\kern0pt}all{\isadigit{2}}\ {\isacharparenleft}{\kern0pt}{\isasymlambda}x\ y{\isachardot}{\kern0pt}\ y\ {\isasymle}\ x\ {\isasymand}\ x\ {\isasymle}\ y\ {\isacharplus}{\kern0pt}\ {\isadigit{1}}{\isacharparenright}{\kern0pt}\ {\isacharparenleft}{\kern0pt}c\ x{\isacharparenright}{\kern0pt}\ {\isacharparenleft}{\kern0pt}c\ y{\isacharparenright}{\kern0pt}\ {\isasymor}\ list{\isacharunderscore}{\kern0pt}all{\isadigit{2}}\ {\isacharparenleft}{\kern0pt}{\isasymlambda}x\ y{\isachardot}{\kern0pt}\ y\ {\isasymle}\ x\ {\isasymand}\ x\ {\isasymle}\ y\ {\isacharplus}{\kern0pt}\ {\isadigit{1}}{\isacharparenright}{\kern0pt}\ {\isacharparenleft}{\kern0pt}c\ y{\isacharparenright}{\kern0pt}\ {\isacharparenleft}{\kern0pt}c\ x{\isacharparenright}{\kern0pt}{\isachardoublequoteclose}\isanewline
\ \ \ \ \isakeyword{and}\ adj{\isacharcolon}{\kern0pt}\ {\isachardoublequoteopen}adj\ {\isasymsubseteq}\ space\ M\ {\isasymtimes}\ space\ M{\isachardoublequoteclose}\isanewline
\isakeyword{begin}%
\end{isabelle}
\begin{isabelle}
\isacommand{definition}\isamarkupfalse%
\ LapMech{\isacharunderscore}{\kern0pt}RNM\ {\isacharcolon}{\kern0pt}{\isacharcolon}{\kern0pt}\ {\isachardoublequoteopen}real\ {\isasymRightarrow}\ {\isacharprime}{\kern0pt}a\ {\isasymRightarrow}\ nat\ measure{\isachardoublequoteclose}\ \isakeyword{where}\isanewline
\ \ {\isachardoublequoteopen}LapMech{\isacharunderscore}{\kern0pt}RNM\ {\isasymepsilon}\ x\ {\isacharequal}{\kern0pt}\ do\ {\isacharbraceleft}{\kern0pt}y\ {\isasymleftarrow}\ Lap{\isacharunderscore}{\kern0pt}dist{\isacharunderscore}{\kern0pt}list\ {\isacharparenleft}{\kern0pt}{\isadigit{1}}\ {\isacharslash}{\kern0pt}\ {\isasymepsilon}{\isacharparenright}{\kern0pt}\ {\isacharparenleft}{\kern0pt}c\ x{\isacharparenright}{\kern0pt}{\isacharsemicolon}{\kern0pt}\ return\ {\isacharparenleft}{\kern0pt}count{\isacharunderscore}{\kern0pt}space\ UNIV{\isacharparenright}{\kern0pt}\ {\isacharparenleft}{\kern0pt}argmax{\isacharunderscore}{\kern0pt}list\ y{\isacharparenright}{\kern0pt}{\isacharbraceright}{\kern0pt}{\isachardoublequoteclose}
\end{isabelle}
%

We formalize Lemmas \ref{lem:DP_RNM_M_i} and \ref{DP_RNM'_M_i}, and
prove the differential privacy of \isa{LapMech{\isacharunderscore}{\kern0pt}RNM}. 
We set the following context, and focus on the main body \isa{RNM{\isacharprime}}.

\begin{isabelle}
\isacommand{context}\isamarkupfalse%
\isanewline
\ \ \isakeyword{fixes}\ {\isasymepsilon}{\isacharcolon}{\kern0pt}{\isacharcolon}{\kern0pt}real
\isanewline
\ \ \isakeyword{assumes}\ pose{\isacharcolon}{\kern0pt}{\isachardoublequoteopen}{\isadigit{0}}\ {\isacharless}{\kern0pt}\ {\isasymepsilon}{\isachardoublequoteclose}\isanewline
\isakeyword{begin}%
\end{isabelle}
\begin{isabelle}
\isacommand{definition}\isamarkupfalse%
\ RNM{\isacharprime}{\kern0pt}\ {\isacharcolon}{\kern0pt}{\isacharcolon}{\kern0pt}\ {\isachardoublequoteopen}real\ list\ {\isasymRightarrow}\ nat\ measure{\isachardoublequoteclose}\ \isakeyword{where}\isanewline
\ \ {\isachardoublequoteopen}RNM{\isacharprime}{\kern0pt}\ zs\ {\isacharequal}{\kern0pt}\ do\ {\isacharbraceleft}{\kern0pt}y\ {\isasymleftarrow}\ Lap{\isacharunderscore}{\kern0pt}dist{\isacharunderscore}{\kern0pt}list\ {\isacharparenleft}{\kern0pt}{\isadigit{1}}\ {\isacharslash}{\kern0pt}\ {\isasymepsilon}{\isacharparenright}{\kern0pt}\ {\isacharparenleft}{\kern0pt}zs{\isacharparenright}{\kern0pt}{\isacharsemicolon}{\kern0pt}\ return\ {\isacharparenleft}{\kern0pt}count{\isacharunderscore}{\kern0pt}space\ UNIV{\isacharparenright}{\kern0pt}\ {\isacharparenleft}{\kern0pt}argmax{\isacharunderscore}{\kern0pt}list\ y{\isacharparenright}{\kern0pt}{\isacharbraceright}{\kern0pt}{\isachardoublequoteclose}
\end{isabelle}


We formalize $p_i$ (see equation (\ref{eq:p_i})) as follows:
\begin{isabelle}
\isacommand{definition}\isamarkupfalse%
\ RNM{\isacharunderscore}{\kern0pt}M\ {\isacharcolon}{\kern0pt}{\isacharcolon}{\kern0pt}\ {\isachardoublequoteopen}real\ list\ {\isasymRightarrow}\ real\ list\ {\isasymRightarrow}\ real\ {\isasymRightarrow}\ nat\ {\isasymRightarrow}\ nat\ measure{\isachardoublequoteclose}\ \isakeyword{where}\isanewline
\ \ {\isachardoublequoteopen}RNM{\isacharunderscore}{\kern0pt}M\ cs\ rs\ d\ i\ {\isacharequal}{\kern0pt}\ do{\isacharbraceleft}{\kern0pt}r\ {\isasymleftarrow}\ {\isacharparenleft}{\kern0pt}Lap{\isacharunderscore}{\kern0pt}dist{\isadigit{0}}\ {\isacharparenleft}{\kern0pt}{\isadigit{1}}{\isacharslash}{\kern0pt}{\isasymepsilon}{\isacharparenright}{\kern0pt}{\isacharparenright}{\kern0pt}{\isacharsemicolon}{\kern0pt}\ return\ {\isacharparenleft}{\kern0pt}count{\isacharunderscore}{\kern0pt}space\ UNIV{\isacharparenright}{\kern0pt}\ {\isacharparenleft}{\kern0pt}argmax{\isacharunderscore}{\kern0pt}insert\ {\isacharparenleft}{\kern0pt}r{\isacharplus}{\kern0pt}d{\isacharparenright}{\kern0pt}\ {\isacharparenleft}{\kern0pt}{\isacharparenleft}{\kern0pt}{\isasymlambda}\ {\isacharparenleft}{\kern0pt}xs{\isacharcomma}{\kern0pt}ys{\isacharparenright}{\kern0pt}{\isachardot}{\kern0pt}\ {\isacharparenleft}{\kern0pt}map{\isadigit{2}}\ {\isacharparenleft}{\kern0pt}{\isacharplus}{\kern0pt}{\isacharparenright}{\kern0pt}\ xs\ ys{\isacharparenright}{\kern0pt}{\isacharparenright}{\kern0pt}{\isacharparenleft}{\kern0pt}cs{\isacharcomma}{\kern0pt}\ rs{\isacharparenright}{\kern0pt}{\isacharparenright}{\kern0pt}\ i{\isacharparenright}{\kern0pt}{\isacharbraceright}{\kern0pt}{\isachardoublequoteclose}
\end{isabelle}
We here relate variables \isa{r}, \isa{d}, \isa{rs} and \isa{cs} with
$r_i$, $c_i$, $\vec{r}_{- i}$ and $\vec{c}_{- i}$ respectively.

Next, we define the function that combining argmax and insertion of elements.
\begin{isabelle}
\isacommand{definition}\isamarkupfalse%
\ argmax{\isacharunderscore}{\kern0pt}insert\ {\isacharcolon}{\kern0pt}{\isacharcolon}{\kern0pt}\ {\isachardoublequoteopen}real\ {\isasymRightarrow}\ real\ list\ {\isasymRightarrow}\ nat\ {\isasymRightarrow}\ nat{\isachardoublequoteclose}\ \isakeyword{where}\isanewline
\ \ {\isachardoublequoteopen}argmax{\isacharunderscore}{\kern0pt}insert\ k\ ks\ i\ {\isacharequal}{\kern0pt}\ argmax{\isacharunderscore}{\kern0pt}list\ {\isacharparenleft}{\kern0pt}list{\isacharunderscore}{\kern0pt}insert\ k\ ks\ i{\isacharparenright}{\kern0pt}{\isachardoublequoteclose}
\end{isabelle}
Here, \isa{list{\isacharunderscore}{\kern0pt}insert\ k\ ks\ i = {\isacharparenleft}{\kern0pt}take\ i\ ks\ {\isacharat}{\kern0pt}\ {\isacharbrackleft}{\kern0pt}k{\isacharbrackright}{\kern0pt}\ {\isacharat}{\kern0pt}\ drop\ i\ ks{\isacharparenright}} 
is the list made by inserting the element \isa{k} to the list \isa{ks} at the \isa{i}-th position.
When \isa{k} and \isa{xs} correspond to $d_i$ and $\vec{d}_{- i}$ respectively, 
\isa{argmax{\isacharunderscore}{\kern0pt}insert\ k\ ks\ i} corresponds to ${\textstyle \argmax_{0 \leq j < m}} d_j$.
We then formalize Lemma \ref{lem:argmax_insert_i_i} as \isa{argmax{\isacharunderscore}{\kern0pt}insert{\isacharunderscore}{\kern0pt}i{\isacharunderscore}{\kern0pt}i{\isacharprime}} in Figure \ref{Fig:Lemmas_RNM}. 

Then, \isa{RNM{\isacharunderscore}{\kern0pt}M\ cs\ rs\ d\ i} is an implementation of
the following probability distribution:
\[
\{r_i \leftarrow \mathrm{Lap}(1/\varepsilon);\return {\textstyle \argmax_{0 \leq j < m}} (c_j + r_j) \} \in \prob(\Nat).
\]
Then, $p_i$ is implemented as \isa{{\isasymP}{\isacharparenleft}{\kern0pt}j\ in\ {\isacharparenleft}{\kern0pt}RNM{\isacharunderscore}{\kern0pt}M\ xs\ rs\ x\ i{\isacharparenright}{\kern0pt}{\isachardot}{\kern0pt}\ j\ {\isacharequal}{\kern0pt}\ i{\isacharparenright}{\kern0pt}}.
We then formalize Lemma \ref{lem:DP_RNM_M_i}  as \isa{DP{\isacharunderscore}{\kern0pt}RNM{\isacharunderscore}{\kern0pt}M{\isacharunderscore}{\kern0pt}i} in Figure \ref{Fig:Lemmas_RNM}.

Next, we formalize Lemma \ref{DP_RNM'_M_i} as \isa{DP{\isacharunderscore}{\kern0pt}RNM{\isacharprime}{\kern0pt}{\isacharunderscore}{\kern0pt}M{\isacharunderscore}{\kern0pt}i} in Figure \ref{Fig:Lemmas_RNM}.
If $1 < m$, we formalize equation (\ref{eq:RNM_expand}):
\begin{isabelle}
\isacommand{lemma}\isamarkupfalse%
\ RNM{\isacharprime}{\kern0pt}{\isacharunderscore}{\kern0pt}expand{\isacharcolon}{\kern0pt}\isanewline
\ \ \isakeyword{fixes}\ n\ {\isacharcolon}{\kern0pt}{\isacharcolon}{\kern0pt}\ nat\isanewline
\ \ \isakeyword{assumes}\ {\isachardoublequoteopen}length\ xs\ {\isacharequal}{\kern0pt}\ n{\isachardoublequoteclose}\ \isakeyword{and}\ {\isachardoublequoteopen}i\ {\isasymle}\ n{\isachardoublequoteclose}\isanewline
\ \ \isakeyword{shows}\ {\isachardoublequoteopen}{\isacharparenleft}{\kern0pt}RNM{\isacharprime}{\kern0pt}\ {\isacharparenleft}{\kern0pt}list{\isacharunderscore}{\kern0pt}insert\ x\ xs\ i{\isacharparenright}{\kern0pt}{\isacharparenright}{\kern0pt}\ {\isacharequal}{\kern0pt}\ do{\isacharbraceleft}{\kern0pt}rs\ {\isasymleftarrow}\ {\isacharparenleft}{\kern0pt}Lap{\isacharunderscore}{\kern0pt}dist{\isadigit{0}}{\isacharunderscore}{\kern0pt}list\ {\isacharparenleft}{\kern0pt}{\isadigit{1}}\ {\isacharslash}{\kern0pt}\ {\isasymepsilon}{\isacharparenright}{\kern0pt}\ {\isacharparenleft}{\kern0pt}length\ xs{\isacharparenright}{\kern0pt}{\isacharparenright}{\kern0pt}{\isacharsemicolon}{\kern0pt}\ {\isacharparenleft}{\kern0pt}RNM{\isacharunderscore}{\kern0pt}M\ xs\ rs\ x\ i{\isacharparenright}{\kern0pt}{\isacharbraceright}{\kern0pt}{\isachardoublequoteclose}
\end{isabelle}
If $m = 0,1$, we show \isa{RNM' xs = return (count\_space UNIV) 0} directly.
For such corner cases,  we need to give extra lemmas.

Finally, we conclude the ``differential privacy'' of the main body $ \mathrm{RNM'}_{m,\varepsilon}$.
We temporally use an asymmetric relation corresponding to (A) in Lemma \ref{lem:adjacent:counting_query}.
\begin{isabelle}
\isacommand{lemma}\isamarkupfalse%
\ differential{\isacharunderscore}{\kern0pt}privacy{\isacharunderscore}{\kern0pt}LapMech{\isacharunderscore}{\kern0pt}RNM{\isacharprime}{\kern0pt}{\isacharcolon}{\kern0pt}\isanewline
\ \ \isakeyword{shows}\ {\isachardoublequoteopen}differential{\isacharunderscore}{\kern0pt}privacy\ RNM{\isacharprime}{\kern0pt}\ {\isacharbraceleft}{\kern0pt}{\isacharparenleft}{\kern0pt}xs{\isacharcomma}{\kern0pt}\ ys{\isacharparenright}{\kern0pt}\ {\isacharbar}{\kern0pt}\ xs\ ys{\isachardot}{\kern0pt}\ list{\isacharunderscore}{\kern0pt}all{\isadigit{2}}\ {\isacharparenleft}{\kern0pt}{\isasymlambda}\ x\ y{\isachardot}{\kern0pt}\ x\ {\isasymge}\ y\ {\isasymand}\ x\ {\isasymle}\ y\ {\isacharplus}{\kern0pt}\ {\isadigit{1}}{\isacharparenright}{\kern0pt}\ xs\ ys\ {\isacharbraceright}{\kern0pt}\ {\isasymepsilon}\ {\isadigit{0}}{\isachardoublequoteclose}
\end{isabelle}
From the symmetry between conditions (A) and (B), that asymmetric relation can be extended to the symmetric relation corresponding to ``(A) or (B) holds''.
Then, by the assumption on \isa{c} and \isa{differential{\isacharunderscore}{\kern0pt}privacy{\isacharunderscore}{\kern0pt}preprocessing}, 
we conclude the differential privacy of \isa{LapMech{\isacharunderscore}{\kern0pt}RNM}.
\begin{isabelle}
\isacommand{theorem}\isamarkupfalse%
\ differential{\isacharunderscore}{\kern0pt}privacy{\isacharunderscore}{\kern0pt}LapMech{\isacharunderscore}{\kern0pt}RNM{\isacharcolon}{\kern0pt}\isanewline
\ \ \isakeyword{shows}\ {\isachardoublequoteopen}differential{\isacharunderscore}{\kern0pt}privacy\ {\isacharparenleft}{\kern0pt}LapMech{\isacharunderscore}{\kern0pt}RNM\ {\isasymepsilon}{\isacharparenright}{\kern0pt}\ adj\ {\isasymepsilon}\ {\isadigit{0}}{\isachardoublequoteclose}
\end{isabelle}

\subsection{Instantiation of Counting Queries}
We here check the details of \isa{counting{\isacharunderscore}{\kern0pt}query} to formalize Propositions \ref{prop:DP:RNM_naive} and \ref{prop:DP:RNM_naive}.
We introduce the following locale to implement a tuple $\vec{q} \colon \Nat^{|\mathcal{X}|} \to \Nat^m$ of counting queries.
\begin{isabelle}
\isacommand{locale}\isamarkupfalse%
\ Lap{\isacharunderscore}{\kern0pt}Mechanism{\isacharunderscore}{\kern0pt}RNM{\isacharunderscore}{\kern0pt}counting\ {\isacharequal}{\kern0pt}\isanewline
\ \ \isakeyword{fixes}\ n{\isacharcolon}{\kern0pt}{\isacharcolon}{\kern0pt}nat\ \isakeyword{and}\ m{\isacharcolon}{\kern0pt}{\isacharcolon}{\kern0pt}nat\ \isakeyword{and}\ Q\ {\isacharcolon}{\kern0pt}{\isacharcolon}{\kern0pt}\ {\isachardoublequoteopen}nat\ {\isasymRightarrow}\ nat\ {\isasymRightarrow}\ bool{\isachardoublequoteclose}\ \isanewline
\ \ \isakeyword{assumes}\ {\isachardoublequoteopen}{\isasymAnd}i{\isachardot}{\kern0pt}\ i\ {\isasymin}\ {\isacharbraceleft}{\kern0pt}{\isadigit{0}}{\isachardot}{\kern0pt}{\isachardot}{\kern0pt}{\isacharless}{\kern0pt}m{\isacharbraceright}{\kern0pt}\ {\isasymLongrightarrow}\ {\isacharparenleft}{\kern0pt}Q\ i{\isacharparenright}{\kern0pt}\ {\isasymin}\ UNIV\ {\isasymrightarrow}\ UNIV{\isachardoublequoteclose}\isanewline
\isakeyword{begin}
\end{isabelle}
Here \isa{n} is the number $|\mathcal{X}|$ of data types, and \isa{m} is the number of counting queries.
Each \isa{Q i} is a predicate corresponding to a subset $\mathcal{X}_{q_i}$ of types which counting query $q_i$ counts.  
The assumption is for the totality of each \isa{Q i}.
We also interpret locales \isa{L{\isadigit{1}}{\isacharunderscore}{\kern0pt}norm{\isacharunderscore}{\kern0pt}list} 
and \isa{results{\isacharunderscore}{\kern0pt}AFDP} to instantiate $\Nat^{|\mathcal{X}|}$, adjacency of datasets, and to recall basic results of DP. 

We then implement each counting query $q_i \colon \Nat^\mathcal{X} \to \Nat$.
\begin{isabelle}
\isacommand{primrec}\isamarkupfalse%
\ counting{\isacharprime}{\kern0pt}{\isacharcolon}{\kern0pt}{\isacharcolon}{\kern0pt}{\isachardoublequoteopen}nat\ {\isasymRightarrow}\ nat\ {\isasymRightarrow}\ nat\ list\ {\isasymRightarrow}\ nat{\isachardoublequoteclose}\ \isakeyword{where}\isanewline
\ \ {\isachardoublequoteopen}counting{\isacharprime}{\kern0pt}\ i\ {\isadigit{0}}\ {\isacharunderscore}{\kern0pt}\ {\isacharequal}{\kern0pt}\ {\isadigit{0}}{\isachardoublequoteclose}\ {\isacharbar}{\kern0pt}\isanewline
\ \ {\isachardoublequoteopen}counting{\isacharprime}{\kern0pt}\ i\ {\isacharparenleft}{\kern0pt}Suc\ k{\isacharparenright}{\kern0pt}\ xs\ {\isacharequal}{\kern0pt}\ {\isacharparenleft}{\kern0pt}if\ Q\ i\ k\ then\ {\isacharparenleft}{\kern0pt}nth{\isacharunderscore}{\kern0pt}total\ {\isadigit{0}}\ xs\ k{\isacharparenright}{\kern0pt}\ else\ {\isadigit{0}}{\isacharparenright}{\kern0pt}\ {\isacharplus}{\kern0pt}\ counting{\isacharprime}{\kern0pt}\ i\ k\ xs{\isachardoublequoteclose}
\end{isabelle}
\begin{isabelle}
\isacommand{definition}\isamarkupfalse%
\ counting{\isacharcolon}{\kern0pt}{\isacharcolon}{\kern0pt}{\isachardoublequoteopen}nat\ {\isasymRightarrow}\ nat\ list\ {\isasymRightarrow}\ nat{\isachardoublequoteclose}\ \isakeyword{where}\isanewline
\ \ {\isachardoublequoteopen}counting\ i\ xs\ {\isacharequal}{\kern0pt}\ counting{\isacharprime}{\kern0pt}\ i\ {\isacharparenleft}{\kern0pt}length\ xs{\isacharparenright}{\kern0pt}\ xs{\isachardoublequoteclose}
\end{isabelle}
Here, each counting query $q_k$ is implemented as  \isa{counting k}. 
The function \isa{nth{\isacharunderscore}{\kern0pt}total} is a totalized version of Isabelle/HOL's \isa{nth}, which is more convenient for measurability proofs.

We then implement the tuple $ \vec{q} = (q_0,\ldots,q_{m-1})$ as:
\begin{isabelle}
\isacommand{definition}\isamarkupfalse%
\ counting{\isacharunderscore}{\kern0pt}query{\isacharcolon}{\kern0pt}{\isacharcolon}{\kern0pt}{\isachardoublequoteopen}nat\ list\ {\isasymRightarrow}\ nat\ list{\isachardoublequoteclose}\ \isakeyword{where}\isanewline
\ \ {\isachardoublequoteopen}counting{\isacharunderscore}{\kern0pt}query\ xs\ {\isacharequal}{\kern0pt}\ map\ {\isacharparenleft}{\kern0pt}{\isasymlambda}\ k{\isachardot}{\kern0pt}\ counting\ k\ xs{\isacharparenright}{\kern0pt}\ {\isacharbrackleft}{\kern0pt}{\isadigit{0}}{\isachardot}{\kern0pt}{\isachardot}{\kern0pt}{\isacharless}{\kern0pt}m{\isacharbrackright}{\kern0pt}{\isachardoublequoteclose}
\end{isabelle}

\section{$L_1$-Metric on Lists}\label{sec:list_metric}
In previous sections, we have formalized $\Nat^{|\mathcal{X}|}$ and the adjacency datasets
the locale \isa{L{\isadigit{1}}{\isacharunderscore}{\kern0pt}norm{\isacharunderscore}{\kern0pt}list}.
In this section, we show the details of the locale. 

For a metric space $(A,d_A)$, we define the metric space $(A^n,d_{A^n})$ where the carrier set $A^n$ is the set of $A$-lists with length $n$, 
and the metric ($L_1$-norm) $d_{A^n}$ on $A^n$ is defined by
\[
d_{A^n}(xs,ys) = \sum_{0 \leq i < n} d_A(xs[i], ys[i]).
\]
We formalize it using the locale \isa{Metric\_space} of metric spaces in the standard Isabelle/HOL library.
We give the following locale for making the metric space $(A^n, d_{A^n})$ from a metric space $(A,d)$ (stored in \isa{Metric\_space}) and length $n$
\begin{isabelle}
\isacommand{locale}\isamarkupfalse%
\ L{\isadigit{1}}{\isacharunderscore}{\kern0pt}norm{\isacharunderscore}{\kern0pt}list\ {\isacharequal}{\kern0pt}\ Metric{\isacharunderscore}{\kern0pt}space\ {\isacharplus}{\kern0pt}\isanewline
\ \ \isakeyword{fixes}\ n{\isacharcolon}{\kern0pt}{\isacharcolon}{\kern0pt}nat\isanewline
\isakeyword{begin}
\end{isabelle}
\begin{isabelle}
\isacommand{definition}\isamarkupfalse%
\ space{\isacharunderscore}{\kern0pt}L{\isadigit{1}}{\isacharunderscore}{\kern0pt}norm{\isacharunderscore}{\kern0pt}list\ {\isacharcolon}{\kern0pt}{\isacharcolon}{\kern0pt}{\isachardoublequoteopen}{\isacharparenleft}{\kern0pt}{\isacharprime}{\kern0pt}a\ list{\isacharparenright}{\kern0pt}\ set{\isachardoublequoteclose}\ \isakeyword{where}\isanewline
\ \ {\isachardoublequoteopen}space{\isacharunderscore}{\kern0pt}L{\isadigit{1}}{\isacharunderscore}{\kern0pt}norm{\isacharunderscore}{\kern0pt}list\ {\isacharequal}{\kern0pt}\ {\isacharbraceleft}{\kern0pt}xs{\isachardot}{\kern0pt}\ xs{\isasymin}\ lists\ M\ {\isasymand}\ length\ xs\ {\isacharequal}{\kern0pt}\ n{\isacharbraceright}{\kern0pt}{\isachardoublequoteclose}
\end{isabelle}
\begin{isabelle}
\isacommand{definition}\isamarkupfalse%
\ dist{\isacharunderscore}{\kern0pt}L{\isadigit{1}}{\isacharunderscore}{\kern0pt}norm{\isacharunderscore}{\kern0pt}list\ {\isacharcolon}{\kern0pt}{\isacharcolon}{\kern0pt}{\isachardoublequoteopen}{\isacharparenleft}{\kern0pt}{\isacharprime}{\kern0pt}a\ list{\isacharparenright}{\kern0pt}\ {\isasymRightarrow}\ {\isacharparenleft}{\kern0pt}{\isacharprime}{\kern0pt}a\ list{\isacharparenright}{\kern0pt}\ {\isasymRightarrow}\ real{\isachardoublequoteclose}\ \isakeyword{where}\isanewline
\ \ {\isachardoublequoteopen}dist{\isacharunderscore}{\kern0pt}L{\isadigit{1}}{\isacharunderscore}{\kern0pt}norm{\isacharunderscore}{\kern0pt}list\ xs\ ys\ {\isacharequal}{\kern0pt}\ {\isacharparenleft}{\kern0pt}{\isasymSum}\ i{\isasymin}{\isacharbraceleft}{\kern0pt}{\isadigit{1}}{\isachardot}{\kern0pt}{\isachardot}{\kern0pt}n{\isacharbraceright}{\kern0pt}{\isachardot}{\kern0pt}\ d\ {\isacharparenleft}{\kern0pt}nth\ xs\ {\isacharparenleft}{\kern0pt}i{\isacharminus}{\kern0pt}{\isadigit{1}}{\isacharparenright}{\kern0pt}{\isacharparenright}{\kern0pt}\ {\isacharparenleft}{\kern0pt}nth\ ys\ {\isacharparenleft}{\kern0pt}i{\isacharminus}{\kern0pt}{\isadigit{1}}{\isacharparenright}{\kern0pt}{\isacharparenright}{\kern0pt}{\isacharparenright}{\kern0pt}{\isachardoublequoteclose}
\end{isabelle}
%
%
\begin{isabelle}
\isacommand{lemma}\isamarkupfalse%
\ Metric{\isacharunderscore}{\kern0pt}L{\isadigit{1}}{\isacharunderscore}{\kern0pt}norm{\isacharunderscore}{\kern0pt}list\ {\isacharcolon}\isanewline\ \ {\isachardoublequoteopen}Metric{\isacharunderscore}{\kern0pt}space\ space{\isacharunderscore}{\kern0pt}L{\isadigit{1}}{\isacharunderscore}{\kern0pt}norm{\isacharunderscore}{\kern0pt}list\ dist{\isacharunderscore}{\kern0pt}L{\isadigit{1}}{\isacharunderscore}{\kern0pt}norm{\isacharunderscore}{\kern0pt}list{\isachardoublequoteclose}
\isanewline
\isacommand{end}
\end{isabelle}
\begin{isabelle}
\isacommand{sublocale}\isamarkupfalse%
\ L{\isadigit{1}}{\isacharunderscore}{\kern0pt}norm{\isacharunderscore}{\kern0pt}list\ {\isasymsubseteq}\ MetL{\isadigit{1}}{\isacharcolon}{\kern0pt}\ Metric{\isacharunderscore}{\kern0pt}space\ {\isachardoublequoteopen}space{\isacharunderscore}{\kern0pt}L{\isadigit{1}}{\isacharunderscore}{\kern0pt}norm{\isacharunderscore}{\kern0pt}list{\isachardoublequoteclose}\ {\isachardoublequoteopen}dist{\isacharunderscore}{\kern0pt}L{\isadigit{1}}{\isacharunderscore}{\kern0pt}norm{\isacharunderscore}{\kern0pt}list{\isachardoublequoteclose}
\end{isabelle}
We formalize the metric space $(\Nat^{|\mathcal{X}|}, \|-\|_1)$ by the following interpretation (we set $(A,d) = (\Nat,|-|)$):
\begin{isabelle}
\isacommand{interpretation}\isamarkupfalse%
\ L{\isadigit{1}}{\isadigit{1}}nat{\isacharcolon}
\isanewline \ \ {\kern0pt}\ L{\isadigit{1}}{\isacharunderscore}{\kern0pt}norm{\isacharunderscore}{\kern0pt}list\ {\isachardoublequoteopen}{\isacharparenleft}UNIV{\isacharcolon}{\kern0pt}{\isacharcolon}{\kern0pt}nat\ set{\isacharparenright}{\isachardoublequoteclose}\ {\isachardoublequoteopen}{\isacharparenleft}{\kern0pt}{\isasymlambda}x\ y{\isachardot}{\kern0pt}\ real{\isacharunderscore}{\kern0pt}of{\isacharunderscore}{\kern0pt}int\ {\isasymbar}int\ x\ {\isacharminus}{\kern0pt}\ int\ y{\isasymbar}{\isacharparenright}{\kern0pt}{\isachardoublequoteclose}\ n
\end{isabelle}
To formalize the group privacy, we formalize Lemma \ref{lem:adj_k:group}.
\begin{isabelle}
\isacommand{interpretation}\isamarkupfalse%
\ L{\isadigit{1}}{\isadigit{1}}nat{\isadigit{2}}{\isacharcolon}\isanewline\ \ L{\isadigit{1}}{\isacharunderscore}{\kern0pt}norm{\isacharunderscore}{\kern0pt}list\ {\isachardoublequoteopen}{\isacharparenleft}UNIV{\isacharcolon}{\kern0pt}{\isacharcolon}{\kern0pt}nat\ set{\isacharparenright}{\isachardoublequoteclose}{\isachardoublequoteopen}{\isacharparenleft}{\kern0pt}{\isasymlambda}x\ y{\isachardot}{\kern0pt}\ real{\isacharunderscore}{\kern0pt}of{\isacharunderscore}{\kern0pt}int\ {\isasymbar}int\ x\ {\isacharminus}{\kern0pt}\ int\ y{\isasymbar}{\isacharparenright}{\kern0pt}{\isachardoublequoteclose}\ {\isachardoublequoteopen}Suc\ n{\isachardoublequoteclose}
\end{isabelle}
\begin{isabelle}
\isacommand{lemma}\isamarkupfalse%
\ L{\isadigit{1}}{\isacharunderscore}{\kern0pt}adj{\isacharunderscore}{\kern0pt}iterate{\isacharunderscore}{\kern0pt}Cons{\isadigit{1}}{\isacharcolon}{\kern0pt}\isanewline
\ \ \isakeyword{assumes}\ {\isachardoublequoteopen}xs\ {\isasymin}\ L{\isadigit{1}}{\isadigit{1}}nat{\isachardot}{\kern0pt}space{\isacharunderscore}{\kern0pt}L{\isadigit{1}}{\isacharunderscore}{\kern0pt}norm{\isacharunderscore}{\kern0pt}list{\isachardoublequoteclose}\isanewline
\ \ \ \ \isakeyword{and}\ {\isachardoublequoteopen}ys\ {\isasymin}\ L{\isadigit{1}}{\isadigit{1}}nat{\isachardot}{\kern0pt}space{\isacharunderscore}{\kern0pt}L{\isadigit{1}}{\isacharunderscore}{\kern0pt}norm{\isacharunderscore}{\kern0pt}list{\isachardoublequoteclose}\isanewline
\ \ \ \ \isakeyword{and}\ {\isachardoublequoteopen}{\isacharparenleft}{\kern0pt}xs{\isacharcomma}{\kern0pt}\ ys{\isacharparenright}{\kern0pt}\ {\isasymin}\ adj{\isacharunderscore}{\kern0pt}L{\isadigit{1}}{\isadigit{1}}nat\ {\isacharcircum}{\kern0pt}{\isacharcircum}{\kern0pt}\ k{\isachardoublequoteclose}\isanewline
\ \ \isakeyword{shows}\ {\isachardoublequoteopen}{\isacharparenleft}{\kern0pt}x{\isacharhash}{\kern0pt}xs{\isacharcomma}{\kern0pt}\ x{\isacharhash}{\kern0pt}ys{\isacharparenright}{\kern0pt}\ {\isasymin}\ adj{\isacharunderscore}{\kern0pt}L{\isadigit{1}}{\isadigit{1}}nat{\isadigit{2}}\ {\isacharcircum}{\kern0pt}{\isacharcircum}{\kern0pt}\ k{\isachardoublequoteclose}
\end{isabelle}

\begin{toappendix}
\section{Measurable Space of Lists}\label{sec:list_space}
We formalize the measurable space of finite lists and the measurability of list operators.

The standard library of Isabelle/HOL contains the formalization of the measurable spaces of streams and trees, where the stream space 
is defined using the countably infinite product space~(see also \cite{10.1145/3018610.3018628}), and the tree space is defined by 
constructing its $\sigma$-algebra over trees directly.
To our knowledge, the measurable space of \emph{finite} lists is not implemented yet.

In this work, we construct the measurable space $X^\ast$ of finite lists on a measurable space $X$.
The construction is similar to the quasi-Borel spaces of finite lists~\cite{hirata22,hirata_et_al:LIPIcs.ITP.2023.18}.
We give the measurable space $\coprod_{k \in \Nat} \prod_{i \in \{0,\ldots,k\}} X$ 
then we introduce the $\sigma$-algebra of $\mathrm{ListM}(X)$ induced by the bijection 
$\mathrm{ListM}(X) \cong \coprod_{k \in \Nat} \prod_{i \in \{0,\ldots,k\}} X$.
To prove the measurability of $\mathrm{Cons} \colon X \times \mathrm{ListM}(X) \to \mathrm{ListM}(X)$ ($(x,xs) \mapsto (x \# xs)$),
we need the countable distributivity, that is, the measurability of bijections of type $X \times \coprod_{k \in \Nat} Y_k \cong \coprod_{k \in \Nat}(X \times Y_k)$.
We omit the details of formalization.

In our Isabelle/HOL library, we provide the constant
\[
\isa{
\ listM\ {\isacharcolon}{\kern0pt}{\isacharcolon}{\kern0pt}\ {\isachardoublequoteopen}{\isacharprime}{\kern0pt}a\ measure\ {\isasymRightarrow}\ {\isacharprime}{\kern0pt}a\ list\ measure{\isachardoublequoteclose}
}
\]
that constructs a measurable space of finite lists.
The underlying set of \isa{listM\ M} is exactly \isa{lists\ {\isacharparenleft}{\kern0pt}space\ M{\isacharparenright}}:
%
\begin{isabelle}
\isacommand{lemma}\isamarkupfalse%
\ space{\isacharunderscore}{\kern0pt}listM{\isacharcolon}{\kern0pt}\isanewline
\ \ \isakeyword{shows}\ {\isachardoublequoteopen}space\ {\isacharparenleft}{\kern0pt}listM\ M{\isacharparenright}{\kern0pt}\ {\isacharequal}{\kern0pt}\ {\isacharparenleft}{\kern0pt}lists\ {\isacharparenleft}{\kern0pt}space\ M{\isacharparenright}{\kern0pt}{\isacharparenright}{\kern0pt}{\isachardoublequoteclose}
\end{isabelle}
We formalize the measurability of Cons in the following uncurried version, because in general, there is no function spaces of measurable spaces in general~\cite{10.1215/ijm/1255631584}.
\begin{isabelle}
\isacommand{lemma}\isamarkupfalse%
\ measurable{\isacharunderscore}{\kern0pt}Cons{\isacharbrackleft}{\kern0pt}measurable{\isacharbrackright}{\kern0pt}{\isacharcolon}{\kern0pt}\isanewline
\ \ \isakeyword{shows}\ {\isachardoublequoteopen}{\isacharparenleft}{\kern0pt}{\isasymlambda}\ {\isacharparenleft}{\kern0pt}x{\isacharcomma}{\kern0pt}xs{\isacharparenright}{\kern0pt}{\isachardot}{\kern0pt}\ x\ {\isacharhash}{\kern0pt}\ xs{\isacharparenright}{\kern0pt}\ {\isasymin}\ M\ {\isasymOtimes}\isactrlsub M\ {\isacharparenleft}{\kern0pt}listM\ M{\isacharparenright}{\kern0pt}\ {\isasymrightarrow}\isactrlsub M\ {\isacharparenleft}{\kern0pt}listM\ M{\isacharparenright}{\kern0pt}{\isachardoublequoteclose}
\end{isabelle}
For the same reason, measurability of other list operations also need to be uncurried. 
In particular, the measurability of \isa{rec{\isacharunderscore}{\kern0pt}list} is a bit complicated.
\begin{isabelle}
\isacommand{lemma}\isamarkupfalse%
\ measurable{\isacharunderscore}{\kern0pt}rec{\isacharunderscore}{\kern0pt}list{\isacharprime}{\kern0pt}{\isacharprime}{\kern0pt}{\isacharprime}{\kern0pt}{\isacharcolon}{\kern0pt}\isanewline
\isakeyword{assumes}\ {\isachardoublequoteopen}{\isacharparenleft}{\kern0pt}{\isasymlambda}{\isacharparenleft}{\kern0pt}x{\isacharcomma}{\kern0pt}y{\isacharcomma}{\kern0pt}xs{\isacharparenright}{\kern0pt}{\isachardot}{\kern0pt}\ F\ x\ y\ xs{\isacharparenright}{\kern0pt}\ {\isasymin}\ N\ {\isasymOtimes}\isactrlsub M\ M\ {\isasymOtimes}\isactrlsub M\ {\isacharparenleft}{\kern0pt}listM\ M{\isacharparenright}{\kern0pt}\ {\isasymrightarrow}\isactrlsub M\ N{\isachardoublequoteclose}\isanewline
\ \ \isakeyword{and}\ {\isachardoublequoteopen}T\ {\isasymin}\ space\ N{\isachardoublequoteclose}\isanewline
\isakeyword{shows}\ {\isachardoublequoteopen}rec{\isacharunderscore}{\kern0pt}list\ T\ {\isacharparenleft}{\kern0pt}{\isasymlambda}\ y\ xs\ x{\isachardot}{\kern0pt}\ F\ x\ y\ xs{\isacharparenright}{\kern0pt}\ {\isasymin}\ {\isacharparenleft}{\kern0pt}listM\ M{\isacharparenright}{\kern0pt}\ {\isasymrightarrow}\isactrlsub M\ N{\isachardoublequoteclose}
\end{isabelle}
Once we have the measurability of \isa{rec{\isacharunderscore}{\kern0pt}list},
the measurability of other list operators can be proved short. 
For example, we have the following measurability theorems.
\begin{isabelle}
\isacommand{lemma}\isamarkupfalse%
\ measurable{\isacharunderscore}{\kern0pt}append{\isacharbrackleft}{\kern0pt}measurable{\isacharbrackright}{\kern0pt}{\isacharcolon}{\kern0pt}\isanewline
\ \ \isakeyword{shows}\ {\isachardoublequoteopen}{\isacharparenleft}{\kern0pt}{\isasymlambda}\ {\isacharparenleft}{\kern0pt}xs{\isacharcomma}{\kern0pt}ys{\isacharparenright}{\kern0pt}{\isachardot}{\kern0pt}\ xs\ {\isacharat}{\kern0pt}\ ys{\isacharparenright}{\kern0pt}\ {\isasymin}\ {\isacharparenleft}{\kern0pt}listM\ M{\isacharparenright}{\kern0pt}\ {\isasymOtimes}\isactrlsub M\ {\isacharparenleft}{\kern0pt}listM\ M{\isacharparenright}{\kern0pt}\ {\isasymrightarrow}\isactrlsub M\ {\isacharparenleft}{\kern0pt}listM\ M{\isacharparenright}{\kern0pt}{\isachardoublequoteclose}
\end{isabelle}
\begin{isabelle}
\isacommand{lemma}\isamarkupfalse%
\ measurable{\isacharunderscore}{\kern0pt}map{\isadigit{2}}{\isacharbrackleft}{\kern0pt}measurable{\isacharbrackright}{\kern0pt}{\isacharcolon}{\kern0pt}\isanewline
\ \ \isakeyword{assumes}\ {\isacharbrackleft}{\kern0pt}measurable{\isacharbrackright}{\kern0pt}{\isacharcolon}{\kern0pt}\ {\isachardoublequoteopen}{\isacharparenleft}{\kern0pt}{\isasymlambda}{\isacharparenleft}{\kern0pt}x{\isacharcomma}{\kern0pt}y{\isacharparenright}{\kern0pt}{\isachardot}{\kern0pt}\ f\ x\ y{\isacharparenright}{\kern0pt}\ {\isasymin}\ M\ {\isasymOtimes}\isactrlsub M\ M{\isacharprime}{\kern0pt}\ {\isasymrightarrow}\isactrlsub M\ N\ {\isachardoublequoteclose}\isanewline
\ \ \isakeyword{shows}\ {\isachardoublequoteopen}{\isacharparenleft}{\kern0pt}{\isasymlambda}{\isacharparenleft}{\kern0pt}xs{\isacharcomma}{\kern0pt}ys{\isacharparenright}{\kern0pt}{\isachardot}{\kern0pt}\ map{\isadigit{2}}\ f\ xs\ ys{\isacharparenright}{\kern0pt}\ {\isasymin}\ {\isacharparenleft}{\kern0pt}listM\ M{\isacharparenright}{\kern0pt}\ {\isasymOtimes}\isactrlsub M\ {\isacharparenleft}{\kern0pt}listM\ M{\isacharprime}{\kern0pt}{\isacharparenright}{\kern0pt}\ {\isasymrightarrow}\isactrlsub M\ {\isacharparenleft}{\kern0pt}listM\ N{\isacharparenright}{\kern0pt}{\isachardoublequoteclose}
\end{isabelle}
We also provide the measurability of Isabelle/HOL's list operations
\isa{case\_list}, \isa{map}, \isa{foldr}, \isa{foldl}, \isa{fold}, \isa{rev}, \isa{length}, \isa{drop}, \isa{take} and \isa{zip}, and the total function version of \isa{nth} (\isa{nth\_total}).

\begin{isabelle}
\isacommand{primrec}\isamarkupfalse%
\ nth{\isacharunderscore}{\kern0pt}total\ \ {\isacharcolon}{\kern0pt}{\isacharcolon}{\kern0pt}\ {\isachardoublequoteopen}{\isacharprime}{\kern0pt}a\ {\isasymRightarrow}\ {\isacharprime}{\kern0pt}a\ list\ {\isasymRightarrow}\ nat\ {\isasymRightarrow}\ {\isacharprime}{\kern0pt}a{\isachardoublequoteclose}\ \isakeyword{where}\isanewline
\ \ {\isachardoublequoteopen}nth{\isacharunderscore}{\kern0pt}total\ d\ {\isacharbrackleft}{\kern0pt}{\isacharbrackright}{\kern0pt}\ n\ {\isacharequal}{\kern0pt}\ d{\isachardoublequoteclose}\ {\isacharbar}{\kern0pt}\isanewline
\ \ {\isachardoublequoteopen}nth{\isacharunderscore}{\kern0pt}total\ d\ {\isacharparenleft}{\kern0pt}x\ {\isacharhash}{\kern0pt}\ xs{\isacharparenright}{\kern0pt}\ n\ {\isacharequal}{\kern0pt}\ {\isacharparenleft}{\kern0pt}case\ n\ of\ {\isadigit{0}}\ {\isasymRightarrow}\ x\ {\isacharbar}{\kern0pt}\ Suc\ k\ {\isasymRightarrow}\ nth{\isacharunderscore}{\kern0pt}total\ d\ xs\ k{\isacharparenright}{\kern0pt}{\isachardoublequoteclose}
\end{isabelle}
\begin{isabelle}
\isacommand{lemma}\isamarkupfalse%
\ cong{\isacharunderscore}{\kern0pt}nth{\isacharunderscore}{\kern0pt}total{\isacharunderscore}{\kern0pt}nth{\isacharcolon}{\kern0pt}\isanewline
\ \ \isakeyword{shows}\ {\isachardoublequoteopen}{\isacharparenleft}{\kern0pt}{\isacharparenleft}{\kern0pt}n\ {\isacharcolon}{\kern0pt}{\isacharcolon}{\kern0pt}\ nat{\isacharparenright}{\kern0pt}\ {\isacharless}{\kern0pt}\ length\ xs\ {\isasymand}\ {\isadigit{0}}\ {\isacharless}{\kern0pt}\ length\ xs{\isacharparenright}{\kern0pt}\ {\isasymLongrightarrow}\ nth{\isacharunderscore}{\kern0pt}total\ d\ xs\ n\ {\isacharequal}{\kern0pt}\ nth\ xs\ n{\isachardoublequoteclose}
\end{isabelle}
\begin{isabelle}
\ cong{\isacharunderscore}{\kern0pt}nth{\isacharunderscore}{\kern0pt}total{\isacharunderscore}{\kern0pt}default{\isacharcolon}{\kern0pt}\isanewline
\ \ \isakeyword{shows}\ {\isachardoublequoteopen}{\isasymnot}{\isacharparenleft}{\kern0pt}{\isacharparenleft}{\kern0pt}n\ {\isacharcolon}{\kern0pt}{\isacharcolon}{\kern0pt}\ nat{\isacharparenright}{\kern0pt}\ {\isacharless}{\kern0pt}\ length\ xs\ {\isasymand}\ {\isadigit{0}}\ {\isacharless}{\kern0pt}\ length\ xs{\isacharparenright}{\kern0pt}\ {\isasymLongrightarrow}\ nth{\isacharunderscore}{\kern0pt}total\ d\ xs\ n\ {\isacharequal}{\kern0pt}\ d{\isachardoublequoteclose}
\end{isabelle}
\begin{isabelle}
\isacommand{lemma}\isamarkupfalse%
\ measurable{\isacharunderscore}{\kern0pt}nth{\isacharunderscore}{\kern0pt}total{\isacharbrackleft}{\kern0pt}measurable{\isacharbrackright}{\kern0pt}{\isacharcolon}{\kern0pt}\isanewline
\ \ \isakeyword{assumes}\ {\isachardoublequoteopen}d\ {\isasymin}\ space\ M{\isachardoublequoteclose}\isanewline
\ \ \isakeyword{shows}\ {\isachardoublequoteopen}{\isacharparenleft}{\kern0pt}{\isasymlambda}\ {\isacharparenleft}{\kern0pt}n{\isacharcomma}{\kern0pt}xs{\isacharparenright}{\kern0pt}{\isachardot}{\kern0pt}\ nth{\isacharunderscore}{\kern0pt}total\ d\ xs\ n{\isacharparenright}{\kern0pt}\ {\isasymin}\ {\isacharparenleft}{\kern0pt}count{\isacharunderscore}{\kern0pt}space\ UNIV{\isacharparenright}{\kern0pt}\ {\isasymOtimes}\isactrlsub M\ listM\ M\ {\isasymrightarrow}\isactrlsub M\ M{\isachardoublequoteclose}
\end{isabelle}
\end{toappendix}

\section{Related Work}
Barthe et al.~proposed the relational program logic apRHL reasoning about differential privacy~\cite{Barthe:2012:PRR:2103656.2103670} with its Coq implementation.
The work attracted the interest of many researchers, and many variants of the logic have been studied~\cite{2016arXiv160105047B,Barthe:2015:HAR:2676726.2677000,Barthe:2016:APC:2976749.2978391}.
These underlying semantic models are based on discrete models of probabilistic programs.
After that, Sato et al.~introduced measure-theoretic models for apRHL~\cite{Sato2016MFPS,DBLP:conf/lics/SatoBGHK19}, and Sato and Katsumata extended the model to support quasi-Borel spaces~\cite{Sato_Katsumata_2023}.

For other formulations of differential privacy,
Mironov and Bun et al. introduced R\'{e}nyi differential privacy(RDP)~\cite{Mironov17} and zero-concentrated differential privacy(zCDP)~\cite{BunS16} respectively.
They give more rigorous evaluations of the differential privacy of programs with noise sampled from Gaussian distributions.
Kariouz et al.~introduced the hypothesis testing interpretation of $(\varepsilon,\delta)$-differential privacy~\cite{KairouzOV15} for tightening the composability of DP.
After that, Balle et al. applied that to giving a tighter conversion from RDP to DP~\cite{DBLP:journals/corr/abs-1905-09982,ALCKS2021},
and Dong et al. give another formulation of differential privacy based on the trade-off curve between Type I and Type II errors in the hypothesis testing for two adjacent datasets~\cite{10.1111/rssb.12454}.

There are several related studies for formal verification of probabilistic programs in proof assistants.
Eberl et al.~constructed an executable first-order functional probabilistic programming language in Isabelle/HOL~\cite{10.1007/978-3-662-46669-8_4}. 
Lochbihler formalized protocols with access to probabilistic oracles in Isabelle/HOL for reasoning about cryptographic protocols~\cite{10.1007/978-3-662-49498-1_20},
and Basin et al.~implemented a framework CryptHOL for rigorous game-based proofs in cryptography in Isabelle/HOL~\cite{crypthol}.
Hirata et al.~developed an extensive Isabelle/HOL library of quasi-Borel spaces~\cite{hirata22,hirata_et_al:LIPIcs.ITP.2023.18}, which is a model for denotational semantics of higher-order probabilistic programming languages with continuous distributions~\cite{HeunenKammarStatonYang2017LICS,Scibior:2017:DVH:3177123.3158148}).
Bagnall and Stewart embedded {\sc MLCert} in Coq for formal verification of machine learning algorithms~\cite{Bagnall_Stewart_2019}.
Affeldt et al.~formalized probabilistic programs with continuous random samplings and conditional distributions~\cite{10.1145/3573105.3575691} by formalizing the semantic model based on s-finite kernels~\cite{10.1007/978-3-662-54434-1_32}.
Tristan et al.~developed an automated measurability prover for probabilistic programs in the continuous setting via reparametrization of the uniform distribution in Lean~\cite{tristan2020verification}.

Although Isabelle/HOL seems the most advanced for the semantic model of probabilistic programs,
the libraries of Coq and Lean are rich enough to formalize DP in the continuous setting.
The Lean {\tt Mathlib} library contains the formalization of basic measure theory (see \cite{LeanDoc}).
Affeldt et al. have made significant progress of the formalization of basic measure theory in Coq.
Recently, they have finished the implementation of the Radon-Nikod\'{y}m theorem and the fundamental theorem of calculus for Lebesgue integration in {\tt MathComp-Analysis}~\cite{ishiguro2023ppl,affeldt2024jfla}.

\section{Conclusion and Future Work}
In this paper, we have proposed an Isabelle/HOL library for formalizing of differential privacy.
Our work enables us to formalize differential privacy in Isabelle/HOL in the continuous setting.
To our knowledge, this is the first formalization of differential privacy supporting continuous probability distributions.
We plan to extend our library to formalize more (advanced) results on differential privacy.
We show several possible future works.
\begin{itemize}
\item 
We plan to formalize further noise-adding mechanisms for differential privacy. 
For example, the Gaussian mechanism adds the noise sampled from the Gaussian distribution~(see \cite[Section A]{DworkRothTCS-042}).
To formalize it in Isabelle/HOL, we expect that the AFP entry \isa{The Error Function}~\cite{Error_Function-AFP} will be useful.

\item 
Several variants of differential privacy can be formulated by replacing the divergence $\Delta^\varepsilon$ with other ones.
We plan to formalize such variants. In particular, we aim to formalize RDP based on the R\'{e}nyi divergence.
For the technical basis of this, we would need to formalize $f$-divergences and their continuity~\cite{Csiszar67, 1705001_2006}.

\item 
We expect to extend our library to support higher-order functional programs by combining the work \cite{hirata22,hirata_et_al:LIPIcs.ITP.2023.18} of Hirata~{et~al.} for semantic foundations of higher-order probabilistic programs.

\item
Thanks to the AFP entries \isa{A Formal Model of IEEE Floating Point Arithmetic}~\cite{IEEE_Floating_Point-AFP} and \isa{Executable Randomized Algorithms}~\cite{Executable_Randomized_Algorithms-AFP}, 
it is possible to formalize the differential privacy of floating-point mechanisms and to generate executable programs for differential privacy in Isabelle/HOL.

\item
In addition, it is possible to rewrite the proofs shorter using the AFP entries.
For example, it might be possible to apply \isa{Laplace Transform}~\cite{Laplace_Transform-AFP} to shorten the formalization of Laplace mechanism, and 
to apply the very recent entry \isa{Coproduct Measure} \cite{Coproduct_Measure-AFP} to rewrite the formalization of measurable spaces of finite lists.
\end{itemize}

\bibliographystyle{ACM-Reference-Format}
\renewcommand{\appendixbibliographystyle}{ACM-Reference-Format}
\bibliography{reference}
\end{document}